\numberwithin{equation}{section}
\newtheorem{theorem}{Theorem}
\numberwithin{theorem}{section}
\newtheorem{lemma}{Lemma}
\numberwithin{lemma}{section}
\newtheorem{prop}{Proposition}
\numberwithin{prop}{section}
\newtheorem{corol}{Corollary}
\newtheorem{problem}{Problem}
\numberwithin{corol}{section}
\newtheorem{remark}{Remark}
\numberwithin{remark}{section}
\def\P{\mathbb{P}}
\def\R{\mathbb{R}}
\def\C{\mathbb{C}}
\def\E{\mathbb{E}}
\def\ind{\mathbbm{1}}
\title[Analysis of the SQF algorithm]{Stationary analysis of the Shortest Queue First service policy}
\author{Fabrice Guillemin}
\address{Orange Labs, CORE/TPN, 2 Avenue Pierre Marzin, 22300 Lannion}
\email{fabrice.guillemin@orange.com}
\author{Alain Simonian}
\address{Orange Labs, CORE/TPN, 38-40 Rue du G\'en\'eral Leclerc, 92794 Issy-les-Moulineaux}
\email{alain.simonian@orange.com}
\begin{document}

\begin{abstract}
We analyze the so-called Shortest Queue First (SQF) queueing discipline whereby a unique server addresses queues in parallel by serving at any time that queue with the smallest workload. Considering a stationary system composed of two parallel queues and assuming Poisson arrivals and general service time distributions, we first establish the functional equations satisfied by the Laplace transforms of the workloads in each queue. We further specialize these equations to the so-called ``symmetric case'', with same arrival rates and identical exponential service time distributions at each queue; we then obtain a functional equation 
$$
M(z) = q(z) \cdot M \circ h(z) + L(z)
$$
for unknown function $M$, where given functions $q$, $L$ and $h$ are related to one  branch of a cubic polynomial equation. We study the analyticity domain of function $M$ and express it by a series expansion involving all iterates of function $h$. This allows us to determine empty queue probabilities along with the tail of the workload distribution in each queue. This tail appears to be  identical to that of the Head-of-Line preemptive priority system, which is the key feature desired for the SQF discipline.
\end{abstract}

\date{}

\maketitle

%%%%%%%%%%%%%%%%%%%%%%%%%%%%%%%%%%%%%%%%%%%%%%%%%%%%%%%%%%%%%%%%%%%%%%%%%%%%%%

\section{Introduction}

%%%%%%%%%%%%%%%%%%%%%%%%%%%%%%%%%%%%%%%%%%%%%%%%%%%%%%%%%%%%%%%%%%%%%%%%%%%%%%

Throughout this paper, we consider a unique server addressing two parallel queues numbered $\sharp$1 and $\sharp$2, respectively. Incoming jobs enter either queue and require  random service times; the server then processes jobs according to the so-called {Shortest Queue First} (SQF) policy. Specifically, let $U_1$ (resp. $U_2$) denote the workload in queue $\sharp$1 (resp. queue $\sharp$2) at a  given  time, including the remaining amount of work of the job possibly in service; the server then proceeds as follows:
\begin{itemize} 
\item Queue $\sharp$1 (resp. queue $\sharp$2) is served if $U_1 \neq 0$, $U_2 \neq 0$ and $U_1 \leq U_2$ (resp. if $U_1 \neq 0, U_2 \neq 0$ and $U_2 < U_1$);
\item If only one of the queues is empty, the non empty queue is served;
\item If both queues are empty, the server remains idle until the next job arrival.
\end{itemize}
In contrast to fixed priority disciplines where the server favors queues in some predefined order remaining unchanged in time (e.g., classical preemptive or non-preemptive head-of-line priority schemes), the SQF policy enables the server to dynamically serve  queues according to their current state.

The performance analysis of such a queueing discipline is motivated by the so-called SQF packet scheduling policy recently proposed to improve the quality of Internet access on high speed communication links. As discussed in \cite{Luca,Ost08}, SQF policy is designed to serve the shortest queue, i.e., the queue with the least number of waiting packets; in case of buffer overflow, packets are dropped from the longest queue. Thanks to this simple policy, the scheduler consequently prioritizes constant bit rate flows associated with delay-sensitive applications such as voice and audio/video streaming with intrinsic rate constraints; priority is thus implicitly given to smooth flows over data  traffic  associated with bulk  transfers that sense network bandwidth by filling buffers and sending packets in bursts.

In this paper, we consider the fluid version of the SQF discipline. Instead of packets (i.e., individual jobs), we deal with the workload (i.e., the amount of fluid in each queue). Since the fluid SQF policy  considers the shortest queue in volume, that is, in terms of workload, its performance is quantitatively described by the variations of variables $U_1$ and $U_2$. To simplify the analysis, we here suppose that the buffer capacity for both queues $\sharp$1 and $\sharp$2 is infinite. Moreover, we assume that incoming jobs enter either queue according to a Poisson process; in view of the above application context, one can argue that such Poisson arrivals can model  traffic where sources have peak rates significantly higher than that of the output link; such processes can hardly represent, however, the traffic variations of locally constant bit rate flows. This Poisson assumption, although limited in this respect, is nevertheless envisaged here in view of its mathematical tractability and as a first step towards the consideration of more complicated arrival patterns. 

The above framework enables us to define the pair $(U_1,U_2)$ representing the workloads in the stationary regime in each queue as a continuous-state Markov process in $\mathbb{R}^+ \times \mathbb{R}^+$. In the following, we determine the probability distribution of the couple $(U_1,U_2)$ by studying its Laplace transform. The problem can then essentially be formulated as follows.

\begin{problem}
\label{prob1}
Given the domain $\mathbf{\Omega} = \{(s_1,s_2) \in \mathbb{C}^2 \mid \Re(s_1) > 0, \Re(s_2) > 0\}$ and analytic functions $K_1$, $K_2$, $K$, $J_1$, and $J_2$ in $\mathbf{\Omega}$, determine two bivariate Laplace transforms $F_1$, $F_2$ and two univariate Laplace transforms $G_1$, $G_2$, analytic in $\mathbf{\Omega}$ and such that equations
$$
\left\{
\begin{array}{ll}
K_1(s_1,s_2) F_1(s_1,s_2) + K_2(s_1,s_2) G_2(s_2) = J_2(s_2) + H(s_1,s_2), 
\\
\\
K_1(s_1,s_2) G_1(s_1) + K_2(s_1,s_2) F_2(s_1,s_2) = J_1(s_1) - H(s_1,s_2),
\end{array} \right.
%\label{Compequ}
$$
for some analytic function $H$, together hold for in $ \mathbf{\Omega}$.
\end{problem}
\noindent
Note that each condition $K_1(s_1,s_2) = 0$ or $K_2(s_1,s_2) = 0$ with $(s_1,s_2) \in \mathbf{\Omega}$ brings the latter equations respectively to
$$
\left\{
\begin{array}{ll}
K_2(s_1,s_2)G_2(s_2) - H(s_1,s_2) = J_2(s_2) \; \; \; \mbox{and} \; \; \; K_1(s_1,s_2) = 0,
\\
\\
K_1(s_1,s_2)G_1(s_1) + H(s_1,s_2) = J_1(s_1) \; \; \; \mbox{and}  \; \; \; K_2(s_1,s_2) = 0.
\end{array} \right.
%\label{Compequbis}
$$

To the best knowledge of the authors, the mathematical analysis of the SQF policy has not been addressed in the queueing literature. Some comparable queueing disciplines have nevertheless been studied:
\begin{itemize}
\item[-] The \textit{Longest Queue First} (LQF) symmetric policy is considered in \cite{Coh87}, where the author studies the stationary distribution of the number of waiting jobs $N_1$, $N_2$ in each queue; reducing the analysis to a boundary value problem on the unit circle, an integral formula  is provided for the generating function of the pair $(N_1,N_2)$;
\item[-] The \textit{Join the Shortest 2-server Queue} (JSQ), where an arriving customer joins the shortest queue if the number of waiting jobs in queues are unequal, is analyzed in \cite{Coh98}. The bivariate generating function for the number of waiting jobs is then determined as a meromorphic function in the whole complex plane, whose associated poles and residues are calculated recursively. 
\end{itemize}

While the above quoted studies address the stationary distribution of the number of jobs in each queue, we here consider the real-valued process $(U_1,U_2)$ of workload components whose stationary analysis requires the definition of its infinitesimal generator on the relevant functional space. Besides, the Laplace transform of the distribution of $(U_1,U_2)$ proves to be meromorphic not on the entire complex plane, but on the plane cut along some algebraic singularities (while the solution for JSQ exhibits polar singularities only); as a both quantitative and qualitative consequence, the decay rate of the stationary distribution at infinity for SQF may change according to the  system load from that defined by the smallest polar singularity to that defined by the smallest algebraic singularity. 

The organization of the paper is as follows. In Section \ref{MA}, a Markovian analysis provides the basic equations for the stationary distribution of the coupled queues; the functional equations verified by the relevant Laplace transforms are further derived in Section \ref{LTD}. In Section \ref{EXPI}, we specialize the discussion to the so-called symmetric exponential case where arrival rates are identical, and where service distribution are both exponential with identical mean; the functional equations are then specified and shown to involve a key cubic equation. Specifically, \textbf{Problem \ref{prob1}} for the symmetric case is shown to reduce to the following.

\begin{problem}
\label{prob2}
Solve the functional equation
$$
M(z) = q(z) \cdot M \circ h(z) + L(z),
$$
for function $M$, where given functions $q$, $L$ and $h$ are related to one branch of a key cubic polynomial equation $R(w,z) = 0$.
\end{problem}
\noindent
For real $z > 0$, the solution ${M}(z)$ is written in terms of a series involving all  iterates $h^{(k)}(z) = h \circ ... \circ h (z)$ for $k >0$. The analytic extension of solution $z \mapsto {M}(z)$ to some domain of the complex plane is further studied in Section \ref{SQFqueue}; this enables us to derive the empty queue probability along with the tail behavior of the workload distribution in each queue for the symmetric case. The latter is then compared to that of the associated preemptive Head of Line (HoL) policy. Concluding remarks are finally presented in Section~\ref{CL}. 

The proofs for basic functional equations as well as some technical results are deferred to the Appendix for better readability.

%%%%%%%%%%%%%%%%%%%%%%%%%%%%%%%%%%%%%%%%%%%%%%%%%%%%%%%%%%%%%%%%%%%%%%%%%%%%%%%%

\section{Markovian analysis}
\label{MA}

%%%%%%%%%%%%%%%%%%%%%%%%%%%%%%%%%%%%%%%%%%%%%%%%%%%%%%%%%%%%%%%%%%%%%%%%%%%%%%%%

As described in the Introduction, we assume that incoming jobs consecutively enter  queue $\sharp 1$ (resp. queue $\sharp 2$) according to a Poisson process with mean arrival rate $\lambda_1$ (resp. $\lambda_2$). Their respective service times are independent and identically distributed (i.i.d.) with probability distribution $\mathrm{d}B_1(x_1)$, $x_1 > 0$ (resp. $\mathrm{d}B_2(x_2)$, $x_2 > 0$) and mean  $1/\mu_1$ (resp. mean $1/\mu_2$). 

Let $\varrho_1 = \lambda_1/\mu_1$ (resp. $\varrho_2 = \lambda_2/\mu_2$) denote the mean load of queue $\sharp 1$ (resp. queue $\sharp 2$) and $\varrho = \varrho_1 + \varrho_2$ denote the total load of the system. Since the system is work conserving, its stability condition is $\varrho < 1$ and we assume it to hold in the rest of this paper. In this section, we first specify the evolution equations for the system and further derive its infinitesimal generator.

%%%%%%%%%%%%%%%%%%%%%%%%%%%%%%%%%%%%%%%%%%%%%%%%%%%%%%%%%%%

\subsection{Evolution equations}
\label{PE}

%%%%%%%%%%%%%%%%%%%%%%%%%%%%%%%%%%%%%%%%%%%%%%%%%%%%%%%%%%%

First consider the total workload $U = U_1 + U_2$ of the union of queues $\sharp 1$ and $\sharp 2$. For any work-conserving service discipline (such as SQF), the distribution of $U$ is independent of that discipline and equals that of the global single $M/G/1$ queue. The aggregate arrival process is Poisson with rate $\lambda = \lambda_1 + \lambda_2$ and the i.i.d. service times have the averaged distribution 
\begin{equation}
\mathrm{d}B(x) = \frac{\lambda_1}{\lambda} \mathrm{d}B_1(x) + \frac{\lambda_2}{\lambda} \mathrm{d}B_2(x), \; \; x > 0, 
\label{dB}
\end{equation}
with mean $\varrho/\lambda$. The stationary probability for the server to be in idle state, in particular, equals
\begin{equation}
\mathbb{P}(U = 0) = 1-\varrho.
\label{P0}
\end{equation}
\indent
Let $A_1(t)$ (resp. $A_2(t)$) be the number of job arrivals within time interval $[0,t[$ at queue $\sharp$1 (resp. queue $\sharp$2); if $\mathcal{T}_1^{(n)}$ (resp. $\mathcal{T}_2^{(n)}$) is the service time of the $n$-th job arriving at queue $\sharp$1 (resp. $\sharp$2), the total work  brought within $[0,t[$ into queue $\sharp$1 (resp. $\sharp 2$) equals $W_1(t) = \Sigma_{1 \leq n \leq A_1(t)} \; \mathcal{T}_1^{(n)}$ (resp. $W_2(t) = \Sigma_{1 \leq n \leq A_2(t)} \; \mathcal{T}_2^{(n)})$. Denoting by $U_1(t)$ (resp. $U_2(t)$) the workload in queue $\sharp 1$ (resp. $\sharp 2$) at time $t$, define indicator functions $I_1(t)$ and $I_2(t)$ by
\begin{equation}
\label{indicators}
\left\{
\begin{array}{ll}
I_1(t) = \mathbbm{1}_{ \{0 < U_1(t) \leq U_2(t)\} } + \mathbbm{1}_{ \{0 < U_1(t), U_2(t)=0\} },
\\ \\
I_2(t) = \mathbbm{1}_{\{0 < U_2(t) < U_1(t)\}} + \mathbbm{1}_{\{0 < U_2(t), U_1(t)=0\}},
\end{array} \right.
\end{equation}
respectively. 
With the above notation, the SQF policy governs workloads $U_1(t)$ and $U_2(t)$ according to the evolution equations
\begin{equation}\left\{
\begin{array}{ll}
\mathrm{d}U_1(t) = \mathrm{d}W_1(t) - I_1(t) \; \mathrm{d}t, 
\\ \\
\mathrm{d}U_2(t) = \mathrm{d}W_2(t) - I_2(t) \; \mathrm{d}t,
\label{path}
\end{array} \right.
\end{equation}                  
for $t > 0$ and some initial conditions $U_1(0) \geq 0$, $U_2(0) \geq 0$. This defines the pair $\mathbf{U}(t) = (U_1(t),U_2(t))$, $t \geq 0$, as a Markov process with state space $\mathcal{U} = \mathbb{R}^+ \times \mathbb{R}^+$ (see Figure~\ref{Fig1} for sample paths of process $\mathbf{U}(t) = (U_1(t),U_2(t))$). 

\begin{figure}[hbtp]
\scalebox{.8}{\includegraphics[width=15cm, trim=3cm 17cm 0 2cm,clip]{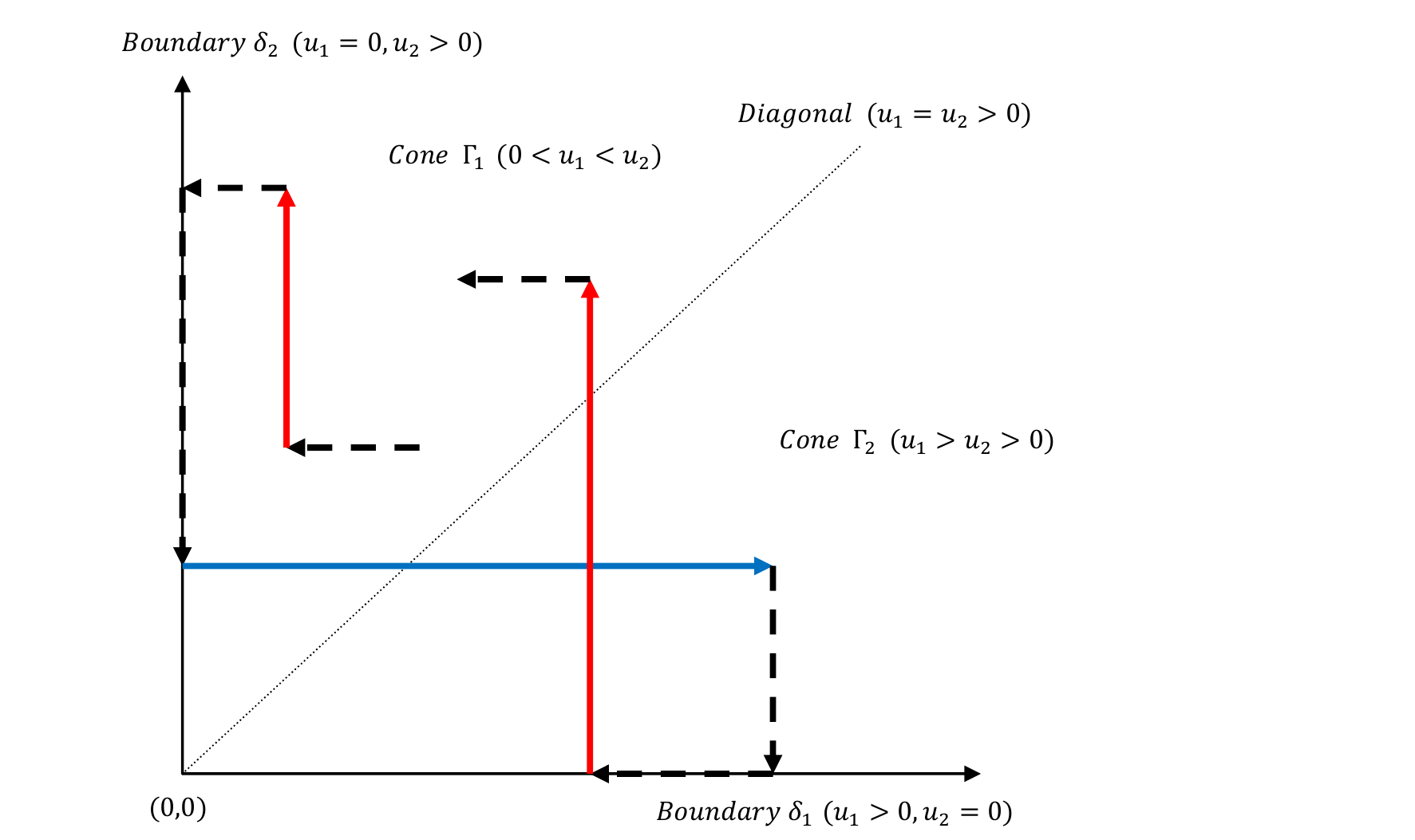}}
\caption{{Sample path of process $\mathbf{U}(t) = (U_1(t),U_2(t))$ (dashed lines) with job arrivals at queue $\sharp 1$ (solid blue line) or queue $\sharp 2$ (solid red line).}
\label{Fig1}}
\end{figure}

As a first result, integrating each equation~(\ref{path}) over interval $[0,t]$, dividing each side by $t$ and letting $t \uparrow +\infty$ implies $\lim_{t \uparrow +\infty} U_1(t)/ t = 0$ and 
$$
\lim_{t \uparrow +\infty} \frac{1}{t}\int_0^t I_1(s) \mathrm{d}s = \mathbb{P}(I_1 = 1), \; \; \lim_{t \uparrow +\infty} \frac{W_1(t)}{t} = \varrho_1
$$
almost surely (along with similar limits for integrals related to $U_2$, $I_2$ and $W_2$), and equating these limits readily provides identites
\begin{equation}
\mathbb{P}(I_1 = 1) = \varrho_1 \quad \mbox{and} \quad  \mathbb{P}(I_2 = 1) = \varrho_2
\label{Y1Y2}
\end{equation}
for the stationary probability that the server treats queue $\sharp$1 and $\sharp$2, respectively; equivalently, the latter identities read 
\begin{align}
\mathbb{P}(0 \leq U_1 \leq U_2) - \P(U_1=0) + \P(U_2=0) & = 1-\varrho_2,
\nonumber \\
\mathbb{P}(0 \leq U_2 \leq U_1) - \P(U_2=0) + \P(U_1=0) & = 1-\varrho_1.
\nonumber
\end{align}
In the symmetric case when arrival rates are equal and service times have identical distribution, i.e., $\lambda_1 = \lambda_2 = \lambda/2$ and $\mu_1 = \mu_2$, the above relations give 
$$\mathbb{P}(0\leq U_1\leq U_2) = \mathbb{P}(0\leq U_2\leq U_1) = 1 - \varrho/2.
$$

\begin{remark}
The discrepancy in inequalities $0 < U_1 \leq U_2$ and $0 < U_2 < U_1$ defining the service policy (when both queues are non empty) does not favor queue $\sharp 1$ with respect to queue $\sharp 2$, since event $U_1 = U_2 \neq 0$ has probability 0; in fact, assuming for instance $0 < U_2(t) < U_1(t)$ at some time $t$, we have 
$$
U_1(t+0) = U_2(t+0)
$$
if a job arrival occurs with service time of amount exactly $U_1 - U_2$, which has probability 0 for any service time distribution. Hence, the distribution of process $\mathbf{U}$ does not give a positive probability to the diagonal $\{(u_1,u_1) \in \mathcal{U} \mid u_1 > 0\}$ in state space $\mathcal{U} = \R^+\times \R^+$.  
\end{remark}

%%%%%%%%%%%%%%%%%%%%%%%%%%%%%%%%%%%%%%%%

\subsection{Infinitesimal generator}
\label{IG}

%%%%%%%%%%%%%%%%%%%%%%%%%%%%%%%%%%%%%%%%

We now address the determination of the stationary distribution function  $\Phi(u_1,u_2)=\P(U_1\leq u_1,U_2 \leq u_2)$, $u_1 \geq 0, u_2 \geq 0$, of the bivariate workload process $\mathbf{U}$. In order to define the class of stationary distribution   $\Phi$, we further assume that
\begin{itemize}
\item [\textbf{A.1}] Distribution $ \Phi$ has a regular density $\varphi_1(u_1,u_2)$ (resp. $\varphi_2(u_1,u_2)$) at any point $(u_1,u_2)$ such that $0 < u_1 < u_2$ (resp. $0 < u_2 < u_1$);
\item [\textbf{A.2}] Distribution $\Phi$ has a regular density $\psi_1(u_1)$ (resp. $\psi_2(u_2)$) at any point $u_1 > 0$ (resp. $u_2 > 0$) on the boundary $\{(u_1,u_2) \mid u_1> 0, u_2=0\}$ (resp. on the boundary $\{(u_1,u_2) \mid u_1=0, u_2>0\}$).
\end{itemize}
(A real-valued function is here said to be regular if it is continuous and bounded over its definition domain.) In the rest of this paper, assumptions \textbf{A.1}-\textbf{A.2} for the existence of regular densities will be confirmed by exhibiting their Laplace transforms; the uniqueness of the stationary distribution then \textit{a posteriori} justifies such assumptions. An \textit{a priori} justification for the existence of densities would otherwise imply the use of Malliavin Calculus \cite{Hir92, Rev10} on the Poisson space.

Using \eqref{P0}, we have $\mathbb{P}(U_1 = 0, U_2 = 0) = \mathbb{P}(U_1 + U_2=0) = 1 - \varrho$ in the stationary regime; following assumptions \textbf{A.1}-\textbf{A.2} above, we can then write
\begin{multline}
\label{dPhi}
\mathrm{d}\Phi(u_1,u_2) =  \psi_1(u_1)\ind_{\{u_1 > 0\}}\mathrm{d}u_1 \otimes \delta_0(u_2) + \psi_2(u_2)\ind_{\{u_2 > 0\}}\mathrm{d}u_2 \otimes \delta_0(u_1) \\ + (\varphi_1(u_1,u_2)\ind_{\{0 < u_1 < u_2\}} + \varphi_2(u_1,u_2)\ind_{\{0 < u_2 < u_1\}})\mathrm{d}u_1\mathrm{d}u_2  +  (1-\varrho)\delta_{0,0}(u_1,u_2) 
\end{multline}
for all $(u_1,u_2) \in \mathcal{U}$, where $\delta_{0,0}$ (resp. $\delta_0$) is the Dirac distribution at point $(0,0)$ (resp. at point $0$).
\\
\indent
Let us now characterize the stationary distribution of process $\mathbf{U} = (U_1,U_2)$ by means of its infinitesimal generator $\mathcal{A}$ defined by 
$$
\forall \; \mathbf{u} \in \mathcal{U}, \; \; \; \mathcal{A}\theta(\mathbf{u}) = \lim_{h \rightarrow 0} \frac{1}{h} \Big [\mathbb{E}\theta(\mathbf{U}_{t+h} \mid \mathbf{U}_t = \mathbf{u}) - \theta(\mathbf{u}) \Big ]
$$
where the limit is uniform with respect to $\mathbf{u} \in \mathcal{U}$ (see \cite[p.~175]{Wen81} or \cite[p.~8, p.~377]{Eth05}); the symbol $\theta$ denotes any function for which the latter limit exists. In the following, we denote by  $\mathcal{C}^2_b(\mathcal{U})$ the set of functions $\theta: \mathcal{U} \rightarrow \mathbb{C}$ everywhere bounded, twice differentiable with bounded first and second derivatives in $\mathcal{U}$. Further, introduce positive cones
\begin{equation}
\Gamma_1 = \{(u_1, u_2) \in \mathcal{U} \; \mid \; 0 < u_1 < u_2\}, \; \; \; \Gamma_2 = \{(u_1, u_2) \in \mathcal{U} \; \mid \; 0 < u_2 < u_1\},
\label{cones12}
\end{equation}
along with boundaries (see Figure~\ref{Fig1}) 
\begin{equation}
\delta_1 = \{(u_1,0) \in \mathcal{U} \; \mid \; u_1 > 0\}, \; \; \; \delta_2 = \{(0,u_2) \in \mathcal{U} \; \mid \; u_2 > 0\}.
\label{axes12}
\end{equation}
We can then state the following.

\begin{prop} 
With the notation (\ref{cones12})-(\ref{axes12}), the infinitesimal generator $\mathcal{A}$ of process $\mathbf{U} = (U_1,U_2)$ is given by
\begin{align}
\mathcal{A}\theta(\mathbf{u}) = & - \frac{\partial \theta}{\partial u_1}(\mathbf{u})\ind_{\{\mathbf{u} \in  \Gamma_1 \cup \delta_1\}} -  \frac{\partial \theta}{\partial u_2 }(\mathbf{u})\ind_{\{\mathbf{u} \in \Gamma_2 \cup \delta_2\}} 
% - \frac{1}{2} \left [ \frac{\partial \theta}{\partial u_1}(\mathbf{u}) + \frac{\partial \theta}{\partial u_2 }(\mathbf{u}) \right ] \ind_{\mathbf{u} \in \delta}
\nonumber \\ 
& + \lambda_1 \mathbb{E}[\theta(\mathbf{u}+\mathcal{T}_1\mathbf{e}_1)-\theta(\mathbf{u})]+\lambda_2 \mathbb{E}[\theta(\mathbf{u}+\mathcal{T}_2\mathbf{e}_2)-\theta(\mathbf{u})]
\label{gen1}
\end{align}
for all $\mathbf{u} \in \mathcal{U}$ and any test function $\theta \in \mathcal{C}^2_b(\mathcal{U})$, where $\mathcal{T}_1$ (resp. $\mathcal{T}_2$) denotes the generic service time of jobs arriving at queue $\sharp$1 (resp. queue $\sharp$2) and with vectors $\mathbf{e}_1 = (1,0)$, $\mathbf{e}_2 = (0,1)$.
\label{generator}
\end{prop}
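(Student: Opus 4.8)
The plan is to read off~(\ref{gen1}) directly from the defining limit by conditioning on the arrivals occurring in the short interval $[t,t+h]$ and exploiting the path equations~(\ref{path}). Fix $\mathbf{u}\in\mathcal{U}$, a test function $\theta\in\mathcal{C}^2_b(\mathcal{U})$, and write $\lambda=\lambda_1+\lambda_2$. Conditionally on $\mathbf{U}_t=\mathbf{u}$, I would decompose according to four events: $A_0$, no arrival in either queue, with $\P(A_0)=e^{-\lambda h}=1-\lambda h+o(h)$; $A_1$ (resp.\ $A_2$), exactly one arrival at queue $\sharp 1$ (resp.\ $\sharp 2$) and none at the other, with $\P(A_1)=\lambda_1 h+o(h)$ (resp.\ $\P(A_2)=\lambda_2 h+o(h)$); and $A_{\geq 2}$, at least two arrivals, with $\P(A_{\geq 2})=o(h)$. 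Since $\theta$ is bounded, $|\mathbb{E}[(\theta(\mathbf{U}_{t+h})-\theta(\mathbf{u}))\ind_{A_{\geq 2}}]|\leq 2\|\theta\|_\infty\,\P(A_{\geq 2})=o(h)$, so the last event is negligible and only $A_0,A_1,A_2$ need to be treated.

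On $A_0$ the compound inputs $W_1,W_2$ are constant on $[t,t+h]$, so by~(\ref{path}) the path is deterministic and moves $\mathbf{u}$ along the drift $-I_1\mathbf{e}_1-I_2\mathbf{e}_2$. For $\mathbf{u}\in\Gamma_1$ one has $0<u_1<u_2$, and as soon as $h<\min(u_1,u_2-u_1)$ the indicator $I_1$ stays equal to $1$ throughout $[t,t+h]$, so that $\mathbf{U}_{t+h}=\mathbf{u}-h\mathbf{e}_1$; likewise $\mathbf{U}_{t+h}=\mathbf{u}-h\mathbf{e}_2$ on $\Gamma_2$ for $h$ small, $\mathbf{U}_{t+h}=\mathbf{u}-h\mathbf{e}_1$ on $\delta_1$ (where $U_2$ remains pinned at $0$) for $h<u_1$, $\mathbf{U}_{t+h}=\mathbf{u}-h\mathbf{e}_2$ on $\delta_2$ for $h<u_2$, and $\mathbf{U}_{t+h}=\mathbf{u}$ at the origin. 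A first-order Taylor expansion, legitimate since $\theta\in\mathcal{C}^2_b(\mathcal{U})$ with remainder bounded by a constant multiple of $h^2$, then gives on $A_0$
\[
\theta(\mathbf{U}_{t+h})-\theta(\mathbf{u}) = -h\,\frac{\partial\theta}{\partial u_1}(\mathbf{u})\,\ind_{\{\mathbf{u}\in\Gamma_1\cup\delta_1\}} - h\,\frac{\partial\theta}{\partial u_2}(\mathbf{u})\,\ind_{\{\mathbf{u}\in\Gamma_2\cup\delta_2\}} + O(h^2).
\]
On $A_1$ (and symmetrically $A_2$) the workload receives a jump $\mathcal{T}_1\mathbf{e}_1$ from the service time of the arriving job, while the deterministic drift over $[t,t+h]$ contributes a displacement of norm at most $h$; since $\theta$ has bounded first derivatives, $\theta(\mathbf{U}_{t+h})=\theta(\mathbf{u}+\mathcal{T}_1\mathbf{e}_1)+O(h)$ on $A_1$, uniformly in $\mathcal{T}_1$. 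Because the service time of the incoming job is independent of the arrival counts, $\mathbb{E}[(\theta(\mathbf{U}_{t+h})-\theta(\mathbf{u}))\ind_{A_1}]=\P(A_1)\,\mathbb{E}[\theta(\mathbf{u}+\mathcal{T}_1\mathbf{e}_1)-\theta(\mathbf{u})]+O(h^2)=\lambda_1 h\,\mathbb{E}[\theta(\mathbf{u}+\mathcal{T}_1\mathbf{e}_1)-\theta(\mathbf{u})]+o(h)$, the expectation being finite by boundedness of $\theta$; likewise for $A_2$. Adding the $A_0,A_1,A_2$ contributions, using $\P(A_0)=1-\lambda h+o(h)$, dividing by $h$ and letting $h\downarrow 0$ yields~(\ref{gen1}).

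The delicate point — and the one I expect to be the main obstacle — is the uniformity of the limit with respect to $\mathbf{u}$ demanded by the definition of $\mathcal{A}$. For each \emph{fixed} $\mathbf{u}$ the thresholds above eventually hold, so~(\ref{gen1}) is valid pointwise on $\mathcal{U}$ away from the diagonal $\{u_1=u_2>0\}$; but the condition ``$h<\min(u_1,u_2-u_1)$'' degenerates as $\mathbf{u}$ approaches the axes $\delta_1,\delta_2$, the origin, or the diagonal, where within $[t,t+h]$ a queue may empty or the two workloads may swap order and the drift direction changes. In such boundary layers the total displacement is still at most $h$ in norm, so the correction to $\theta(\mathbf{U}_{t+h})-\theta(\mathbf{u})$ stays $O(h)$ and the uniform limit is recovered on the subclass of test functions whose first-order behaviour is compatible with the reflection at the boundary (a Neumann-type condition $\partial\theta/\partial u_2=0$ on $\delta_1\cup\{(0,0)\}$ and $\partial\theta/\partial u_1=0$ on $\delta_2\cup\{(0,0)\}$), which is precisely the class relevant when integrating $\mathcal{A}\theta$ against $\mathrm{d}\Phi$ in~(\ref{dPhi}). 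The diagonal, finally, carries no $\Phi$-mass (see the Remark above) and is visited on a set of times of zero Lebesgue measure, so the value of the drift there is immaterial for the stationary analysis and formula~(\ref{gen1}) may be taken to hold on all of $\mathcal{U}$.
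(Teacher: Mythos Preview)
Your proposal is correct and follows the same approach as the paper, which in fact gives only a one-sentence sketch: it invokes the evolution equations~(\ref{path}) together with ``uniform estimates'' for the number of jumps (the Poisson rates are all bounded by $\lambda$) and for the drift (the service rate is the constant $-1$). Your decomposition according to the events $A_0,A_1,A_2,A_{\geq 2}$ and the subsequent Taylor expansion is exactly what that sketch unpacks to; your discussion of the uniformity issue near the axes and the diagonal is more careful than anything the paper supplies, and your resolution --- that the boundary and diagonal carry no mass under $\Phi$, so pointwise validity of~(\ref{gen1}) suffices for the stationary equation~(\ref{stationary}) --- is the pragmatically correct way to close the gap the authors leave open.
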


\begin{proof}
Using evolution equations (\ref{path}) and given $\mathbf{U}_t = \mathbf{u}$, expression (\ref{gen1}) is easily derived from uniform estimates (with respect to $\mathbf{u} \in \mathcal{U}$) for the distribution of the number of jumps of process $(\mathbf{U}_t)$ on any interval $[t,t+h]$ (all intervening Poisson processes have rates lower  than $\lambda = \lambda_1 + \lambda_2$) and for drift rates (when non zero, the service rate is the constant $-1$) .
\end{proof}

Once generator $\mathcal{A}$ is determined, the stationary distribution $\Phi$ of $(\mathbf{U}_t)_{t \geq 0}$ is known (see \cite[p.~189]{Wen81} or \cite[p.~239]{Eth05}) to satisfy
\begin{equation}
\forall \; \theta \in \mathcal{C}^2_b(\mathcal{U}), \; \; \; \int_{\mathcal{U}} \mathcal{A}\theta(\mathbf{u}) \mathrm{d}\Phi(\mathbf{u}) = 0.
\label{stationary}
\end{equation}

%%%%%%%%%%%%%%%%%%%%%%%%%%%%%%%%%%%%%%%%%%%%%%%%%%%%%%%%%%%%%%%%%%%%%%%%%%%%%%%%%%%%

\section{Laplace transforms derivation}
\label{LTD}

%%%%%%%%%%%%%%%%%%%%%%%%%%%%%%%%%%%%%%%%%%%%%%%%%%%%%%%%%%%%%%%%%%%%%%%%%%%%%%%%%%%%

Following the prerequisites of Section~\ref{MA}, we now study integral equation~\eqref{stationary}. Since the problem  is linear in unknown distribution $\Phi$, it is tractable  through Laplace transform techniques. 

%%%%%%%%%%%%%%%%%%%%%%%%%%%%%%%%%%%%%%%%%%%%%%%%%

\subsection{Functional equations}
\label{FE}

%%%%%%%%%%%%%%%%%%%%%%%%%%%%%%%%%%%%%%%%%%%%%%%%%

Let $\mathbf{\Omega} = \{(s_1,s_2) \in \mathbb{C}^2 \mid \Re(s_1) > 0, \Re(s_2) > 0\}$ and its closure $\overline{\mathbf{\Omega}}$. Assumptions $\textbf{A.1-A.2}$ in Section~\ref{IG}, for the existence of regular densities $\varphi_1$ and $\psi_1$ with respective support $\Gamma_1$ and $\delta_1$ (see Equations~(\ref{cones12})-(\ref{axes12})) enable us to define their Laplace transforms $F_1$, $G_1$ by
\begin{equation}
F_1(s_1,s_2) = \int_{\Gamma_1} e^{-\mathbf{s} \cdot \mathbf{u}} \varphi_1(\mathbf{u})\mathrm{d}\mathbf{u}, \; \; \; \; G_1(s_1) = \int_{\delta_1} e^{-s_1u_1}\psi_1(u_1)\mathrm{d}u_1
\label{FF}
\end{equation}
for $\mathbf{s} = (s_1, s_2) \in \overline{\mathbf{\Omega}}$, where $\mathbf{s} \cdot \mathbf{u} = s_1u_1 + s_2u_2$; using the expectation operator, definitions (\ref{FF}) equivalently read
$$
F_1(s_1,s_2) = \E\big[e^{-s_1U_1-s_2U_2}\ind_{\{0< U_1 < U_2\}}\big], \; \; G_1(s_1) = \E\big[e^{-s_1U_1}\ind_{\{0=U_2<U_1\}}\big].
$$
The Laplace transforms $F_2$ and $G_2$ of regular densities $\varphi_2$ and $\psi_2$ with respective support $\Gamma_2$ and $\delta_2$ (see Equations~(\ref{cones12})-(\ref{axes12})) are similarly defined by
\begin{equation}
F_2(s_1,s_2) = \int_{\Gamma_2} e^{-\mathbf{s} \cdot \mathbf{u}} \varphi_2(\mathbf{u})\mathrm{d}\mathbf{u}, \; \; \; \; G_2(s_2) = \int_{\delta_2} e^{-s_2u_2}\psi_2(u_2)\mathrm{d}u_2
\label{FFb}
\end{equation}
for $\mathbf{s} = (s_1, s_2) \in \overline{\mathbf{\Omega}}$; equivalent definitions can be similarly written in terms of the expectation operator. Expression (\ref{dPhi}) for distribution $\mathrm{d}\Phi$ and the above definitions then enable to define the Laplace transform $F$ of the pair $(U_1,U_2)$ by
\begin{equation}
F(s_1,s_2) = 1 - \varrho + F_1(s_1,s_2) + G_1(s_1) + F_2(s_1,s_2) + G_2(s_2)
\label{Ldef}
\end{equation}
for $(s_1,s_2) \in \overline{\mathbf{\Omega}}$.

Finally, let $b_1(s_1) = \mathbb{E}(e^{-s_1\mathcal{T}_1})$ (resp. $b_2(s_2) = \mathbb{E}(e^{-s_2\mathcal{T}_2})$) denote the Laplace transform of service time $\mathcal{T}_1$ (resp. $\mathcal{T}_2$) at queue $\sharp 1$ (resp. queue $\sharp 2$) 
for $\Re(s_1) \geq 0$ (resp. $\Re(s_2) \geq 0$); set in addition
\begin{equation}
\left\{
\begin{array}{ll}
K(s_1,s_2)=\lambda-\lambda_1 b_1(s_1)-\lambda_2 b_2(s_2), 
\\ \\
K_1(s_1,s_2) = s_1 - K(s_1,s_2), \; \; \; K_2(s_1,s_2) = s_2 - K(s_1,s_2),
\end{array} \right.
\label{Ker}
\end{equation}
and
\begin{equation}
\left\{
\begin{array}{ll}
J_1(s_1) = (1-\varrho)(\lambda - \lambda_1b_1(s_1)) - \psi_2(0),
\\ \\
J_2(s_2) = (1-\varrho)(\lambda - \lambda_2b_2(s_2)) - \psi_1(0).
\end{array} \right.
\label{J1J2}
\end{equation}

\begin{prop} \textbf{a)} Transforms $F_1$, $G_1$ and $F_2$, $G_2$ together satisfy
\begin{equation}
K_1(s_1,s_2)H_1(s_1,s_2) + K_2(s_1,s_2)H_2(s_1,s_2) = (1-\varrho)K(s_1,s_2)
\label{Fonct}
\end{equation}
for $(s_1, s_2) \in {\mathbf{\Omega}}$, where $H_1 = F_1 + G_1$ and $H_2 = F_2 + G_2$.

\textbf{b)} Transforms $F_1$ and $G_2$ (resp. $F_2$, $G_1$) satisfy
\begin{equation}
\left\{ 
\begin{array}{lll}
K_1(s_1,s_2)F_1(s_1,s_2) + K_2(s_1,s_2) G_2(s_2) = J_2(s_2) + H(s_1,s_2),
\\
\\
K_2(s_1,s_2)F_2(s_1,s_2) + K_1(s_1,s_2) G_1(s_1) = J_1(s_1) - H(s_1,s_2)
\end{array} \right.
\label{Fonctcomp}
\end{equation}
for $(s_1, s_2) \in {\mathbf{\Omega}}$, with 
\begin{multline}
H(s_1,s_2) = \lambda_1 \mathbb{E} \left [ e^{-s_1U_1-s_2U_2}\ind_{\{0 \leq U_1 < U_2\}}e^{-s_1\mathcal{T}_1}\ind_{\{\mathcal{T}_1 > U_2-U_1\}} \right ] 
 \\ - \lambda_2\mathbb{E} \left [ e^{-s_1U_1-s_2U_2}\ind_{\{0 \leq U_2 < U_1\}}e^{-s_2\mathcal{T}_2}\ind_{\{\mathcal{T}_2 > U_1-U_2\}} \right ].
\label{H}
\end{multline}

\textbf{c)} Constants $\psi_1(0)$ and $\psi_2(0)$ satisfy relation $\psi_1(0) + \psi_2(0) = \lambda(1-\varrho)$.
\label{resol}
\end{prop}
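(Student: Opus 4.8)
The plan is to apply the stationarity identity \eqref{stationary} to suitably chosen test functions $\theta \in \mathcal{C}^2_b(\mathcal{U})$ and read off the three claimed relations by isolating Laplace transforms. The natural choice is $\theta(\mathbf{u}) = e^{-\mathbf{s}\cdot\mathbf{u}} = e^{-s_1 u_1 - s_2 u_2}$ for $(s_1,s_2) \in \mathbf{\Omega}$; since $\Re(s_1) > 0$, $\Re(s_2) > 0$, this function and its first two derivatives are bounded on $\mathcal{U}$, so it is an admissible test function. Substituting into the generator formula \eqref{gen1}, the drift terms produce $s_1 \theta(\mathbf{u})\ind_{\{\mathbf{u}\in\Gamma_1\cup\delta_1\}} + s_2\theta(\mathbf{u})\ind_{\{\mathbf{u}\in\Gamma_2\cup\delta_2\}}$, while the jump terms produce $\lambda_1(b_1(s_1)-1)\theta(\mathbf{u}) + \lambda_2(b_2(s_2)-1)\theta(\mathbf{u})$, the Laplace transforms $b_1,b_2$ emerging from $\mathbb{E}[e^{-s_1\mathcal{T}_1}]$ and $\mathbb{E}[e^{-s_2\mathcal{T}_2}]$ by independence of the service times from $\mathbf{U}_t$. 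Recognizing $-\lambda_1(b_1(s_1)-1) - \lambda_2(b_2(s_2)-1) = \lambda - \lambda_1 b_1(s_1) - \lambda_2 b_2(s_2) = K(s_1,s_2)$ from \eqref{Ker}, the identity $\int_{\mathcal{U}} \mathcal{A}\theta\,\mathrm{d}\Phi = 0$ becomes, after splitting the integral over the pieces of $\mathrm{d}\Phi$ in \eqref{dPhi},
$$
s_1\big(F_1(s_1,s_2) + G_1(s_1)\big) + s_2\big(F_2(s_1,s_2) + G_2(s_2)\big) = K(s_1,s_2)\,\big(1-\varrho + F_1 + G_1 + F_2 + G_2\big) = K(s_1,s_2) F(s_1,s_2),
$$
using \eqref{Ldef}; here the Dirac mass at $(0,0)$ contributes $\mathcal{A}\theta(0,0) = \lambda_1(b_1(s_1)-1) + \lambda_2(b_2(s_2)-1)$ times $(1-\varrho)$, which is $-(1-\varrho)K(s_1,s_2)$, and is absorbed on the right. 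Rearranging with $K_1 = s_1 - K$, $K_2 = s_2 - K$ gives exactly \eqref{Fonct} with $H_1 = F_1+G_1$, $H_2 = F_2+G_2$, proving part \textbf{a)}.

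For part \textbf{b)}, the idea is to repeat the computation but now track separately the contribution of the jump term that moves a point of $\Gamma_1 \cup \delta_1$ across the diagonal into $\Gamma_2$ (and symmetrically). Writing $\mathbb{E}[\theta(\mathbf{u}+\mathcal{T}_1\mathbf{e}_1)] = \mathbb{E}[\theta(\mathbf{u}+\mathcal{T}_1\mathbf{e}_1)\ind_{\{\mathcal{T}_1 \leq u_2 - u_1\}}] + \mathbb{E}[\theta(\mathbf{u}+\mathcal{T}_1\mathbf{e}_1)\ind_{\{\mathcal{T}_1 > u_2 - u_1\}}]$ for $\mathbf{u} \in \Gamma_1\cup\delta_1$, the first piece keeps the point in queue $\sharp 1$'s region and the second piece is precisely the quantity appearing (after integration against $\mathrm{d}\Phi$ and multiplication by $\lambda_1$) in the first term of $H$ in \eqref{H}; the symmetric split for queue $\sharp 2$ yields the second term of $H$. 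Carrying out the bookkeeping — assembling all terms whose support lies in $\Gamma_1 \cup \delta_1$ on one side and those in $\Gamma_2 \cup \delta_2$ on the other, and using the constants $\psi_1(0)$, $\psi_2(0)$ from the boundary behaviour of the densities together with $\mathbb{P}(U=0)=1-\varrho$ — produces the two equations \eqref{Fonctcomp} with $J_1,J_2$ as defined in \eqref{J1J2}. One checks directly that adding the two equations of \eqref{Fonctcomp} recovers \eqref{Fonct}, since $H$ cancels and $J_1(s_1) + J_2(s_2) = (1-\varrho)(2\lambda - \lambda_1 b_1(s_1) - \lambda_2 b_2(s_2)) - \psi_1(0) - \psi_2(0)$ must equal $(1-\varrho)K(s_1,s_2) = (1-\varrho)(\lambda - \lambda_1 b_1 - \lambda_2 b_2)$; this forces $\psi_1(0) + \psi_2(0) = (1-\varrho)\lambda$, which is precisely part \textbf{c)}.

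Alternatively, part \textbf{c)} can be obtained independently and more transparently by taking $s_1 \to \infty$, $s_2 \to \infty$ (or $s_1, s_2$ real and large) in \eqref{Fonctcomp}: all bivariate transforms $F_1, F_2$ and the univariate $G_1(s_1), G_2(s_2)$ vanish by dominated convergence, $b_1(s_1), b_2(s_2) \to 0$, so $K \to \lambda$, $K_1 \sim s_1$, $K_2 \sim s_2$, and a careful look at the decay rates forces the constant terms to balance, giving $\psi_1(0) + \psi_2(0) = \lambda(1-\varrho)$; one may also read it off from the local balance of probability flux across the origin.

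The main obstacle I anticipate is the careful justification of the diagonal bookkeeping in part \textbf{b)}: one must verify that the density assumptions \textbf{A.1}–\textbf{A.2} make all the relevant integrals over $\Gamma_1$, $\Gamma_2$, $\delta_1$, $\delta_2$ convergent and that the boundary terms arising from the drift operator acting on $\theta$ near the axes and near the diagonal are exactly captured by the constants $\psi_1(0)$, $\psi_2(0)$ and by the cross-diagonal jump term $H$, with no missing contribution from the (probability-zero) diagonal itself. The Remark in Section~\ref{PE} guarantees that $\Phi$ puts no mass on the diagonal, which is what makes this clean; nevertheless the interchange of the generator action with the integration against $\mathrm{d}\Phi$, and the splitting of each jump expectation according to whether the post-jump state has crossed the diagonal, is the delicate step and is presumably why the authors defer it to the Appendix.
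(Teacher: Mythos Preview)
Your proof of part \textbf{a)} is correct and essentially identical to the paper's: apply the generator to the global exponential test function, integrate against $\mathrm{d}\Phi$, and rearrange. Likewise, your derivation of \textbf{c)} by adding the two equations of \eqref{Fonctcomp} and comparing with \eqref{Fonct} is exactly the paper's argument.

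Part \textbf{b)}, however, has a genuine gap. Your plan is to ``repeat the computation'' with the \emph{same} test function $\theta(\mathbf{u})=e^{-\mathbf{s}\cdot\mathbf{u}}$ and then obtain two equations by ``assembling all terms whose support lies in $\Gamma_1\cup\delta_1$ on one side and those in $\Gamma_2\cup\delta_2$ on the other''. But a single identity $\int\mathcal{A}\theta\,\mathrm{d}\Phi=0$ cannot be split into two independent identities merely by grouping its summands; splitting the jump expectation according to whether $\mathcal{T}_1$ crosses the diagonal only rewrites the one equation you already have. To produce a \emph{second} equation you must apply \eqref{stationary} to a second test function, namely (morally) $\theta(\mathbf{u})\ind_{\{\mathbf{u}\in\Gamma_1\}}$, and this function is not in $\mathcal{C}^2_b(\mathcal{U})$.

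The paper's approach handles this by a mollification: it sets $\theta_\varepsilon(\mathbf{u})=e^{-\mathbf{s}\cdot\mathbf{u}}\chi_\varepsilon(\mathbf{u})$ with $\chi_\varepsilon\in\mathcal{C}^2_b$ converging pointwise to $\ind_{\Gamma_1}$, applies \eqref{stationary} to $\theta_\varepsilon$, and lets $\varepsilon\downarrow 0$. The crucial point you miss is that the \emph{drift} part of the generator acts on $\chi_\varepsilon$ and, in the limit, deposits boundary terms on $\{u_1=0\}$ and on the diagonal $\{u_1=u_2\}$: these are the transforms $E_{12}(s_2)=\int e^{-s_2u_2}\varphi_1(0,u_2)\,\mathrm{d}u_2$ and $Z_1(s_1+s_2)=\int e^{-(s_1+s_2)u}\varphi_1(u,u)\,\mathrm{d}u$. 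Neither appears in your outline. The paper then needs two auxiliary lemmas: one (obtained by letting $s_2\to\infty$ in \eqref{Fonct}) expressing $E_{12}$ in terms of $G_2$ and the constant $\psi_1(0)$, and one showing $Z_1\equiv Z_2\equiv 0$ by summing the two raw equations and comparing with \eqref{Fonct}. Only after these eliminations do the equations take the clean form \eqref{Fonctcomp}. Your closing paragraph correctly senses that ``diagonal bookkeeping'' is delicate, but the obstacle is not a matter of justifying an interchange: it is that the drift term itself generates boundary contributions which your scheme never produces, so the constants $\psi_1(0),\psi_2(0)$ have no mechanism to enter your computation.
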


\begin{proof}
\textbf{a)} Fix $(s_1, s_2) \in \mathbf{\Omega}$. The test function $\theta(\mathbf{u}) = e^{-\mathbf{s} \cdot \mathbf{u}}$, $\mathbf{u} = (u_1,u_2) \in \mathcal{U}$, belongs to $\mathcal{C}^2_b(\mathcal{U})$ and has derivatives $\partial \theta/\partial u_1 = -s_1\theta$, $\partial \theta/\partial u_2 = -s_2\theta$. Besides, we have $\theta(\mathbf{u}+\mathcal{T}_1\mathbf{e}_1) = e^{-s_1T_1}\theta(\mathbf{u})$ hence $\mathbb{E}[\theta(\mathbf{u}+\mathcal{T}_1\mathbf{e}_1) - \theta(\mathbf{u})] = (b_1(s_1)-1)\theta(\mathbf{u})$, and similarly $\mathbb{E}[\theta(\mathbf{u}+\mathcal{T}_2\mathbf{e}_2) - \theta(\mathbf{u})] = (b_2(s_2)-1)\theta(\mathbf{u})$. Applying Proposition \ref{generator}, formula (\ref{gen1}) for $\mathcal{A}\theta(\mathbf{u})$ then yields
$$
\mathcal{A}\theta(\mathbf{u}) = (s_1\ind_{\mathbf{u} \in \Gamma_1 \cup \delta_1} + s_2\ind_{\mathbf{u} \in \Gamma_2 \cup \delta_2})\theta(\mathbf{u}) 
% + \frac{s_1+s_2}{2}\ind_{\mathbf{u} \in \delta} \theta(\mathbf{u}) 
- K(s_1,s_2)\theta(\mathbf{u})
$$ 
with $K(s_1,s_2)$ defined in (\ref{Ker}). Integrating that expression of $\mathcal{A}\theta(\mathbf{u})$ over closed quarter plane $\mathcal{U}$ with respect to distribution $\mathrm{d}\Phi$ and using Assumptions \textbf{A.1}-\textbf{A.2}, Relation~\eqref{stationary} then gives  
$$
\int_{\mathcal{U}} \mathcal{A}\theta(\mathbf{u}) \mathrm{d}\Phi(\mathbf{u}) = s_1H_1(s_1,s_2) + s_2H_2(s_1,s_2) - F(s_1,s_2)K(s_1,s_2) = 0
$$ 
with $H_1 = F_1 + G_1$ and $H_2 = F_2 + G_2$; using (\ref{Ldef}) finally provides (\ref{Fonct}).
\smallskip

\textbf{b)} As detailed in Appendix \ref{A2}, there exists a family of functions $\chi_\varepsilon: \mathcal{U} \rightarrow \mathbb{R}$ with $\varepsilon > 0$, such that $\forall \; \mathbf{u} \in \mathcal{U}$, $\lim_{\varepsilon \downarrow 0} \chi_\varepsilon(\mathbf{u}) = \ind_{\{\mathbf{u} \in \Gamma_1\}}$ \textit{and} $\chi_\varepsilon \in \mathcal{C}_b^2(\mathcal{U})$. For given $\Re(s_1) > 0$, $\Re(s_2) > 0$, the function $\theta_\varepsilon$ defined by $\theta_\varepsilon(\mathbf{u}) = e^{-\mathbf{s} \cdot \mathbf{u}}\chi_\varepsilon(\mathbf{u})$, $\mathbf{u} \in \mathcal{U}$, therefore belongs to $\mathcal{C}^2_b(\mathcal{U})$ and satisfies $\lim_{\varepsilon \downarrow 0} \theta_\varepsilon = \theta$ pointwise in $\mathcal{U}$ with
$$
\theta(\mathbf{u}) = e^{-\mathbf{s} \cdot \mathbf{u}} \cdot \ind_{\{\mathbf{u} \in \Gamma_1\}}, \; \; \mathbf{u} \in \mathcal{U}
$$
(note that $\theta \notin \mathcal{C}^2_b(\mathcal{U})$). Apply then formula (\ref{gen1}) to regularized test function $\theta_\varepsilon$ and integrate this expression over $\mathcal{U}$ against distribution $\mathrm{d}\Phi$ to define 
\begin{equation}
\mathcal{M}(\varepsilon) = \int_{\mathcal{U}} \mathcal{A}\theta_\varepsilon(\mathbf{u}) \mathrm{d}\Phi(\mathbf{u}).
\label{mepsilon}
\end{equation}
In view of (\ref{stationary}), we have $\mathcal{M}(\varepsilon) = 0$ and, provided that $\mathcal{M}(\varepsilon)$ has a finite limit $\mathcal{M}$ as $\varepsilon \downarrow 0$, we must have $\mathcal{M} = 0$. The detailed calculation of that limit $\mathcal{M}$ (depending on the pair $(s_1,s_2)$) is performed in Appendix \ref{A2} and condition $\mathcal{M} = 0$ is shown to reduce to first  equation (\ref{Fonctcomp}). Exchanging indices 1 and 2 provides second  equation (\ref{Fonctcomp}), after noting that $H(s_1,s_2)$ changes into $- H(s_1,s_2)$.
\smallskip

\textbf{c)} Adding equations (\ref{Fonctcomp}) gives (\ref{Fonct}) if and only if $\psi_1(0) + \psi_2(0) = \lambda(1-\varrho)$ holds.
\end{proof}

\begin{remark}
Computing $F(s,s) = 1 - \varrho + H_1(s,s) + H_2(s,s)$ by letting  $s_1 = s_2 = s$ in (\ref{Fonct}) readily gives
\begin{equation}
F(s,s) = \frac{s(1-\varrho)}{s - K(s,s)}, \; \; \; \Re(s) > 0,
\label{PK}
\end{equation}
with $K(s,s) = \lambda - \lambda_1b_1(s) - \lambda_2b_2(s)$. Identity (\ref{PK}) is obviously Pollaczek-Khintchin formula \cite[p.~48, p.~339] {Rob03}  for the transform $F(s,s) = \mathbb{E}(e^{-sU})$ of the total workload $U = U_1 + U_2$ in the global $M/G/1$ queue, with i.i.d. service times having distribution $\mathrm{d}B$ defined by (\ref{dB}). 
\end{remark}

\begin{corol} 
\label{coroltech}
Let $H$ be defined by (\ref{H}). Transform $G_1$ satisfies
\begin{equation}
(s_1 - s_2)G_1(s_1) = J_1(s_1) - H(s_1,s_2)
\label{FonctG1}
\end{equation}
for $(s_1, s_2) \in \mathbf{\Omega}$ such that $K_2(s_1,s_2) = 0$. Similarly, transform $G_2$ satisfies
\begin{equation}
(s_2 - s_1)G_2(s_2) = J_2(s_2) + H(s_1,s_2)
\label{FonctG2}
\end{equation}
for $(s_1, s_2) \in \mathbf{\Omega}$ such that $K_1(s_1,s_2) = 0$.
\label{C.2}
\end{corol}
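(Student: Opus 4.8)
The plan is to derive the two identities directly from the functional equations~\eqref{Fonctcomp} of Proposition~\ref{resol}\textbf{b)} by evaluating them along the curves where one of the kernels $K_1$ or $K_2$ vanishes, as already suggested by the remark following \textbf{Problem~\ref{prob1}}. Concretely, I would first fix $(s_1,s_2)\in\mathbf{\Omega}$ satisfying $K_2(s_1,s_2)=0$. Recalling the definitions~\eqref{Ker}, namely $K_2(s_1,s_2)=s_2-K(s_1,s_2)$ and $K_1(s_1,s_2)=s_1-K(s_1,s_2)$, the condition $K_2(s_1,s_2)=0$ means $K(s_1,s_2)=s_2$, and therefore $K_1(s_1,s_2)=s_1-K(s_1,s_2)=s_1-s_2$. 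Substituting these two values into the second line of~\eqref{Fonctcomp}, the term $K_2(s_1,s_2)F_2(s_1,s_2)$ vanishes identically while $K_1(s_1,s_2)G_1(s_1)$ becomes $(s_1-s_2)G_1(s_1)$, yielding exactly~\eqref{FonctG1}. The derivation of~\eqref{FonctG2} is entirely symmetric: fixing $(s_1,s_2)\in\mathbf{\Omega}$ with $K_1(s_1,s_2)=0$ forces $K(s_1,s_2)=s_1$ and $K_2(s_1,s_2)=s_2-s_1$; plugging into the first line of~\eqref{Fonctcomp} kills the $F_1$ term and leaves $(s_2-s_1)G_2(s_2)=J_2(s_2)+H(s_1,s_2)$, which is~\eqref{FonctG2}.

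One point that deserves a sentence of care is that equations~\eqref{Fonctcomp} are asserted to hold for \emph{all} $(s_1,s_2)\in\mathbf{\Omega}$, and the set $\{(s_1,s_2)\in\mathbf{\Omega}\mid K_2(s_1,s_2)=0\}$ is a nonempty subset of $\mathbf{\Omega}$ (it is a complex analytic curve), so the restriction is legitimate and no continuity or limiting argument is needed; one simply reads off~\eqref{Fonctcomp} on that curve. It is worth remarking that the functions $F_1,F_2,G_1,G_2,H,J_1,J_2$ are all analytic in $\mathbf{\Omega}$, so the restricted identities are genuine identities between analytic functions on the curve, and in particular they do not depend on which branch of the locus $K_2=0$ (or $K_1=0$) one selects. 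The statement of the corollary implicitly presupposes that such points exist in $\mathbf{\Omega}$; this is clear since $K(s_1,s_2)=\lambda-\lambda_1 b_1(s_1)-\lambda_2 b_2(s_2)$ is a nonconstant analytic function and the equations $s_2=K(s_1,s_2)$, $s_1=K(s_1,s_2)$ each cut out a one-dimensional analytic set meeting $\mathbf{\Omega}$.

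There is essentially no obstacle here: the corollary is a direct specialization of Proposition~\ref{resol}\textbf{b)}, obtained by substituting the kernel relations that the defining conditions $K_1=0$ or $K_2=0$ impose. The only thing to be mildly careful about is bookkeeping the sign of $H$ and the order of $s_1,s_2$ in the prefactor: on $K_2=0$ one gets the factor $K_1=s_1-s_2$ attached to $G_1$ with $H$ entering as $-H$, giving~\eqref{FonctG1}; on $K_1=0$ one gets the factor $K_2=s_2-s_1$ attached to $G_2$ with $H$ entering as $+H$, giving~\eqref{FonctG2}. I would present the proof in two short symmetric paragraphs, each a single substitution, and note that the second follows from the first by exchanging the roles of indices $1$ and $2$ (under which, as already observed in the proof of Proposition~\ref{resol}, $H(s_1,s_2)$ changes sign).
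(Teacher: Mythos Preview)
Your proposal is correct and follows exactly the paper's approach: specialize the second (resp.\ first) equation of~\eqref{Fonctcomp} to points $(s_1,s_2)\in\mathbf{\Omega}$ with $K_2(s_1,s_2)=0$ (resp.\ $K_1(s_1,s_2)=0$), use that $F_2$ (resp.\ $F_1$) is finite there so the corresponding product vanishes, and note that $K_2=0$ forces $K_1=s_1-s_2$ (resp.\ $K_1=0$ forces $K_2=s_2-s_1$). The paper's proof is just your argument stated in two sentences.
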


\begin{proof}
Function $F_2(s_1,s_2)$ is finite for any given $(s_1, s_2) \in \mathbf{\Omega}$; if  $K_2(s_1,s_2) = 0$, the product $K_2(s_1,s_2)F_2(s_1,s_2)$ is therefore zero. As
$$
K_2(s_1,s_2) = 0 \Rightarrow K_1(s_1,s_2) = s_1 - s_2,
$$
second equation (\ref{Fonctcomp}) then implies (\ref{FonctG1}). Relation (\ref{FonctG2}) is similarly derived.
\end{proof}

%%%%%%%%%%%%%%%%%%%%%%%%%%%%%%%%%%%%%%%

\subsection{Analytic continuation}
\label{AC}

%%%%%%%%%%%%%%%%%%%%%%%%%%%%%%%%%%%%%%%

In this section, we first compare the SQF system with the HoL queue, where one queue has Head of Line (HoL) priority over the other; such a comparison then enables us to extend the analyticity domain of Laplace transforms $F_1$, $F_2$ and $G_1$, $G_2$. 
\\
\indent
Let $\overline{U}_j(t)$, $j \in \{1,2\}$, denote the workload in queue $\sharp j$ when the other queue has HoL priority; similarly, let $ \underline{U}_j(t)$ denote the workload in queue $\sharp j$ when this queue has HoL priority over the other. Finally, given two real random variables $X$ and $Y$, $Y$ is said to  dominate $X$ in the strong  order sense (for short,  $X \leq_{st} Y$) if and only if $\E f(X)\leq \E f(Y)$ for any positive non-decreasing measurable function $f$.
\begin{prop}
Workload $U_j(t)$ verifies 
\begin{equation}
\underline{U}_j(t) \leq_{st} U_j(t) \leq_{st} \overline{U}_j(t)
\label{stochlowup}
\end{equation}
for all $t \geq 0$.
\label{domstoch}
\end{prop}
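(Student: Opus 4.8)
The plan is to prove the two stochastic comparisons by a pathwise (sample‑path) coupling argument, constructing all three systems on a common probability space driven by the same realization of the arrival processes and the same service time sequences for the jobs entering each queue. Once the three workload processes $\underline{U}_j(t)$, $U_j(t)$, $\overline{U}_j(t)$ are built from identical input, it suffices to show the pointwise ordering $\underline{U}_j(t)\le U_j(t)\le \overline{U}_j(t)$ for every $t\ge 0$ almost surely; the strong stochastic order $\le_{st}$ then follows immediately, since a pathwise inequality between coupled versions implies $\E f(\underline{U}_j(t))\le \E f(U_j(t))\le \E f(\overline{U}_j(t))$ for every non‑decreasing measurable $f\ge 0$.

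First I would fix $j=1$ (the case $j=2$ being symmetric) and compare $U_1(t)$ with $\overline{U}_1(t)$, the workload of queue $\sharp 1$ when queue $\sharp 2$ has absolute HoL priority. The key observation is that $W_1(t)$, the cumulative work arriving to queue $\sharp 1$, is the same input in both systems; what differs is only the rate at which queue $\sharp 1$ is drained. In the SQF system queue $\sharp 1$ is served whenever $I_1(t)=1$, i.e. whenever $0<U_1(t)\le U_2(t)$ or $U_1(t)>0=U_2(t)$; in the HoL system queue $\sharp 1$ is served only when $\overline{U}_2(t)=0$. Since the aggregate workload $U=U_1+U_2$ has the same law (and, under the coupling, the same sample path, being that of a single work‑conserving $M/G/1$ queue fed by the superposed input) in both systems, one argues that SQF never devotes \emph{less} cumulative service to queue $\sharp 1$ than HoL does: any instant at which HoL serves queue $\sharp 1$ is an instant at which $\overline{U}_2=0$, and work‑conservation of the aggregate then forces $U_2=0$ as well at the coupled time, so SQF also serves queue $\sharp 1$ there. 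Hence the cumulative service $S_1(t)$ given to queue $\sharp 1$ under SQF dominates $\overline{S}_1(t)$ under HoL, and since $U_1(t)=\sup_{0\le s\le t}\big(W_1(t)-W_1(s)-(S_1(t)-S_1(s))\big)^+$ by a Skorokhod/reflection representation (or directly from \eqref{path}), the larger cumulative service yields the smaller workload: $U_1(t)\le\overline{U}_1(t)$. The symmetric lower bound $\underline{U}_1(t)\le U_1(t)$ is obtained dually: when queue $\sharp 1$ has HoL priority it is served at every instant it is nonempty, so $\underline{S}_1(t)$ dominates $S_1(t)$, giving $\underline{U}_1(t)\le U_1(t)$.

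The main obstacle is making the coupling rigorous enough that the heuristic ``same aggregate path, hence HoL‑service times for queue $\sharp 1$ are a subset of SQF‑service times for queue $\sharp 1$'' is actually valid. One must verify that under the common driving data the aggregate workload processes in all three systems genuinely coincide (true because each is work‑conserving with the same Poisson input and same service demands, so all agree with the single $M/G/1$ total‑workload process), and then carefully track the per‑queue allocation to establish the cumulative‑service ordering $\overline{S}_1(t)\le S_1(t)\le\underline{S}_1(t)$ on the coupled paths — handling the measure‑zero ambiguities on the diagonal $U_1=U_2$ noted in the Remark, and the countably many jump times, so that the comparison of the continuous drift terms in \eqref{path} is unaffected. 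Once that monotonicity of cumulative service is in hand, the reflection‑map representation of the workload is monotone (decreasing) in the cumulative service input, and the pointwise inequalities — and hence \eqref{stochlowup} — follow. I would relegate the detailed verification of the coupling and the service‑ordering bookkeeping to the Appendix, stating here only the reflection representation and the elementary monotonicity lemma it relies on.
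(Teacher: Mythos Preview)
Your overall strategy --- pathwise coupling on a common probability space, comparison of per-queue service allocations, and a Skorokhod/reflection representation --- is the same route the paper takes; the paper argues pathwise from the indicator inequalities $I_1(t)\ge\ind_{\{0<U_1(t),\,U_2(t)=0\}}$ and $I_1(t)\le\ind_{\{0<U_1(t)\}}$ combined with the evolution equations~\eqref{path}. Your lower-bound argument ($\underline{U}_1\le U_1$) is correct and matches the paper's: in the HoL system where queue~$\sharp 1$ has priority it is drained at unit rate whenever nonempty, so $\underline{U}_1$ is the minimal Skorokhod reflection of $W_1$, and any process with the same input but smaller drain rate lies above it.

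There is, however, a genuine gap in your upper-bound step. You assert that whenever the HoL system (with queue~$\sharp 2$ given priority) serves queue~$\sharp 1$ --- i.e.\ when queue~$\sharp 2$'s workload in that system vanishes --- ``work-conservation of the aggregate then forces $U_2=0$ as well at the coupled time, so SQF also serves queue~$\sharp 1$ there''. This implication is false. Writing $\underline{U}_2$ for queue~$\sharp 2$'s workload in that HoL system (careful: this is \emph{not} the paper's $\overline{U}_2$, which refers to the opposite priority assignment), the aggregate identity $U_1+U_2=\overline{U}_1+\underline{U}_2$ at an instant with $\underline{U}_2=0$ yields only $U_1+U_2=\overline{U}_1>0$, not $U_2=0$. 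It is perfectly possible that $U_2>0$ while $\underline{U}_2=0$, since SQF may have diverted service away from queue~$\sharp 2$ during an earlier interval where $U_1<U_2$. Consequently the claimed inclusion of HoL-service instants for queue~$\sharp 1$ into SQF-service instants --- and hence the cumulative-service inequality $\overline{S}_1\le S_1$ --- does not follow from your argument.

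The repair is short and uses the pieces you already have. First prove the lower bound $\underline{U}_2(t)\le U_2(t)$ pathwise (your own argument, applied to queue~$\sharp 2$). Then invoke the coupled aggregate identity $U_1(t)+U_2(t)=\overline{U}_1(t)+\underline{U}_2(t)$ and subtract:
\[
\overline{U}_1(t)-U_1(t)=U_2(t)-\underline{U}_2(t)\ge 0.
\]
This gives $U_1(t)\le\overline{U}_1(t)$ directly, with no need for a set-inclusion of service instants or for the reflection-map monotonicity in this direction.
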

\begin{proof}
We clearly have $I_1(t) \geq \ind_{\{0 < U_1(t), U_2(t)=0\}}$ almost surely for all $t \geq 0$, where $I_1(t)$ is defined by \eqref{indicators}. Equation~(\ref{path}) consequently entails that $U_1(t) \leq \overline{U}_1(t)$ pathwise, which implies the strong stochastic domination. Similarly, we have $I_1(t) \leq \ind_{\{0 < U_1(t)\}}$ almost surely for all $t \geq 0$ and (\ref{path}) entails $U_1(t) \geq \underline{U}_1(t)$ pathwise, hence the strong stochastic domination. 
\end{proof}

Assume that random variable $\overline{U}_j =\lim_{t \uparrow +\infty}\overline{U}_j(t)$ has an analytic Laplace transform $s \mapsto \E(e^{-s\overline{U}_j})$ in the domain $\{s \in \C \mid \; \Re(s )> \widetilde{s}_j\}$ for some real $\widetilde{s}_j<0$.

\begin{corol}
Laplace transform $F_1$ can be analytically extended to domain
$$
\widetilde{\mathbf{\Omega}}_1 = \{(s_1,s_2) \in \C^2 \mid \; \Re(s_2) > \max(\widetilde{s}_2,\widetilde{s}_2-\Re(s_1))\},
$$
and transform $G_2$ can be analytically extended to $\widetilde{\omega}_2 = \{s_2 \in \C \mid \; \Re(s_2)>\widetilde{s}_2\}$. 

Similarly, transform $F_2$ can be analytically extended to
$$
\widetilde{\mathbf{\Omega}}_2 = \{(s_1,s_2) \in \C^2 \mid \; \Re(s_1) > \max(\widetilde{s}_1,\widetilde{s}_1-\Re(s_2)\},
$$
and $G_1$ can be analytically extended to $\widetilde{\omega}_1 = \{s_1\in \C \mid \; \Re(s_1)>\widetilde{s}_1\}$.
\label{extensions}
\end{corol}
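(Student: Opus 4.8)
The plan is to exploit the stochastic domination $U_j(t)\leq_{st}\overline U_j(t)$ established in Proposition \ref{domstoch}, together with the standing assumption that the priority-queue workload $\overline U_j$ has a Laplace transform analytic on $\{\Re(s)>\widetilde s_j\}$ for some $\widetilde s_j<0$. First I would pass to the stationary limit: since the SQF system is stable ($\varrho<1$), $U_j(t)$ converges in distribution to a stationary $U_j$, and the pathwise inequality $U_j(t)\leq\overline U_j(t)$ is preserved in the limit, so $U_j\leq_{st}\overline U_j$. Applying the domination to the nonnegative nondecreasing test functions $x\mapsto e^{\sigma x}$ for $0\leq\sigma<-\widetilde s_j$ yields $\E(e^{\sigma U_j})\leq\E(e^{\sigma\overline U_j})<+\infty$; this exponential moment bound is the analytic engine of the whole corollary.

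Next I would translate this into analyticity of the bivariate transforms. Recall $F_1(s_1,s_2)=\E[e^{-s_1U_1-s_2U_2}\ind_{\{0<U_1<U_2\}}]$. On the event $\{0<U_1<U_2\}$ one has $U_1<U_2$, so for $\mathbf s=(s_1,s_2)$ the modulus of the integrand is bounded by $e^{-\Re(s_1)U_1-\Re(s_2)U_2}\leq e^{(|\Re(s_1)|+|\Re(s_2)|)U_2}$ when the real parts are negative, but the sharp bound uses $U_1<U_2$: writing $s_1u_1+s_2u_2=s_1u_1+s_2u_2$ and using $u_1<u_2$, one gets $|e^{-s_1u_1-s_2u_2}|=e^{-\Re(s_1)u_1-\Re(s_2)u_2}\leq e^{-(\Re(s_1)+\Re(s_2))u_2}$ whenever $\Re(s_1)\leq 0$ (so that decreasing $u_1$ to... actually increasing the bound: since $u_1<u_2$ and $-\Re(s_1)\geq 0$, $-\Re(s_1)u_1\leq -\Re(s_1)u_2$), hence the integrand is dominated by $e^{-(\Re(s_1)+\Re(s_2))u_2}\varphi_1(\mathbf u)$, which is integrable as soon as $\Re(s_1)+\Re(s_2)>\widetilde s_2$, because $\int e^{-(\Re(s_1)+\Re(s_2))u_2}\varphi_1(\mathbf u)\,d\mathbf u\leq\E(e^{-(\Re(s_1)+\Re(s_2))U_2})\leq\E(e^{\sigma\overline U_2})<\infty$ with $\sigma=-(\Re(s_1)+\Re(s_2))<-\widetilde s_2$. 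Combined with the trivial condition $\Re(s_2)>\widetilde s_2$ (needed when $\Re(s_1)\geq 0$), this shows absolute convergence precisely on $\widetilde{\mathbf\Omega}_1=\{\Re(s_2)>\max(\widetilde s_2,\widetilde s_2-\Re(s_1))\}$; Morera's theorem or differentiation under the integral sign then gives analyticity there. The same argument with indices exchanged handles $F_2$ on $\widetilde{\mathbf\Omega}_2$.

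For the univariate transforms, $G_2(s_2)=\E[e^{-s_2U_2}\ind_{\{0=U_1<U_2\}}]$ has integrand bounded by $e^{-\Re(s_2)u_2}\psi_2(u_2)$, and $\int e^{-\Re(s_2)u_2}\psi_2(u_2)\,du_2\leq\E(e^{-\Re(s_2)U_2})\leq\E(e^{\sigma\overline U_2})<\infty$ for $\sigma=-\Re(s_2)<-\widetilde s_2$, i.e. on $\widetilde\omega_2=\{\Re(s_2)>\widetilde s_2\}$; analyticity follows as before, and likewise for $G_1$ on $\widetilde\omega_1$. Finally I would note that these extensions agree with the original transforms on $\mathbf\Omega$, so by uniqueness of analytic continuation they genuinely extend $F_1,F_2,G_1,G_2$. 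The main obstacle is purely bookkeeping: getting the exact shape of the domain $\widetilde{\mathbf\Omega}_1$ right requires carefully using the strict inequality $U_1<U_2$ on the support of $\varphi_1$ to trade a factor $e^{-\Re(s_1)U_1}$ for $e^{-\Re(s_1)U_2}$ when $\Re(s_1)<0$, and separately keeping the constraint $\Re(s_2)>\widetilde s_2$ active when $\Re(s_1)\geq 0$; once that trade-off is handled, everything else is a routine dominated-convergence plus Morera argument.
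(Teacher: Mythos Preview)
Your proposal is correct and follows essentially the same route as the paper: split according to the sign of $\Re(s_1)$, use $U_1<U_2$ on the support of $\varphi_1$ to bound the exponent by a function of $U_2$ alone, and then invoke the stochastic domination $U_2\leq_{st}\overline U_2$ to get the required exponential moment. The paper's argument is slightly terser (it does not spell out the passage to the stationary limit or the Morera/dominated-convergence step), but the substance is identical.
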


\begin{proof}
Assume first that $s_1$ and $s_2$ are real with $s_1<0$; given $U_1< U_2$, we have $-s_1U_1-s_2U_2 <  -(s_1+s_2) {U}_2$; using the domination property $U_2 \leq _{st} \overline{U}_2$ of Proposition \ref{domstoch} and the previous inequality, definition (\ref{FF}) of $F_1$ on $\overline{\mathbf{\Omega}}$ entails
$$
F_1(s_1,s_2) = \E\big[e^{-s_1U_1-s_2U_2}\ind_{\{0< U_1 < U_2\}}\big] \leq \mathbb{E}\big[e^{-(s_1+s_2)\overline{U}_2}\big];
$$
we then deduce that $F_1$ can be analytically continued to any point $(s_1,s_2)$ verifying $\Re(s_1)<0$ and $\Re(s_1+s_2) > \widetilde{s}_2$. Assuming now that $s_1\geq 0$ and $s_2< 0$, domination property $U_2 \leq_{st} \overline{U}_2$ yields  $-s_1U_1-s_2U_2 \leq -s_2U_2 \leq_{st} -s_2 \overline{U}_2$ and definition (\ref{FF}) of $F_1$ on $\overline{\mathbf{\Omega}}$ entails in turn
$$
F_1(s_1,s_2) = \E\big[e^{-s_1U_1-s_2U_2}\ind_{\{0< U_1 < U_2\}}\big] \leq \mathbb{E}\big[e^{- s_2 \overline{U}_2}\big];
$$
$F_1$ can therefore be analytically continued to any point $(s_1,s_2)$ verifying $\Re(s_1)>0$ and $\Re(s_2)> \widetilde{s}_2$. We conclude that $F_1$ can be analytically continued to domain $\widetilde{\mathbf{\Omega}}_1$, as claimed.

Writing definition (\ref{FFb}) of $G_2$ as $G_2(s_2) = \E\big[e^{-s_2U_2}\ind_{\{0=U_1<U_2\}}\big]$ for $\Re(s_2) \geq 0$, the same type of arguments as above enables us to analytically continue function $G_2$ to any point verifying $\Re(s_2)>\widetilde{s}_2$.
\end{proof}

Domains $\widetilde{\mathbf{\Omega}}_1$ and $\widetilde{\mathbf{\Omega}}_2$ are illustrated in Figure~\ref{Fig3} (assuming $\widetilde{s}_2 < \widetilde{s}_1$ for instance). 

\begin{figure}[ht]
\centering
\scalebox{1}{\includegraphics[width=%0.8\columnwidth, 
10cm, trim = 2cm 12cm 3cm 4.3cm, clip]{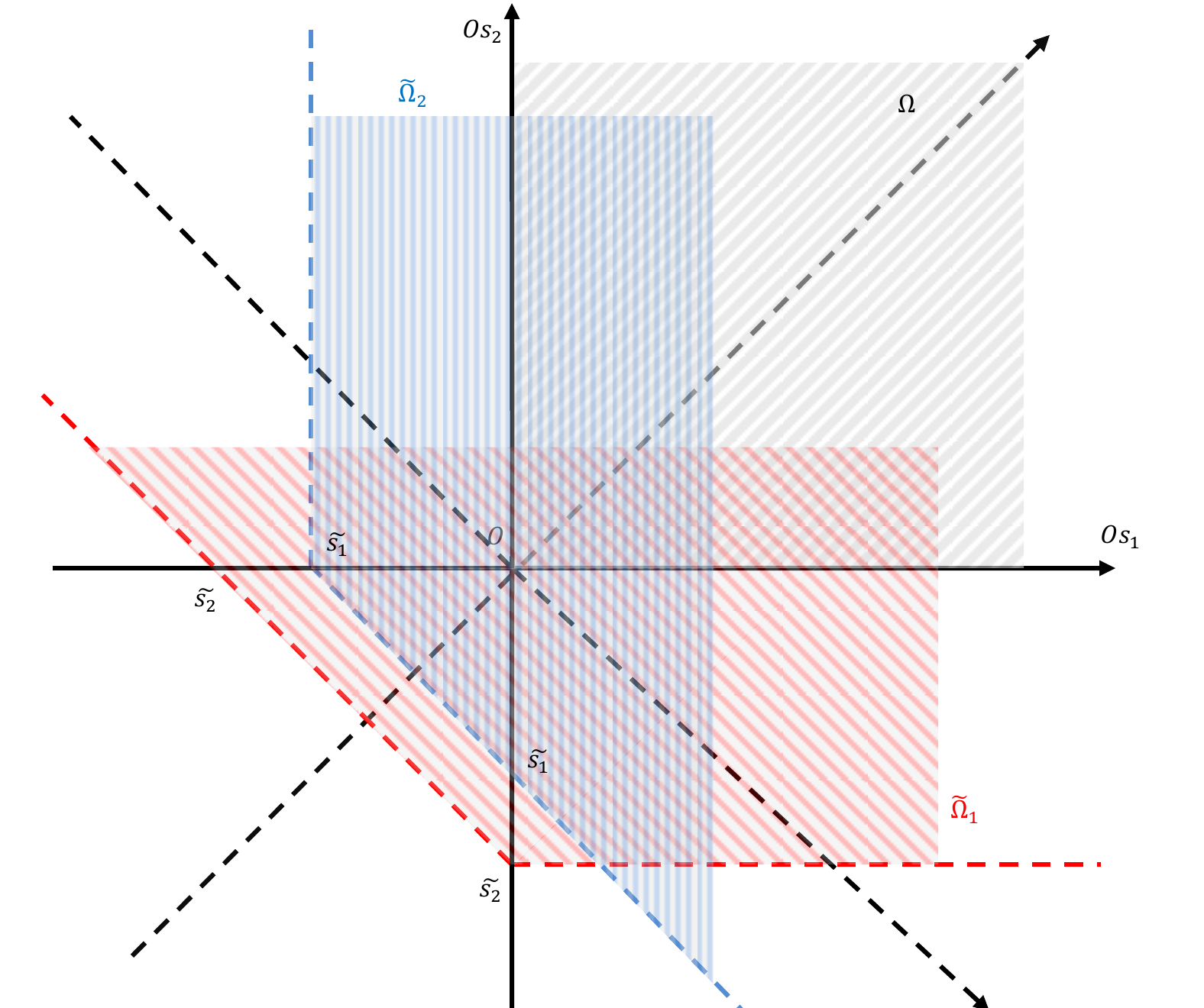}}
\caption{{Extension domains $\widetilde{\mathbf{\Omega}}_1$ (dotted red) and  $\widetilde{\mathbf{\Omega}}_2$ (dotted blue) in $\mathbb{R}^2$.}}
\label{Fig3}
\end{figure}

Following Proposition \ref{resol} and Corollary~\ref{coroltech}, the determination of Laplace transforms $F_1$, $F_2$, $G_1$ and $G_2$ critically depends on both the determination of auxiliary bivariate function $H$ generally defined in (\ref{H}) and  the solutions to equations $K_1(s_1,s_2)=0$ and $K_2(s_1,s_2)=0$. The latter, however, may be very intricate to compute for general service time distributions. 

To make the resolution more tractable, we will now introduce some specific assumptions. First, service times are assumed to be exponentially distributed; this readily provides a more explicit expression for function $H$.
\begin{prop}
In the case of exponentially distributed service times, we have
\begin{equation}
H(s_1,s_2) =\frac{\lambda_1\mu_1}{\mu_1+s_1} M_1\left(\frac{s_1+s_2}{2}\right) - \frac{\lambda_2\mu_2}{\mu_2+s_2} M_2\left(\frac{s_1+s_2}{2}\right),
\label{Hbis}
\end{equation}
where
$$
\left\{
\begin{array}{ll}
M_1(z) = G_2(2z+\mu_1)+F_1(-\mu_1,2z+\mu_1),
\\ \\
M_2(z) = G_1(2z+\mu_2)+F_2(2z+\mu_2,-\mu_2)
\end{array} 
\right.
$$
are analytically defined for $\Re(z) > \max(\widetilde{s}_1,\widetilde{s}_2)/2$.
\label{Hexp}
\end{prop}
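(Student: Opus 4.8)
The plan is to start from the general expression \eqref{H} for $H(s_1,s_2)$ and substitute the explicit exponential service-time densities $\mathrm{d}B_j(x) = \mu_j e^{-\mu_j x}\mathrm{d}x$. Consider the first expectation in \eqref{H}, namely $\lambda_1 \mathbb{E}[e^{-s_1U_1-s_2U_2}\ind_{\{0\le U_1<U_2\}}e^{-s_1\mathcal{T}_1}\ind_{\{\mathcal{T}_1>U_2-U_1\}}]$. Conditioning on $(U_1,U_2)$ and using the independence of $\mathcal{T}_1$, I would first compute the inner integral over the service time: for fixed $u_1<u_2$,
\begin{equation}
\mathbb{E}\big[e^{-s_1\mathcal{T}_1}\ind_{\{\mathcal{T}_1>u_2-u_1\}}\big] = \int_{u_2-u_1}^{\infty} \mu_1 e^{-(\mu_1+s_1)x}\mathrm{d}x = \frac{\mu_1}{\mu_1+s_1}\, e^{-(\mu_1+s_1)(u_2-u_1)}.
\nonumber
\end{equation}
Plugging this back, the factor $e^{-s_1U_1-s_2U_2}e^{-(\mu_1+s_1)(U_2-U_1)}$ simplifies to $e^{\mu_1 U_1 - (s_1+s_2+\mu_1)U_2}$, so the first term becomes $\frac{\lambda_1\mu_1}{\mu_1+s_1}\,\mathbb{E}[e^{\mu_1U_1-(s_1+s_2+\mu_1)U_2}\ind_{\{0\le U_1<U_2\}}]$. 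The key observation is that this expectation depends on $(s_1,s_2)$ only through the combination $s_1+s_2$; writing $2z = s_1+s_2$, it equals $\mathbb{E}[e^{-(-\mu_1)U_1 - (2z+\mu_1)U_2}\ind_{\{0\le U_1<U_2\}}]$, which splits according to whether $U_1=0$ or $0<U_1<U_2$ into $G_2(2z+\mu_1) + F_1(-\mu_1, 2z+\mu_1) = M_1(z)$ by the expectation-form definitions of $G_2$ and $F_1$ given after \eqref{FFb} and \eqref{FF}.

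The second term in \eqref{H} is handled identically with indices $1$ and $2$ exchanged: the inner integral over $\mathcal{T}_2$ produces the factor $\frac{\mu_2}{\mu_2+s_2}e^{-(\mu_2+s_2)(U_1-U_2)}$, the exponent collapses to $\mu_2 U_2 - (s_1+s_2+\mu_2)U_1$, and the resulting expectation over the region $\{0\le U_2<U_1\}$ is $G_1(2z+\mu_2) + F_2(2z+\mu_2,-\mu_2) = M_2(z)$. Combining the two terms with the minus sign from \eqref{H} yields exactly \eqref{Hbis}. Finally, for the claimed analyticity domain: $M_1(z)$ involves $G_2$ evaluated at $2z+\mu_1$ and $F_1$ evaluated at $(-\mu_1,2z+\mu_1)$. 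By Corollary~\ref{extensions}, $G_2$ is analytic on $\{\Re(s_2)>\widetilde{s}_2\}$, giving $\Re(2z+\mu_1)>\widetilde{s}_2$, i.e.\ $\Re(z)>(\widetilde{s}_2-\mu_1)/2$; and $F_1$ is analytic on $\widetilde{\mathbf{\Omega}}_1$, which at the point $(-\mu_1,2z+\mu_1)$ requires $\Re(2z+\mu_1)>\max(\widetilde{s}_2,\widetilde{s}_2+\mu_1)=\widetilde{s}_2+\mu_1$ (since $\mu_1>0$), i.e.\ $\Re(z)>\widetilde{s}_2/2$. Together with the symmetric constraint $\Re(z)>\widetilde{s}_1/2$ coming from $M_2$, the common domain of analyticity is $\Re(z)>\max(\widetilde{s}_1,\widetilde{s}_2)/2$, as stated.

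I expect the computation itself to be entirely routine; the only point requiring a little care is the bookkeeping at the \emph{boundary} $U_1=0$ (resp.\ $U_2=0$). The region $\{0\le U_1<U_2\}$ is the disjoint union of the interior cone $\Gamma_1 = \{0<U_1<U_2\}$, which contributes $F_1$, and the half-axis $\delta_2 = \{U_1=0<U_2\}$, which contributes $G_2$; one must check that on $\delta_2$ the indicator $\ind_{\{\mathcal{T}_1>U_2-U_1\}}$ correctly reads $\ind_{\{\mathcal{T}_1>U_2\}}$ and that the density decomposition \eqref{dPhi} assigns no extra mass (the atom at $(0,0)$ is excluded by $U_2>0$). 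This is the same subtlety already flagged in the definition of $F_1$ versus $G_1$, so it is under control. The one genuinely substantive ingredient, rather than obstacle, is that one is free to evaluate the bivariate transforms $F_1$, $F_2$ at points with a \emph{negative} first (resp.\ second) coordinate, namely $-\mu_1$ (resp.\ $-\mu_2$); this is legitimate precisely because of the analytic-continuation result Corollary~\ref{extensions}, and it is what makes the whole reduction to the single variable $z=(s_1+s_2)/2$ meaningful.
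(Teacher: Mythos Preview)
Your proof is correct and follows essentially the same route as the paper's own argument in Appendix~\ref{A3}: compute the inner integral over the exponential service time to extract the factor $\mu_j/(\mu_j+s_j)$ and the shifted exponent, then split the expectation over $\{0\le U_j<U_{3-j}\}$ into the boundary term $G_{3-j}$ and the interior term $F_j$ using the density decomposition~\eqref{dPhi}, and finally invoke Corollary~\ref{extensions} for the analyticity domain. Your handling of the boundary pieces and of the legitimacy of evaluating $F_1,F_2$ at a negative argument is exactly the care the paper takes as well.
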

\noindent
The proof of Proposition \ref{Hexp} is deferred to Appendix \ref{A3}. Expression (\ref{Hbis}) consequently reduces the determination of function $H$ to that of two univariate functions $M_1$ and $M_2$. 

In the rest of this paper, we further assume that the Poisson arrival rates and service time distributions in each queue are equal, the so-called ``symmetric (exponential) case''. Because of its technical complexity, the asymmetric case will be treated in a forthcoming paper \cite{assymSQF}.

%%%%%%%%%%%%%%%%%%%%%%%%%%%%%%%%%%%%%%%%%%%%%%%%%%%%%%%%%%%

\section{Analytic properties for the symmetric case}
\label{EXPI}

%%%%%%%%%%%%%%%%%%%%%%%%%%%%%%%%%%%%%%%%%%%%%%%%%%%%%%%%%%%

As previously motivated, we assume from now on that 
\begin{itemize}
\item Poisson arrival rates are equal, namely $\lambda_1 = \lambda_2 = \lambda/2$;
\item service times in both queues are exponentially distributed  with identical parameter $\mu$, i.e., $\mu_1=\mu_2=\mu$;
\end{itemize}
the Laplace transform of the service time distribution is then $b(s) = \mu/(s + \mu)$. 

By the latter symmetry assumption, queues $\sharp 1$ and $\sharp 2$ are now interchangeable in terms of probability distribution. Definition (\ref{FF}) of $F_1$ or $F_2$ then entails that $F_1(s_1,s_2) = F_2(s_2,s_1)$ for $(s_1,s_2) \in \mathbf{\Omega}$ and we denote by $F_0(s_1,s_2)$ the latter quantity; using similar arguments, we have $G_1 = G_2 = G$. By Proposition \ref{resol}.c, we further have $\psi_1(0) = \psi_2(0) = \lambda(1-\varrho)/2$ and function $J_1 = J_2 = J$ introduced in (\ref{J1J2}) is simply given by
\begin{equation}
J(s) = \frac{\lambda}{2}(1-\varrho)\frac{s}{s + \mu}.
\label{JJ}
\end{equation} 
Relations~\eqref{Fonctcomp} then specialize to the unique equation
\begin{equation}
\label{kernelsym}
K_1(s_1,s_2)F_0(s_1,s_2) + K_2(s_1,s_2)G(s_2) = J(s_2) + H(s_1,s_2),
\end{equation}
where general expression \eqref{Hbis} for $H$ now simply reduces to
\begin{equation}
H(s_1,s_2) = \frac{\lambda \mu(s_2 - s_1)}{2(\mu+s_1)(\mu+s_2)}M\left(\frac{s_1+s_2}{2} \right),
\label{Hsym}
\end{equation}
where
\begin{equation}
\label{Msym}
M(z) = G(2z+\mu)+F_0(2z+\mu,-\mu)
\end{equation}
(note the symmetry between transforms $F_1$ and $F_2$ mentioned above implies that $F_0(2z+\mu,-\mu) = F_0(-\mu,2z+\mu)$). 

Once function $H$ is expressed by (\ref{Hsym}) in terms of auxiliary function $M$, functional equation (\ref{kernelsym}) gives $F_1 = F_2 = F_0$ in terms of both $G$ and $M$. As univariate transform $G$ will be later shown to depend on function $M$ only, our remaining task is therefore to derive the latter function.

%%%%%%%%%%%%%%%%%%%%%%%%%%%%%%%%%%%%
\subsection{Preliminary results}
%%%%%%%%%%%%%%%%%%%%%%%%%%%%%%%%%%%%

Let us first assert some extension properties for analytic functions of interest. Recall from \cite[�3.3]{ravi} that the Laplace transform of the workload $\overline{U}_1$ in queue $\sharp 1$ when queue $\sharp 2$ has HoL priority is given by
\begin{equation}
\label{UHoL}
\E \big [e^{-s\overline{U}_1}  \big ] = \frac{2(1-\rho)s\xi^+(s)}{\lambda(1-b(s))(s-\xi^+(s))}
\end{equation}
for $\Re(s)\geq 0$, where $\xi^+(s)$ is the unique root of equation $\xi=K(\xi,s)$ which is positive for $s>0$. Specializing definition (\ref{Ker}) for $K(\xi,s)$ to the present symmetric case, equation $\xi = K(\xi,s)$ readily reduces to
\begin{equation}
\label{eqxi}
(s+\mu)\xi^2+\left(\mu^2-\frac{\lambda\mu}{2}+(\mu-\lambda)s\right)\xi-\frac{\lambda\mu}{2}s=0;
\end{equation}
its roots $\xi^+(s)$ and $\xi^-(s)$ are therefore given by
\begin{equation}
\label{xi+-}
\xi^{\pm}(s)  = \frac{-\left(\mu^2-\lambda\mu/2+(\mu-\lambda)s\right)\pm\sqrt{D(s)}}{2(s+\mu)}
\end{equation}
where discriminant $D(s) = (\mu^2-\lambda\mu/2+(\mu-\lambda)s)^2+2\lambda\mu s(\mu+s)$
is positive for $s\in \R\setminus]\zeta^-,\zeta^+[$ and non positive for $s\in [\zeta^-,\zeta^+]$, with
\begin{equation}
\label{zeta1+-}
\zeta^- = -\mu\frac{(\sqrt{\mu}+\sqrt{\lambda/2})^2}{\lambda/2+(\sqrt{\mu}+\sqrt{\lambda/2})^2}, \quad \quad \zeta^+ = -\mu \frac{(\sqrt{\mu}-\sqrt{\lambda/2})^2}{\lambda/2+(\sqrt{\mu}-\sqrt{\lambda/2})^2}.
\end{equation}
Functions $s \mapsto \xi^\pm(s)$ are defined for real $s \notin [\zeta^-,\zeta^+]$. With the convention $\sqrt{-1}=i$, we can define analytic or meromorphic extensions of these functions in the complex plane as follows. 

\begin{lemma}
Function $\xi^-$ (resp. $\xi^+$) can be analytically (resp. meromorphically) extended to the cut plane $\mathbb{C} \setminus [\zeta^-,\zeta^+]$.
\label{extendXI}
\end{lemma}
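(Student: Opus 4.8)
The plan is to analyze the two roots $\xi^{\pm}(s)$ given explicitly by formula (\ref{xi+-}) as functions of the complex variable $s$, treating the quadratic $(s+\mu)\xi^2 + (\mu^2 - \lambda\mu/2 + (\mu-\lambda)s)\xi - \lambda\mu s/2 = 0$ in (\ref{eqxi}) via its resolvent. The only sources of non-analyticity for $s \mapsto \xi^{\pm}(s)$ are (i) the vanishing of the leading coefficient $s+\mu$, i.e.\ the point $s = -\mu$, and (ii) the branch points of $\sqrt{D(s)}$, i.e.\ the zeros of the discriminant $D(s) = (\mu^2 - \lambda\mu/2 + (\mu-\lambda)s)^2 + 2\lambda\mu s(\mu+s)$. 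First I would show that $D$ is a quadratic polynomial in $s$ with exactly the two real roots $\zeta^-$ and $\zeta^+$ displayed in (\ref{zeta1+-}): expanding, $D(s) = (\mu-\lambda)^2 s^2 + \big(2(\mu-\lambda)(\mu^2 - \lambda\mu/2) + 2\lambda\mu\big)s + (\mu^2 - \lambda\mu/2)^2$, whose leading coefficient $(\mu-\lambda)^2 \geq 0$ (and one checks the degenerate case $\lambda = \mu$ separately, where $D$ is affine with a single real root), and whose own discriminant is positive, so the two roots are real and distinct; identifying them with the stated $\zeta^{\pm}$ is then a direct computation. Consequently $D(s) \neq 0$ on the cut plane $\mathbb{C} \setminus [\zeta^-, \zeta^+]$, which is simply connected, so a single-valued analytic branch of $\sqrt{D(s)}$ exists there; the convention $\sqrt{-1} = i$ (together with $\sqrt{D(s)} > 0$ for real $s < \zeta^-$, say) fixes which branch, and this branch agrees with the real square root on $\R \setminus [\zeta^-,\zeta^+]$ by the identity theorem.

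Having fixed the branch of $\sqrt{D(s)}$, both $\xi^+$ and $\xi^-$ as given by (\ref{xi+-}) are meromorphic on $\mathbb{C} \setminus [\zeta^-,\zeta^+]$ with at worst a simple pole at $s = -\mu$. The next step is to show that for $\xi^-$ this apparent pole is removable. The standard trick is to use the Vieta relation: the product of the two roots of (\ref{eqxi}) is $\xi^+(s)\xi^-(s) = -\frac{\lambda\mu s/2}{s+\mu}$, hence $\xi^-(s) = -\frac{\lambda\mu s}{2(s+\mu)\,\xi^+(s)}$. At $s = -\mu$ the numerator of this expression is finite and nonzero, so the order of the singularity of $\xi^-$ at $-\mu$ is controlled by the behaviour of $(s+\mu)\xi^+(s)$ there; and from (\ref{xi+-}) one reads off that $\xi^+(s) \sim c/(s+\mu)$ as $s \to -\mu$ with $c = -(\mu^2 - \lambda\mu/2 + (\mu-\lambda)(-\mu)) + \sqrt{D(-\mu)}$ (the $+$ branch being the one for which the numerator does not vanish), so $(s+\mu)\xi^+(s) \to c \neq 0$, and therefore $\xi^-$ extends analytically (finitely) across $s = -\mu$. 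Equivalently and more cleanly, one can note $\xi^-(s) = \frac{-(\mu^2-\lambda\mu/2+(\mu-\lambda)s) - \sqrt{D(s)}}{2(s+\mu)}$ and observe that the numerator vanishes at $s = -\mu$ (since there $-(\mu^2 - \lambda\mu/2 - (\mu-\lambda)\mu) = -(\lambda\mu/2)$ while $\sqrt{D(-\mu)}$ must equal $\mp \lambda\mu/2$ for the $-$ branch — one verifies $D(-\mu) = (\lambda\mu/2)^2$ by direct substitution), so numerator and denominator share the factor $(s+\mu)$ and the quotient is analytic at $-\mu$. For $\xi^+$, by contrast, the numerator of (\ref{xi+-}) does not vanish at $s = -\mu$, leaving a genuine simple pole, so $\xi^+$ is only meromorphic on the cut plane.

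The main obstacle, and the step requiring the most care, is the bookkeeping around $s = -\mu$: one must verify that $-\mu \in (\zeta^-, \zeta^+)$ does not hold — i.e.\ that the pole is genuinely in the cut plane and not hidden inside the slit — and, conversely, pin down the branch of $\sqrt{D}$ consistently on both sides of where the old real formula was valid so that the claimed removability for $\xi^-$ is the one that matches the real-$s$ definition. From (\ref{zeta1+-}) one has $\zeta^-, \zeta^+ \in (-\mu, 0)$ when $\lambda/2 < \mu$ (which is implied by the stability condition $\varrho = \lambda/\mu < 1$, so $\lambda/2 < \mu/2 < \mu$), hence $-\mu < \zeta^-$ and the point $s = -\mu$ indeed lies in $\mathbb{C} \setminus [\zeta^-,\zeta^+]$, to the left of the cut; this is exactly the region where one can anchor $\sqrt{D(s)} > 0$ for real $s < \zeta^-$ and propagate analytically. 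A secondary technical point is the degenerate case $\lambda = \mu$, excluded from generic consideration because then $D$ is only affine, but this is in fact ruled out by stability ($\varrho < 1 \Rightarrow \lambda < \mu$), so $(\mu-\lambda)^2 > 0$ always and $D$ is a bona fide quadratic with two distinct real roots. Assembling these observations — existence of the single-valued square-root branch on the simply connected cut plane, agreement with the real formula off the cut, removability of the $s=-\mu$ singularity for $\xi^-$ and its persistence for $\xi^+$ — yields the statement of the lemma.
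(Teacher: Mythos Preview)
Your overall plan is sound, but there is a genuine gap in the branch-tracking that makes the argument internally inconsistent. You assert that a single analytic branch of $\sqrt{D(s)}$ on the simply connected domain $\mathbb{C}\setminus[\zeta^-,\zeta^+]$ can be chosen so that it ``agrees with the real square root on $\R\setminus[\zeta^-,\zeta^+]$''. This is false: writing $D(s)=(\lambda^2+\mu^2)(s-\zeta^-)(s-\zeta^+)$ (note the leading coefficient is $\lambda^2+\mu^2$, not $(\mu-\lambda)^2$ as you wrote, since you omitted the $2\lambda\mu$ term), the analytic branch that is positive for real $s>\zeta^+$ behaves like $\sqrt{\lambda^2+\mu^2}\,s$ at infinity and is therefore \emph{negative} for real $s<\zeta^-$. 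The two real intervals lie in the same connected component of the cut plane, so the sign is forced once you anchor on one side, and it flips on the other.

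This sign flip is exactly why the paper defines the extensions piecewise (their functions $\xi$ and $\widetilde{\xi}$), switching the $\pm$ label across the vertical line $\Re(s)=(\zeta^-+\zeta^+)/2$; the Schwarz reflection argument is their device for showing these pieces glue analytically. In particular, the analytic continuation of $\xi^+$ (originally the ``$+$'' formula for $s>\zeta^+$) is given near $s=-\mu<\zeta^-$ by the ``$-$'' formula with the positive real square root, whose numerator at $s=-\mu$ is $-\lambda\mu/2-\lambda\mu/2=-\lambda\mu\neq 0$; that is why the extended $\xi^+$ has a genuine pole there. Conversely the extended $\xi^-$ corresponds near $-\mu$ to the ``$+$'' formula, whose numerator vanishes, giving the removable singularity. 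Your computation of $D(-\mu)=(\lambda\mu/2)^2$ is correct, but your sentence ``$\sqrt{D(-\mu)}$ must equal $\mp\lambda\mu/2$ for the $-$ branch'' is where the inconsistency surfaces: you had already anchored $\sqrt{D(s)}>0$ for $s<\zeta^-$, so you cannot now take it negative. Once you correctly track the sign reversal (or, equivalently, adopt the paper's piecewise relabelling), the rest of your argument---Vieta's relation, the location $-\mu<\zeta^-$, and the pole/removability dichotomy---goes through and matches the lemma.
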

\begin{proof}
Function $s \mapsto \xi^+(s)$ is well-defined for $s \in \R \setminus ]\zeta^-,\zeta^+[$, whereas function $s \mapsto \xi^-(s)$ is well-defined for  $s\in \R \setminus ]\zeta^-,\zeta^+[$ and $s \neq -\mu$, with $\xi^-(-\mu) = \infty$. It is easily checked that for $s$ belonging to the vertical line $\Re(s) = (\zeta^-+\zeta^+)/2 <0$, we have $\Im(D(s)) = 0$ and 
$$
\Re(D(s)) =  -\frac{\lambda^3\mu^3}{8(\lambda^2+\mu^2) } - \Im(s)^2 < 0
$$
(note this vertical line and the real line are the only subsets of the complex plane on which $\Im(D(s)) = 0$). The Schwarz's reflection principle applied to function $\sqrt{D}$ with respect to the vertical line $\Re(s) = (\zeta^-+\zeta^+)/2$ then ensures that the function $E$ defined by $E(s) = - \sqrt{D(s)}$ for $\Re(s) \leq (\zeta^-+\zeta^+)/2$ and $E(s) = + \sqrt{D(s)}$ for $\Re(s) \geq (\zeta^-+\zeta^+)/2$ is globally analytic on the cut plane $\C\setminus [\zeta^-,\zeta^+]$. Let us then define functions $\xi$ and $\widetilde{\xi}$ by
\begin{equation}
\label{defxi}
\xi(s) = \left\{ \begin{array}{ll} 
\xi^-(s) \; \mbox{if } \Re(s)\leq \displaystyle \frac{\zeta^-+\zeta^+}{2}, \\ \\ 
\xi^+(s) \; \mbox{if } \Re(s)\geq \displaystyle \frac{\zeta^-+\zeta^-}{2},
\end{array}
\right.
\widetilde{\xi}(s) = \left\{ \begin{array}{ll}  
\xi^+(s) \; \mbox{if } \Re(s)\leq \displaystyle \frac{\zeta^-+\zeta^+}{2}, \\ \\ 
\xi^-(s) \; \mbox{if } \Re(s)\geq \displaystyle \frac{\zeta^++\zeta^-}{2},
\end{array} 
\right. 
\end{equation}
respectively. By construction, function $\xi$ is  a meromorphic extension of $\xi^+$ in $\C\setminus[\zeta^-,\zeta^+]$ with a pole at point $-\mu$, while function $\widetilde{\xi}$ is an analytic extension of $\xi^-$ in $\C \setminus [\zeta^-,\zeta^+]$.
\\
\indent
For notation simplicity, we will still denote by $\xi^+$ and $\xi^-$ their respective analytic continuation $\xi$ and $\widetilde{\xi}$ defined above.
\end{proof}

Consider now equation $s=K(s,s)$, whose unique non-zero solution is $-\mu(1-\varrho)$. As $s = K(s,s) \Leftrightarrow s = \xi^+(s) \; \mathrm{or} \; s = \xi^-(s)$, it is easily verified that solution $-\mu(1-\varrho)$ is associated with branch $\xi^+$ if $\varrho \geq 1/2$ and with branch $\xi^-$ if $\varrho \leq 1/2$. Define then
\begin{equation}
\label{s*sym}
\widetilde{s} = \left\{\begin{array}{ll} -\mu(1-\varrho) &\mbox{if } \; \varrho\geq 1/2,  
\\ \\ 
\zeta^+ & \mbox{if } \; \varrho \leq 1/2
\end{array} \right.
\end{equation}
(note that $\zeta^+ \leq -\mu(1-\varrho)$ for all $\varrho \in [0,1]$, as easily verified from the defining expression of polynomial $D(s)$ in (\ref{xi+-})).
\begin{lemma}
With the above notation, Laplace transform $G$ can be analytically extended to the half-plane $\widetilde{\omega} = \{s \in \mathbb{C} \; \mid \; \Re(s)>\widetilde{s}\}$; function $F_0$ can be analytically extended to  $\widetilde{\mathbf{\Omega}} = \{(s_1,s_2) \in \mathbb{C}^2 \; \mid \; \Re(s_1)>\max(\widetilde{s},\widetilde{s}-\Re(s_2))\}$.
\label{extendGF0}
\end{lemma}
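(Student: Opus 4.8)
The plan is to transfer the analytic extension of $F_1$ and $G_2$ obtained in Corollary~\ref{extensions} — which was stated there in terms of abstract bounds $\widetilde{s}_j$ — into the concrete symmetric setting, where the relevant abscissa $\widetilde{s}_j$ is pinned down by the explicit HoL formula \eqref{UHoL}. First I would identify the decay abscissa of $\overline{U}_1$: by \eqref{UHoL}, $\E[e^{-s\overline{U}_1}]$ is rational in $s$ and $\xi^+(s)$, so its singularities are either branch points of $\xi^+$ (the cut endpoints $\zeta^-,\zeta^+$) or zeros of the denominator $\lambda(1-b(s))(s-\xi^+(s))$. The factor $1-b(s) = s/(s+\mu)$ vanishes only at $s=0$, which is cancelled by the numerator; the factor $s-\xi^+(s)$ vanishes precisely at the non-zero root of $s=K(s,s)$, namely $s=-\mu(1-\varrho)$, and one must check via the discussion preceding \eqref{s*sym} whether that root actually sits on the branch $\xi^+$ (it does iff $\varrho\geq 1/2$). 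Hence the Laplace transform of $\overline{U}_1$ is analytic exactly for $\Re(s)>\widetilde{s}$ with $\widetilde{s}$ as in \eqref{s*sym}: the pole $-\mu(1-\varrho)$ governs when $\varrho\geq 1/2$, and otherwise the branch point $\zeta^+$ governs (using $\zeta^+\leq-\mu(1-\varrho)$, already noted, so that $\zeta^+$ is the binding constraint when $\varrho\leq 1/2$). This justifies taking $\widetilde{s}_1=\widetilde{s}_2=\widetilde{s}$ in the hypothesis preceding Corollary~\ref{extensions}.

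Next I would simply invoke Corollary~\ref{extensions} with this value of $\widetilde{s}$. By symmetry $F_1=F_2=F_0$ and $G_1=G_2=G$, so the two statements of that corollary collapse: $G=G_2$ extends analytically to $\widetilde\omega_2=\{\Re(s)>\widetilde{s}\}=\widetilde\omega$, and $F_0=F_1$ extends analytically to $\widetilde{\mathbf\Omega}_1=\{(s_1,s_2)\mid\Re(s_2)>\max(\widetilde{s},\widetilde{s}-\Re(s_1))\}$. Renaming $(s_1,s_2)\mapsto(s_2,s_1)$ — legitimate since $F_0(s_1,s_2)=F_1(s_2,s_1)$ — turns $\widetilde{\mathbf\Omega}_1$ into exactly the claimed domain $\widetilde{\mathbf\Omega}=\{(s_1,s_2)\mid\Re(s_1)>\max(\widetilde{s},\widetilde{s}-\Re(s_2))\}$. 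That is the whole argument in outline.

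The one genuine subtlety — and the step I expect to require the most care — is verifying that $s=-\mu(1-\varrho)$ is a \emph{true} singularity (a simple pole) of $\E[e^{-s\overline U_1}]$ when $\varrho\geq 1/2$, rather than a removable one, and conversely that when $\varrho\leq1/2$ the expression stays analytic all the way down to $\zeta^+$ — i.e.\ that $s-\xi^+(s)$ does not vanish on $(\zeta^+,0)$ nor further left up to $\zeta^+$, and that no spurious cancellation with the numerator occurs. This amounts to examining the sign of $s-\xi^+(s)$ on the real axis using the quadratic \eqref{eqxi}: evaluating the left side of \eqref{eqxi} at $\xi=s$ gives a cubic in $s$ whose non-zero root is $-\mu(1-\varrho)$, and one tracks which of $\xi^\pm$ passes through it by a continuity/sign argument as in the text preceding \eqref{s*sym}. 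I would also note in passing that the $F_0(2z+\mu,-\mu)$ appearing in \eqref{Msym} then makes sense for $\Re(z)>\widetilde s/2$, since $(2z+\mu,-\mu)\in\widetilde{\mathbf\Omega}$ amounts to $\Re(2z+\mu)>\max(\widetilde s,\widetilde s+\mu)$, i.e.\ $\Re(z)>\widetilde s/2$ (using $\widetilde s<0<\mu$), which is consistent with the domain stated for $M$ in \eqref{Msym} and Proposition~\ref{Hexp}. Everything else is bookkeeping: substituting the explicit $b(s)=\mu/(s+\mu)$ into the general bounds and simplifying.
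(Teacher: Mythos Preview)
Your proposal is correct and follows essentially the same route as the paper: identify the abscissa $\widetilde{s}$ of the HoL transform \eqref{UHoL} by locating the zero of $s-\xi^+(s)$ versus the branch point $\zeta^+$, then feed $\widetilde{s}_1=\widetilde{s}_2=\widetilde{s}$ into Corollary~\ref{extensions}. Your write-up is in fact more explicit than the paper's (which dispatches the whole argument in three lines), including the check that $s=0$ is removable and the consistency remark for the domain of $M$; the only slip is the identity you invoke for the variable swap --- the correct relation is $F_0(s_1,s_2)=F_1(s_1,s_2)=F_2(s_2,s_1)$, not $F_0(s_1,s_2)=F_1(s_2,s_1)$ --- but this does not affect the strategy.
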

\begin{proof}
By (\ref{UHoL}) and Lemma \ref{extendXI}, transform $s \mapsto \E(e^{-s \overline{U}})$ is analytic for $\Re(s) >\widetilde{s}$. This transform may have a pole only at any point $s$ such that $\xi^+(s) = s$. By the above discussion, we actually have a pole at $\widetilde{s} = -\mu(1-\varrho)$ when $\varrho>1/2$; it is not a pole when $\varrho\leq 1/2$ but the algebraic singularity at point $\widetilde{s} = \zeta^+$ instead occurs. Applying then Corollary~\ref{extensions} with $\widetilde{s}_1 = \widetilde{s}_2 = \widetilde{s}$, the extended analyticity domains for $G$ and $F_0$ follow. 
\end{proof}
Following definition (\ref{Msym}) and Lemma \ref{extendGF0}, function $M$ is consequently analytic on the half-plane $v_M = \{z \in \mathbb{C} \; \mid \; \Re(z)>\tilde{s}/2\}$.

%%%%%%%%%%%%%%%%%%%%%%%%%%%%%%%%%%%%
\subsection{The cubic equation}
%%%%%%%%%%%%%%%%%%%%%%%%%%%%%%%%%%%%

As detailed in Section \ref{SQFqueue}, the final determination of function $M$ relies on the algebraic and analytic properties for the branches of a cubic polynomial equation.
\begin{prop}
\textbf{a)} For given $z > 0$ and $z^* > 0$, relations
\begin{equation}
z = \frac{s+\xi^-(s)}{2}, \; \; z^* = \frac{s+\xi^+(s)}{2}
\label{z-s}
\end{equation}
can be inverted in variable $s$ as
\begin{equation}
s = z - \alpha(z), \; \; s = z^* - \beta(z^*)
\label{z-sinvert}
\end{equation} 
respectively, where $\alpha(z)$ and $\beta(z)$ are the two non positive  roots of cubic equation $R(w,z) = 0$ in variable $w$, with
\begin{equation}
\label{cubiceq}
R(w,z) = w^3-(\lambda-z)w^2-(z+\mu)^2w-z(z+\mu)(z+\mu-\lambda).
\end{equation}

For $z > 0$, $\xi^-(s)$ and $\xi^+(s)$ are given by $\xi^-(s) = z+\alpha(z)$ and $\xi^+(s) = z^*+\beta(z^*)$.
\indent
\textbf{b)} For $z \geq 0$, cubic polynomial $R(w,z)$ has three distinct real roots $\alpha(z)$, $\beta(z)$ and $\gamma(z)$ such that $\alpha(z)<-z\leq \beta(z)\leq 0<\gamma(z)$ and $\beta(0)=0$.
\label{Roots-R}
\end{prop}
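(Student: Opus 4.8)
\textbf{Proof plan for Proposition \ref{Roots-R}.}

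\medskip

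The plan is to treat the two parts in the natural order, deriving the cubic from the defining quadratic for $\xi^\pm$ and then studying its real roots by elementary sign analysis.

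\medskip

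\emph{Part a).} I would start from the quadratic equation \eqref{eqxi} satisfied by $\xi = \xi^\pm(s)$, namely $(s+\mu)\xi^2 + (\mu^2 - \lambda\mu/2 + (\mu-\lambda)s)\xi - \lambda\mu s/2 = 0$, which is exactly $\xi = K(\xi,s)$ in the symmetric case. Given the relation $z = (s + \xi)/2$ from \eqref{z-s}, I substitute $s = 2z - \xi$ into this quadratic and collect terms in $\xi$; since the quadratic has a term $(s+\mu)\xi^2$ and a linear term in $s$, replacing $s$ by $2z-\xi$ turns it into a cubic in $\xi$. A direct (if slightly tedious) expansion should yield that $\xi$ solves $P(\xi,z) = 0$ for some explicit cubic; then setting $w := s - z = \xi - 2z$, equivalently $\xi = w + 2z$ and $s = z - w$ (which is \eqref{z-sinvert}), one rewrites $P$ in the variable $w$ and checks it coincides with $R(w,z)$ of \eqref{cubiceq}. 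The cleanest bookkeeping is probably to verify the identity $R(s-z,\,z) = (\text{nonzero factor})\cdot\big[(s+\mu)\xi^2 + (\mu^2-\lambda\mu/2+(\mu-\lambda)s)\xi - \lambda\mu s/2\big]$ with $\xi = 2z-s$, i.e. to confirm that every root $\xi^\pm(s)$ of the quadratic produces a root $w = s - z$ of the cubic. Since for fixed $z>0$ equation \eqref{z-s} determines $s$ (hence $\xi^\pm(s)$), and the cubic has (by part b)) exactly two non-positive roots $\alpha(z) < \beta(z) \le 0$, matching signs: the branch $\xi^-$, which is the smaller root of the quadratic, corresponds to the more negative root $\alpha(z)$, giving $s = z - \alpha(z)$ and $\xi^-(s) = 2z - s = z + \alpha(z)$; symmetrically $\xi^+$ corresponds to $\beta$, giving $s = z^* - \beta(z^*)$ and $\xi^+(s) = z^* + \beta(z^*)$. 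The identification of \emph{which} root goes with \emph{which} branch is the one point needing a short monotonicity/sign argument rather than pure algebra.

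\medskip

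\emph{Part b).} This is a sign-change (Sturm-type) argument on the cubic $R(\cdot,z)$ for fixed $z \ge 0$. First note $R(w,z) \to +\infty$ as $w \to +\infty$ and $R(w,z)\to -\infty$ as $w\to -\infty$, since the leading coefficient is $1$. I would then evaluate $R$ at three well-chosen points and exhibit the sign pattern $-,+,-,+$ as $w$ increases through them, forcing three distinct real roots in the claimed intervals. Candidate test points: $w = 0$ gives $R(0,z) = -z(z+\mu)(z+\mu-\lambda)$; for $0 < z < \lambda - \mu$ this is positive and for $z > \lambda-\mu$ negative, so this alone does not settle the location, which is why one also tests $w = -z$: $R(-z,z) = -z^3 - (\lambda-z)z^2 - (z+\mu)^2(-z) \cdot(\text{sign to be computed}) $ — I expect $R(-z,z)$ to have a clean sign (it should come out $\le 0$, vanishing only at $z=0$), which pins $\alpha(z) < -z \le \beta(z)$. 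For the upper root one checks $R(w,z) < 0$ for all $w \in [-z, \,w_0]$ with $w_0$ a suitable positive value (e.g. $w_0 = \lambda$, or $w_0$ slightly above the largest of $\beta,0$), and uses $R(+\infty,z)=+\infty$ to get $\gamma(z) > 0$; combined with $R(0,z)$'s sign and continuity this also forces $\beta(z) \le 0 < \gamma(z)$. Distinctness of the three roots follows because the sign strictly changes at each, i.e. $R$ and $R'$ share no common real zero in the relevant range (or, equivalently, the discriminant of $R(\cdot,z)$ is nonzero for $z\ge 0$ — but the three-interval separation already gives distinctness for free). Finally $\beta(0) = 0$: set $z = 0$ in \eqref{cubiceq} to get $R(w,0) = w^3 - \lambda w^2 - \mu^2 w = w(w^2 - \lambda w - \mu^2)$, whose roots are $0$ and $(\lambda \pm \sqrt{\lambda^2+4\mu^2})/2$; the negative one is $\alpha(0)$, the positive one $\gamma(0)$, and the middle root is $\beta(0) = 0$, consistent with $\alpha(0) < 0 = \beta(0) < \gamma(0)$.

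\medskip

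\emph{Main obstacle.} The routine-but-error-prone step is the algebraic elimination in part a): carefully substituting $s = 2z - \xi$ into the quadratic \eqref{eqxi} and matching the resulting cubic (after the shift $w = \xi - 2z = s - z$) to $R(w,z)$ in \eqref{cubiceq} term by term. The genuinely substantive step is the correct \emph{pairing} of branches $\xi^\pm$ with roots $\alpha,\beta$ — this requires using the sign information from part b) together with the known positivity of $\xi^+(s)$ for $s>0$ (so that $z^* = (s+\xi^+(s))/2 > 0$ makes sense) and the sign of $\xi^-$. I would therefore prove b) first (or at least the sign pattern of the roots), then return to finish the inversion claims in a).
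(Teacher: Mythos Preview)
Your overall strategy coincides with the paper's: eliminate $s$ from the quadratic \eqref{eqxi} via $\xi = 2z - s$ to obtain the cubic in part a), and locate the roots in part b) by evaluating $R(\cdot, z)$ at the test points $w = -z$ and $w = 0$. Two computational slips need correcting before the plan goes through.

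First, in part a) your change of variable is internally inconsistent: you write $w := s - z$ but then $s = z - w$. The paper's convention is $w = z - s$, so that $s = z - w$ and $\xi = 2z - s = z + w$; this is what makes $\xi^-(s) = z + \alpha(z)$ come out right. This is cosmetic but should be fixed before matching coefficients with \eqref{cubiceq}.

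Second, and more substantively, your guessed signs in part b) are reversed. The stability condition $\varrho = \lambda/\mu < 1$ gives $\lambda < \mu$, so $z + \mu - \lambda > 0$ for every $z \ge 0$; hence $R(0,z) = -z(z+\mu)(z+\mu-\lambda) < 0$ for all $z > 0$ (the range $0 < z < \lambda - \mu$ you consider is empty). And a direct computation gives $R(-z,z) = \lambda\mu z > 0$, not $\le 0$ as you expect. With these corrected signs the pattern at $-\infty,\, -z,\, 0,\, +\infty$ is $-,\,+,\,-,\,+$, which forces one root in each of $(-\infty,-z)$, $(-z,0)$, $(0,+\infty)$ --- exactly the paper's argument. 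Your treatment of $\beta(0)=0$ and your remark that the branch--root pairing needs a separate monotonicity check both match the paper.
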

\begin{proof}
\textbf{a)} Eliminating $\xi^-(s)$ between first relation (\ref{z-s}) and polynomial equation (\ref{eqxi}) satisfied by $\xi^-(s)$, we can write $s = z - \alpha(z)$ where $R(\alpha(z),z) = 0$, cubic polynomial $R(w,z)$ being defined as in (\ref{cubiceq}). 
Similarly, eliminating $\xi^+(s)$ between second relation (\ref{z-s}) and equation (\ref{eqxi}) enables us to write $s = z^* - \beta(z^*)$ where $R(\beta(z^*),z^*) = 0$ with identical polynomial $R(w,z)$. 

We readily deduce, in particular, that $\xi^-(s) = 2z-s=z+\alpha(z)$, and similarly  $\xi^+(s) = 2z^* - s = z^* + \beta(z^*)$.

\textbf{b)}  For $z > 0$, we have $R(-z,z) = \lambda\mu z > 0$ and $R(0,z) = -z(z+\mu)(z+\mu-\lambda) < 0$ since $\lambda < \mu$ by the stability condition. Further accounting for its values at infinity, we deduce that cubic polynomial $R(w,z)$ has three real roots for $z \geq 0$; denoting them by $\alpha(z)$, $\beta(z)$ and $\gamma(z)$, the latter discussion implies the claimed inequalities.

We finally verify that roots $\alpha(z)$ and $\beta(z)$ previously characterised either in \textbf{a)} or \textbf{b)} actually coincide. In fact, let $z>0$ so that $z = (s+\xi^-(s))/2$; given the variations of the function $s \mapsto \xi^-(s)$ for $s > \zeta^+$, $s$ has to be sufficiently large for $z = (s+\xi^-(s))/2$ to be positive; this implies that we necessarily have $s=z-\alpha(z)$ where $\alpha(z)$ is the smallest root of polynomial $R(w,z)$. We can similarly prove that if $z^* = (s+\xi^+(s))/2>0$, then $s=z^*-\beta(z^*)$ where $\beta(z^*)$ is the second smallest root of $R(w,z)$.
\end{proof}

As solutions to a polynomial equation, algebraic functions $z \mapsto \alpha(z)$, $z \mapsto \beta(z)$ and $z \mapsto \gamma(z)$ can be analytically defined in $\C$ cut along some slits. Specifically, writing $R(z,w)$ as $R(z,w) = w^3+R_1(z)w^2+R_2(z)w+R_3(z)$ with coefficients $R_1(z)$, $R_2(z)$ and $R_3(z)$ defined by (\ref{cubiceq}) and introducing 
$$
\widetilde{P}(z) = R_2(z)-\frac{R^2_1(z)}{3}, \quad \widetilde{Q}(z)=R_3(z)-\frac{R_1(z)R_2(z)}{3}+\frac{2R_1^3(z)}{27},
$$
any solution $\epsilon(z) \in \{\alpha(z), \beta(z), \gamma(z)\}$ to $R(w,z) = 0$ can be expressed by Cardano's formula \cite[p.~16]{Cox} as
\begin{equation}
\label{Cardan}
\epsilon = -\frac{R_1}{3} + j^m \sqrt[3]{\frac{1}{2} \left ( - \widetilde{Q}+ \sqrt{- \frac{\Delta}{27} } \right )} + j^n \sqrt[3]{\frac{1}{2} \left ( - \widetilde{Q} - \sqrt{- \frac{\Delta}{27} } \right )},
\end{equation}
where $j = e^{2i\pi/3}$, the pair $(m,n)$ can take either value $(0,0)$, $(1,2)$ or $(2,1)$, and with discriminant $\Delta$ defined by $\Delta(z) = -4\widetilde{P}(z)^3-27\widetilde{Q}(z)^3$. Some algebra shows that discriminant $\Delta(z)$ factorizes as $\Delta(z) = (z+\mu)\delta(z)$ with 
\begin{align}
\delta(z) = 16(\lambda^2+\mu^2)z^3 & - (16\lambda^3-24\lambda^2\mu+24\lambda\mu^2-32\mu^3)z^2 
\nonumber \\ & +(4\lambda^4-4\lambda^3\mu+21\lambda^2\mu^2-20\lambda\mu^3+20\mu^4)z+\lambda^2\mu^3+4\mu^5.
\label{smalldelta}
\end{align}
The respective analyticity domains of functions $\alpha$, $\beta$ and $\gamma$ are  related to the roots of discriminant $\Delta(z)$, these roots defining the so-called ramification points for such algebraic functions.

\begin{lemma} 
\textbf{a)} Discriminant $\Delta(z)$ has four distinct roots, namely two real roots $\eta_1 \in \; ]-\mu,0[$ and $\eta_2 = -\mu$ and two complex conjugate roots $\eta_3$ and $\eta_4$. 
\\
\indent
\textbf{b)} Algebraic functions $\alpha$, $\beta$ and $\gamma$ are analytic on the cut plane $\C \setminus [\eta_2,\eta_1]$, $\C \setminus ([\eta_2,\eta_1]\cup[\eta_3,\eta_4])$ and $\C \setminus [\eta_3,\eta_4]$, respectively.
\label{discriminant}
\end{lemma}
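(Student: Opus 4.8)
\textbf{Proof plan for Lemma \ref{discriminant}.}

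The plan is to analyze the two factors of $\Delta(z) = (z+\mu)\delta(z)$ separately. The factor $(z+\mu)$ trivially contributes the real root $\eta_2 = -\mu$, so part \textbf{a)} reduces to showing that the cubic $\delta(z)$ in \eqref{smalldelta} has exactly one real root $\eta_1$, which moreover lies in $]-\mu,0[$, together with a pair of non-real complex conjugate roots $\eta_3,\eta_4$; one must also check that $\eta_1 \neq -\mu$ so that all four roots of $\Delta$ are distinct. For the location of the real root, I would evaluate $\delta$ at the endpoints: a direct substitution gives $\delta(0) = \lambda^2\mu^3 + 4\mu^5 > 0$ and $\delta(-\mu)$ should come out negative (this is a short computation using $\lambda < \mu$, the stability condition), so by the intermediate value theorem $\delta$ has a root in $]-\mu,0[$. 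To see that this is the \emph{only} real root, I would argue that the discriminant of the cubic $\delta$ is negative (equivalently, study the sign of $\delta'$ and show $\delta$ is either monotone or has its local extrema of the same sign on the real axis), so that $\delta$ has one real and two conjugate complex roots; alternatively one can invoke that a cubic with negative discriminant has precisely one real root. The fact that $\eta_1 \in \,]-\mu,0[$ is consistent with $\eta_2=-\mu$ being simple, and one checks $\delta(-\mu)\neq 0$ directly to confirm $\eta_1 \neq \eta_2$.

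For part \textbf{b)}, the idea is standard for algebraic functions defined by Cardano's formula \eqref{Cardan}: the branch points of $\alpha,\beta,\gamma$ can only occur at the zeros of the discriminant $\Delta(z)$, i.e.\ among $\eta_1,\eta_2,\eta_3,\eta_4$, since elsewhere $R(w,z)$ has three distinct simple roots that depend analytically on $z$ by the implicit function theorem. It then remains to decide, at each of the four ramification points, which pair of branches actually coalesces. The guiding principle is that when two roots merge as $z$ approaches a branch point, the corresponding analytic function acquires a square-root type singularity there and hence a slit must be drawn joining the two branch points at which the \emph{same} pair of branches collides. From Proposition \ref{Roots-R}\textbf{b)} we know that on the positive real axis the three roots are strictly ordered $\alpha(z) < -z \le \beta(z) \le 0 < \gamma(z)$, so no collision occurs for $z>0$; the collisions happen on the negative real axis and at the complex points. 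I would determine the behaviour near $\eta_1$ and $\eta_2=-\mu$ (the real branch points) by examining $R$ and its $w$-derivative there: near $-\mu$ one sees from \eqref{cubiceq} that $R(w,-\mu) = w^3-(\lambda+\mu)w^2 = w^2(w-\lambda-\mu)$, so $w=0$ is a double root, meaning $\alpha$ and $\beta$ merge at $\eta_2$; a similar local analysis at $\eta_1$ shows that again $\alpha$ and $\beta$ collide there (the smallest two roots), which forces the slit $[\eta_2,\eta_1]$ for both $\alpha$ and $\beta$ and explains why neither $\alpha$ nor $\beta$ has a branch point at $\eta_3,\eta_4$ while $\gamma$ does not have one at $\eta_1,\eta_2$. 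Finally, at the conjugate pair $\eta_3,\eta_4$ one shows the colliding branches are $\beta$ and $\gamma$, producing the slit $[\eta_3,\eta_4]$ for $\beta$ and $\gamma$ but not for $\alpha$. Assembling these four local pictures yields the claimed cut planes: $\alpha$ analytic on $\C\setminus[\eta_2,\eta_1]$, $\beta$ on $\C\setminus([\eta_2,\eta_1]\cup[\eta_3,\eta_4])$, and $\gamma$ on $\C\setminus[\eta_3,\eta_4]$.

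The main obstacle I anticipate is part \textbf{b)}: correctly matching up \emph{which} pair of branches coalesces at each of the four ramification points, and in particular verifying that $\alpha$ stays away from the complex slit $[\eta_3,\eta_4]$ while $\gamma$ stays away from $[\eta_2,\eta_1]$. This is not automatic from the mere vanishing of $\Delta$ and requires either a careful continuity/monotonicity argument along the real axis (using the ordering in Proposition \ref{Roots-R}\textbf{b)} as an anchor and tracking the roots as $z$ decreases through $0$ toward $-\mu$ and beyond) or a monodromy computation for the three-sheeted Riemann surface of $R(w,z)=0$. The factorization $R(w,-\mu)=w^2(w-\lambda-\mu)$ is the key concrete input that pins down the collision at $\eta_2$; an analogous but slightly more laborious computation of the double root of $R(\cdot,\eta_1)$ and of $R(\cdot,\eta_3)$ settles the remaining two, and the distinctness of the four roots from part \textbf{a)} guarantees the slits $[\eta_2,\eta_1]$ and $[\eta_3,\eta_4]$ are disjoint, so the three cut planes are well defined.
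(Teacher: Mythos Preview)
Your proposal for part \textbf{a)} is essentially identical to the paper's: evaluate $\delta(0)>0$ and $\delta(-\mu)<0$ to locate a root in $]-\mu,0[$ by the intermediate value theorem, then show the discriminant of the cubic $\delta$ is negative to conclude there is only one real root. Two minor remarks: the paper computes $\delta(-\mu)=-4\lambda\mu(\lambda+\mu)^3$ explicitly, so negativity does not actually require the stability condition $\lambda<\mu$; and the paper makes the discriminant computation fully explicit, obtaining $E=-2^{7}\mu^{14}\varrho\,E_0(\varrho)^3$ with $E_0(\varrho)=4\varrho^4-2\varrho^3+15\varrho^2-2\varrho+4>0$, which you would need to carry out as well.

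For part \textbf{b)} your route is more detailed than the paper's. The paper dispatches this in a single sentence: fix the determination of $\sqrt{\Delta}$ with $\sqrt{\Delta(0)}>0$ on $\C\setminus([\eta_2,\eta_1]\cup[\eta_3,\eta_4])$ and read off the three branches directly from Cardano's formula \eqref{Cardan}, asserting without further argument which cut each branch avoids. Your plan instead tracks the collisions locally --- using the nice factorization $R(w,-\mu)=w^2(w-\lambda-\mu)$ to pin down that $\alpha$ and $\beta$ merge at $\eta_2$, and proposing analogous local analyses at $\eta_1$ and at $\eta_3,\eta_4$ --- which is more transparent about \emph{why} $\alpha$ avoids the complex slit and $\gamma$ avoids the real one. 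The trade-off is that the paper's argument is shorter but leaves the branch identification implicit in the Cardano labeling, whereas yours is longer but genuinely explains the monodromy; your honest acknowledgment that matching branches at $\eta_3,\eta_4$ is the delicate step is well placed, since neither approach makes this entirely routine.
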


\begin{proof}
\textbf{a)} The point $\eta_2=-\mu$ is clearly a root of $\Delta(z) = (z + \mu)\delta(z)$ and it is simple since $\delta(-\mu)=-4\lambda\mu(\lambda+\mu)^3 \neq 0$ in view of expression (\ref{smalldelta}). Moreover, as the coefficient of the leading term of the cubic polynomial $\delta(z)$ is positive, as $\delta(0) > 0$  and $\delta(-\mu)<0$, discriminant $\Delta(z)$ has at least another negative real root $\eta_1$ between $-\mu$ and 0.

Besides, the discriminant of $\delta(z)$ is easily calculated as $E = -2^{7}\mu^{14}\varrho E_0(\varrho)^3$ with $E_0(\varrho) = 4\varrho^4-2\varrho^3+15\varrho^2-2\varrho+4$; as $E_0(\varrho) > -2 - 2 + 4 = 0$ for $0 < \varrho < 1$, we have $E < 0$. It then follows from \cite[Theorem 1.3.1]{Cox} that cubic polynomial $\delta(z)$ with real coefficients has only one real root, namely $\eta_1$, the two others $\eta_3$ and $\eta_4$ being complex conjugates.

\textbf{b)} By considering the analytic continuation of function $\sqrt{\Delta}$ such that $\sqrt{\Delta(0)}>0$ in $\C \setminus ([\eta_2,\eta_1]\cup[\eta_3,\eta_4])$, formulas (\ref{Cardan}) enable us to analytically continue function $\alpha$ to the cut plane $\C \setminus [\eta_2,\eta_1]$, function $\beta$ to the cut plane $\C \setminus ([\eta_2,\eta_1]\cup[\eta_3,\eta_4])$ and function $\gamma$ to the cut plane $\C \setminus [\eta_3,\eta_4]$, respectively.
\end{proof}

The graphs of functions $X^-:s \to (s+\xi^-(s))/2$ and $X^+:s \to (s+\xi^+(s))/2$ are illustrated in Fig.~\ref{courbexi} on interval $[\zeta^+,0]$. Function $X^+$ is increasing while function $X^-$ reaches its minimum at some point $s^*$; 
% (such that $T'(s^*)=-1$)
$X^-$ is decreasing on interval $]\zeta^+,s^*[$ and increasing on interval $]s^*,0[$. Recall from Proposition \ref{Roots-R} that $s=z-\alpha(z)$ entails $z= X^-(s)$;  conversely, we have $z= X^-(s)$ for $s \in [s^*,+\infty[$. Function $z \mapsto \alpha(z)$ is thus defined and regular for $z \in \; ]\eta_1,+\infty[$ where 
$\eta_1 = X^-(s^*) = (s^*+\xi^-(s^*))/2$. Using similar arguments, $z \mapsto \beta(z)$ is shown to be regular for $z > X^\pm(\zeta^+)$.

\begin{figure}[t]
%[hbtp]
\centering
%\scalebox{.33}{\includegraphics{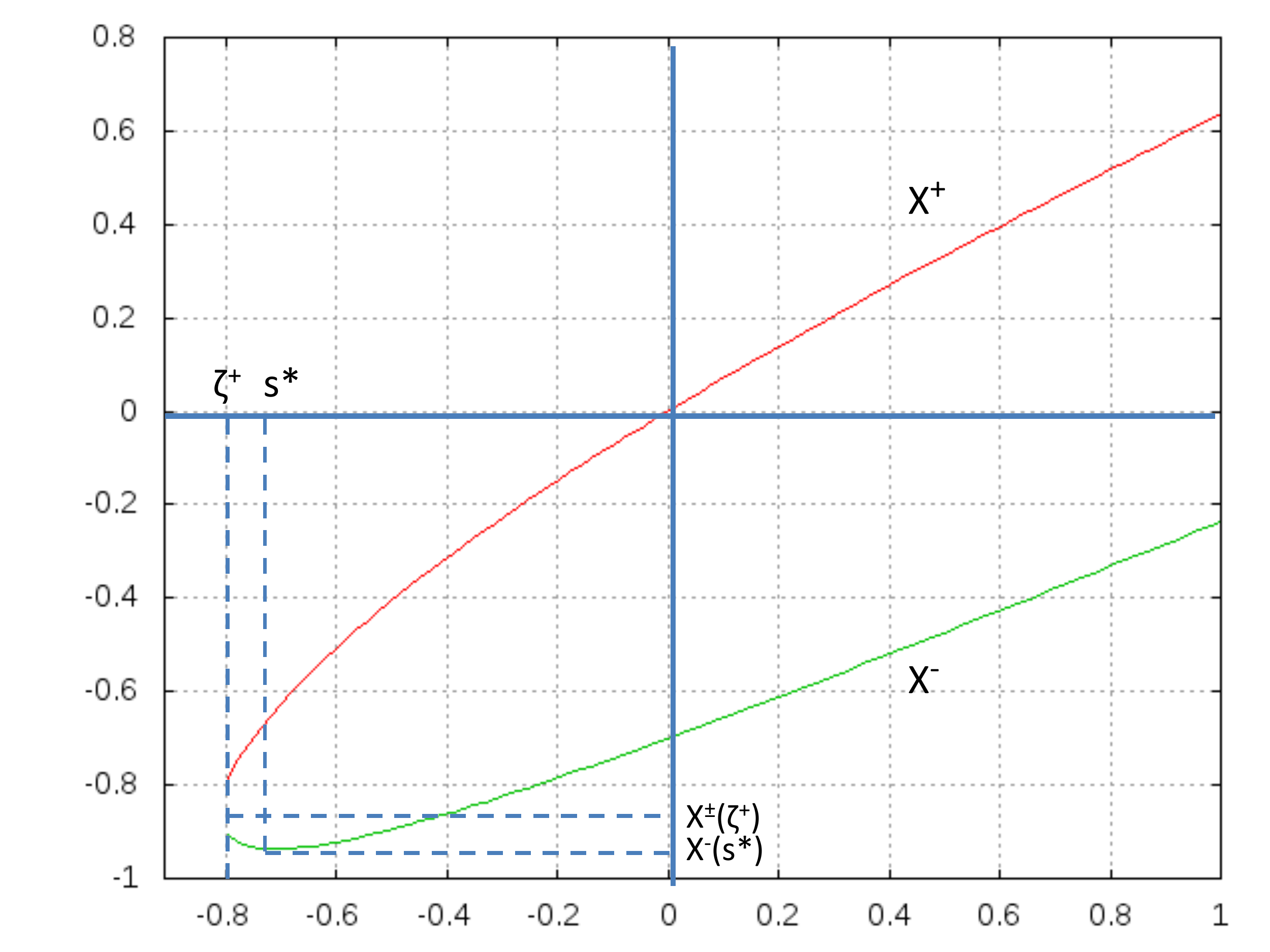}}
\scalebox{.40}{\includegraphics{xipm.pdf}}
\caption{Graphs of functions $X^\pm$ (for $\lambda=1.2$, $\mu=2$).\label{courbexi}}
\end{figure}

%%%%%%%%%%%%%%%%%%%%%%%%%%%%%%%%%%%%%%%%%%%%%%%%%%%%%%

\section{The SQF queue in the symmetric case}
\label{SQFqueue}

%%%%%%%%%%%%%%%%%%%%%%%%%%%%%%%%%%%%%%%%%%%%%%%%%%%%%%

On the basis of the preliminary results obtained in Section \ref{EXPI}, we are now ready to provide a final solution for auxiliary function $M$ (Section \ref{RSE}) and determine an extended analyticity domain (Section \ref{SS}), from which all relevant probabilistic properties for the symmetric queue can be derived (Sections \ref{EQP} and \ref{LQA}). 

%%%%%%%%%%%%%%%%%%%%%%%%%%%%%%%%%%%%%%%
\subsection{Real series expansion}
\label{RSE}

%%%%%%%%%%%%%%%%%%%%%%%%%%%%%%%%%%%%%%%

We first provide a series expansion for Laplace transform $G$ on some real interval of its definition domain. The proposition below states the core functional equation verified by function $M$.
\begin{prop}
Function $M$ defined by (\ref{Msym}) verifies the functional equation
\begin{equation}
M(z) = q(z) \cdot M \circ h(z) + L(z)
\label{FonctMsym}
\end{equation}
for $z > 0$, with
\begin{equation}
\left\{
\begin{array}{ll}
q(z) = \displaystyle \frac{\mu+\xi^-(s)}{\mu+\xi^+(s)},
\\ \\
L(z) = \displaystyle  (1-\varrho)\frac{s(\xi^+(s)-\xi^-(s))}{(s-\xi^+(s))(s-\xi^-(s))}\frac{\mu + \xi^-(s)}{\mu},
\\ \\
h(z) = \displaystyle \frac{\xi^+(s)+s}{2},
\label{defqLh}
\end{array} \right.
\end{equation}
where $s = z - \alpha(z)$ is the unique solution to equation $s + \xi^-(s) = 2z$.
\label{C1}
\end{prop}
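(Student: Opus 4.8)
The plan is to evaluate the kernel identity \eqref{kernelsym} at two carefully chosen points of the curve $\{K_1=0\}$ and to eliminate the univariate transform $G$ between the two relations so obtained. Fix $z>0$ and let $s=z-\alpha(z)$ be the solution of $s+\xi^-(s)=2z$ furnished by Proposition~\ref{Roots-R}, so that $s>0$, $\tfrac12(\xi^-(s)+s)=z$ and, by the definition of $h$ in \eqref{defqLh}, $\tfrac12(\xi^+(s)+s)=h(z)$. I would take the two evaluation points to be $(\xi^+(s),s)$ and $(\xi^-(s),s)$. The starting remark is algebraic: in the symmetric case $K$ is symmetric, and $\xi^+(s),\xi^-(s)$ are precisely the two roots of $\xi=K(\xi,s)$; hence $K_1(\xi^\pm(s),s)=\xi^\pm(s)-K(\xi^\pm(s),s)=0$ and $K_2(\xi^\pm(s),s)=s-K(\xi^\pm(s),s)=s-\xi^\pm(s)$. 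Thus, as soon as $F_0$ is finite at the chosen point, \eqref{kernelsym} reduces to $(s-\xi^\pm(s))\,G(s)=J(s)+H(\xi^\pm(s),s)$.

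For the $+$ choice this is immediate since $(\xi^+(s),s)\in\mathbf{\Omega}$ for $s>0$. The $-$ choice is the one genuinely delicate point, because $\xi^-(s)<0$ places $(\xi^-(s),s)$ outside $\mathbf{\Omega}$. Here the plan is to observe that $\xi^-(s)+s=2z>\widetilde{s}$ for $z>0$, whence $F_0$ still converges absolutely (so is analytic) at $(\xi^-(s),s)$ by the domination estimate behind Corollary~\ref{extensions}, while $G$, $H$ and the explicit data $K_1,K_2,J$ are analytic there as well --- the possible poles at $s_1=-\mu$ or $s_2=-\mu$ being avoided, since \eqref{eqxi} shows $\xi^-(s)=-\mu$ only at $s=-\mu$. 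Identity \eqref{kernelsym}, valid on $\mathbf{\Omega}$, therefore extends to $(\xi^-(s),s)$ by analytic continuation on the connected region on which all these functions are holomorphic; as $F_0$ is then finite there and $K_1(\xi^-(s),s)=0$, the $F_0$-term drops out and the $-$-relation follows.

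Next I would insert the closed form \eqref{Hsym} for $H$, use $\tfrac12(\xi^-(s)+s)=z$ and $\tfrac12(\xi^+(s)+s)=h(z)$, and divide by the nonzero factors $s-\xi^\pm(s)$ to obtain two expressions for $G(s)$:
$$
G(s)=\frac{J(s)}{s-\xi^+(s)}+\frac{\lambda\mu}{2(\mu+\xi^+(s))(\mu+s)}\,M(h(z))=\frac{J(s)}{s-\xi^-(s)}+\frac{\lambda\mu}{2(\mu+\xi^-(s))(\mu+s)}\,M(z).
$$
Equating the two right-hand members eliminates $G(s)$; collecting the $M$-terms on one side and using $\tfrac1{s-\xi^+(s)}-\tfrac1{s-\xi^-(s)}=\tfrac{\xi^+(s)-\xi^-(s)}{(s-\xi^+(s))(s-\xi^-(s))}$ turns the $J$-side into a single fraction.

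The final simplification rests on the identity $(s-\xi^+(s))(s-\xi^-(s))=s\,(s+\mu(1-\varrho))$, which follows by evaluating the monic quadratic $(\xi-\xi^+(s))(\xi-\xi^-(s))$ at $\xi=s$ with the Vieta formulas read off from \eqref{eqxi} and factoring, together with $J(s)=\tfrac{\lambda}{2}(1-\varrho)\tfrac{s}{s+\mu}$ from \eqref{JJ}. After this the $J$-side collapses; cancelling the common factor $\tfrac{\lambda}{2(\mu+s)}$, dividing by $\mu$, and multiplying through by $\mu+\xi^-(s)$ yields
$$
M(z)=\frac{\mu+\xi^-(s)}{\mu+\xi^+(s)}\,M(h(z))+\frac{(1-\varrho)(\mu+\xi^-(s))(\xi^+(s)-\xi^-(s))}{\mu\,(s+\mu(1-\varrho))},
$$
which is precisely \eqref{FonctMsym}: the coefficient of $M(h(z))$ is $q(z)$, and reinserting $(s-\xi^+(s))(s-\xi^-(s))=s(s+\mu(1-\varrho))$ identifies the last term with $L(z)$ as written in \eqref{defqLh}. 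The single real obstacle is the justification of the $-$-relation, i.e.\ evaluating \eqref{kernelsym} legitimately at the point $(\xi^-(s),s)$ lying outside $\mathbf{\Omega}$ via the extension of $F_0$; the remaining steps are routine manipulations of the quadratic \eqref{eqxi}.
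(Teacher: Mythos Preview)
Your argument is correct and follows essentially the same route as the paper: evaluate the kernel relation at the two points $(\xi^+(s),s)$ and $(\xi^-(s),s)$ lying on $\{K_1=0\}$, obtain two expressions for $G(s)$ (the paper's display \eqref{MSymequ}), and eliminate $G(s)$ between them. The paper invokes Corollary~\ref{C.2} directly rather than restarting from \eqref{kernelsym}, but this is the same mechanism. If anything, you are more explicit than the paper in justifying why the evaluation at $(\xi^-(s),s)\notin\mathbf{\Omega}$ is legitimate via the analytic continuation of $F_0$ from Corollary~\ref{extensions}; the paper passes over this point silently.
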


\begin{proof}
Applying equation \eqref{FonctG2} successively to points $(s_1,s_2) = (\xi^+(s),s)$ and $(s_1,s_2) = (\xi^-(s),s)$ with identical ordinate $s$, we obtain
\begin{equation}
\left\{
\begin{array}{ll}
(s-\xi^+(s))G(s) = \displaystyle \frac{\lambda(1-\varrho)s}{2(s+\mu)} + \displaystyle  \frac{\lambda \mu(s - \xi^+(s))}{2(\mu+\xi^+(s))(\mu+s)}M\left( \frac{\xi^+(s)+s}{2} \right),
\\
\\
(s-\xi^-(s))G(s) = \displaystyle \frac{\lambda(1-\varrho)s}{2(s+\mu)} + \displaystyle  \frac{\lambda \mu(s - \xi^-(s))}{2(\mu+\xi^-(s))(\mu+s)}M\left( \frac{\xi^-(s)+s}{2} \right)
\end{array} \right.
\label{MSymequ}
\end{equation}
after using expression (\ref{JJ}) for $J(s)$ and formula (\ref{Hsym}) for $H(s_1,s_2)$; equations (\ref{MSymequ}) hold for sufficiently large $s$ so that $2z = \xi^-(s) + s$ is positive. Using the fact that  $s = z - \alpha(z)$. Equating the common value of $G(s)$ from (\ref{MSymequ}) and using the fact that $h(z) = (\xi^+(s)+s)/2$ gives functional equation (\ref{FonctMsym}).
\end{proof}
By Proposition \ref{Roots-R}.a, $\xi^-(s) = z + \alpha(z)$ depends on the branch $\alpha(z)$ only. As $\xi^+(s)\xi^-(s) = -\lambda\mu/2(s+\mu)$ in view of defining equation (\ref{eqxi}), definition (\ref{defqLh}) for $h(z)$ further gives
\begin{equation}
h(z) = \frac{\xi^+(s)+s}{2} = \frac{z-\alpha(z)}{2}\left[1-\frac{\lambda\mu}{2(z+\alpha(z))(z-\alpha(z)+\mu)}  \right],
\label{hz}
\end{equation} 
and a similar rational expression is derived from (\ref{defqLh}) for $L(z)$ in terms of $\alpha(z)$. As a consequence, given functions $q$, $L$, and $h$ depend only on the branch $\alpha$ of cubic equation $R(w,z) = 0$. Note also that by the notation introduced in inversion relations (\ref{z-s})-(\ref{z-sinvert}), $h(z)$ just coincides with $z^*$; the mapping $z \mapsto h(z) = z^*$ is now introduced in view of its iterated composition, as will be shown in the central result below.

\begin{theorem}
The Laplace transform $G$ can be expressed as
\begin{equation}
G(s) = \frac{\lambda}{2(s+\mu)} \left[ \frac{s(1-\varrho)}{s - \xi^-(s)} + \frac{\mu}{\mu + \xi^-(s)} M(z) \right]
\label{Gsym}
\end{equation}
for sufficiently large real $s$ so that $z = (\xi^-(s)+s)/2 > 0$ and where $M(z)$ is given by the series expansion
\begin{equation}
M(z) = \sum_{k=0}^{+\infty} \; \prod_{\ell = 0}^{k-1} q(h^{(\ell)}(z)) \cdot L(h^{(k)}(z))
\label{Mseries}
\end{equation}
with functions $q$, $L$ and $h$ defined by \eqref{defqLh}, and $h^{(k)} = h \circ .... \circ h$ denoting the $k$-th iterate of function $h$.
\label{resolsym}
\end{theorem}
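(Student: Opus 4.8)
The formula~\eqref{Gsym} for $G$ requires essentially no additional work: the second line of~\eqref{MSymequ} in the proof of Proposition~\ref{C1} reads
\[
(s-\xi^-(s))\,G(s) \;=\; \frac{\lambda(1-\varrho)s}{2(s+\mu)} \;+\; \frac{\lambda\mu\,(s-\xi^-(s))}{2(\mu+\xi^-(s))(\mu+s)}\,M(z),\qquad 2z = s+\xi^-(s),
\]
and dividing by $s-\xi^-(s)$ — which is nonzero for large real $s$, where $\xi^-(s)<0<s$ — yields~\eqref{Gsym} after factoring out $\lambda/(2(s+\mu))$. The substance of the theorem is thus the series representation~\eqref{Mseries}, which I would obtain by iterating the functional equation~\eqref{FonctMsym} and showing that the remainder of the iteration vanishes.

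The starting point is that $h$ maps $(0,+\infty)$ into itself: for $z>0$ one has $s=z-\alpha(z)>0$ since $\alpha(z)<0$, hence $\xi^+(s)>0$ and $h(z)=(s+\xi^+(s))/2>0$, by Proposition~\ref{Roots-R} and the remarks following~\eqref{hz}. Consequently $h^{(k)}(z)$ is well defined and positive for all $k\ge0$, so~\eqref{FonctMsym} may be applied at each point $h^{(\ell)}(z)$, and an immediate induction gives, for every $n\ge1$ and $z>0$,
\[
M(z) \;=\; \sum_{k=0}^{n-1}\Bigl(\prod_{\ell=0}^{k-1}q\bigl(h^{(\ell)}(z)\bigr)\Bigr)\,L\bigl(h^{(k)}(z)\bigr) \;+\; \Bigl(\prod_{\ell=0}^{n-1}q\bigl(h^{(\ell)}(z)\bigr)\Bigr)\,M\bigl(h^{(n)}(z)\bigr).
\]
It then remains to let $n\to\infty$, i.e.\ to prove that the final term tends to $0$ and that the series converges absolutely.

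This rests on the escape of the iterates and on a uniform bound $q<1$. Since $\xi^+(s)>\xi^-(s)$ on the relevant range, $h(w)-w=(\xi^+(s)-\xi^-(s))/2>0$ for every $w>0$, so $(h^{(n)}(z))_n$ is strictly increasing and cannot converge to a finite limit $L$ (continuity of $h$ would force $h(L)=L$, contradicting $h(L)>L$); hence $h^{(n)}(z)\to+\infty$. Next, for $w>0$ with $s=w-\alpha(w)$ one has $\xi^-(s)<0<\xi^+(s)$ — the product of the roots of~\eqref{eqxi} being $-\lambda\mu s/(2(s+\mu))<0$ while $\xi^+(s)>0$ — together with $\mu+\xi^-(s)=\mu+w+\alpha(w)>0$, the latter because $\alpha(w)>-(w+\mu)$, which follows from $R(-(w+\mu),w)=-\lambda\mu(w+\mu)<0$ and Proposition~\ref{Roots-R}.b placing $-(w+\mu)$ to the left of the smallest root $\alpha(w)$; therefore $0<q(w)<1$ for all $w>0$. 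Adding the elementary limits (read off~\eqref{xi+-} and~\eqref{defqLh}) that $q(w)\to q_\infty\in(0,1)$ and $L(w)\to0$ as $w\to+\infty$, continuity on the ray $[z,+\infty)$ gives $c_z:=\sup_{w\ge z}q(w)<1$ and $\Lambda_z:=\sup_{w\ge z}|L(w)|<\infty$.

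The conclusion is then routine. As every $h^{(\ell)}(z)\ge z$, one has $\prod_{\ell=0}^{k-1}q(h^{(\ell)}(z))\le c_z^{\,k}$, so the series in~\eqref{Mseries} is dominated by $\Lambda_z\sum_{k\ge0}c_z^{\,k}$ and converges absolutely. For the remainder, the map $w\mapsto M(w)=G(2w+\mu)+F_0(2w+\mu,-\mu)$ is nonnegative and nonincreasing on $[z,+\infty)$ — both summands being nonincreasing in $w$ (their integrands decrease pointwise) and finite there by the analyticity of $M$ on the half-plane $\{\Re(z)>\widetilde{s}/2\}$ noted after Lemma~\ref{extendGF0} — whence $0\le\bigl(\prod_{\ell=0}^{n-1}q(h^{(\ell)}(z))\bigr)M(h^{(n)}(z))\le c_z^{\,n}\,M(z)\to0$. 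Letting $n\to\infty$ in the iterated identity produces~\eqref{Mseries}. I expect the delicate part to be precisely the control at infinity — $h^{(n)}(z)\to+\infty$ and, above all, $\sup_{w\ge z}q(w)<1$, which is what forces the infinite product to collapse — both resting on the asymptotics of the roots of the cubic $R(w,z)=0$ and of $\xi^{\pm}(s)$ as $s\to+\infty$.
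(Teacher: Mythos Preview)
Your proof is correct and follows essentially the same route as the paper: iterate~\eqref{FonctMsym}, show $h^{(n)}(z)\uparrow+\infty$ via monotonicity and absence of fixed points, and kill the remainder using $q<1$ together with boundedness of $M$ along the orbit. Your arguments for $h(w)>w$ (via $\xi^+>\xi^-$) and for the uniform bound $\sup_{w\ge z}q(w)<1$ (via $0<q<1$ plus $q(w)\to q_\infty\in(0,1)$) are slightly more direct than the paper's, which uses formula~\eqref{hz} with $\alpha(z)<-z$ and the limit $q\to r<1$ respectively, but the overall strategy is the same.
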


\begin{proof}
Iterating functional equation (\ref{FonctMsym}) for $z>0$ yields
\begin{equation}
M(z) = \sum_{k=0}^{K} \; \prod_{\ell = 0}^{k-1} q(h^{(\ell)}(z)) \cdot L(h^{(k)}(z)) + E^{(K)}(z)
\label{quotientbis}
\end{equation}
(the product being equal to 1 for $k = 0$), with remainder
$$
E^{(K)}(z) = \prod_{k=0}^{K} q(h^{(k)}(z)) \cdot M(h^{(K+1)}(z)).
$$
To show that $E^{(K)}(z) \rightarrow 0$ as $K \uparrow +\infty$, let us fix some $z>0$. We first prove that the sequence $z^{(k)} =  h^{(k)}(z)$, $k \geq 0$, is strictly increasing and tends to $+\infty$ when $k \uparrow +\infty$. In fact, 
as $\alpha(z)<-z$ for $z>0$ by Proposition \ref{Roots-R}.b, we deduce from expression (\ref{hz}) that $h(z)>z$ for $z > 0$ and the sequence $z^{(k)} = h^{(k)}(z)$, $k \geq 0$, is thus strictly increasing. Moreover, if that sequence were upper bounded, it would tend to a finite limit $z_\infty$ such that $h(z_\infty) = z_\infty$ and the number  $s_\infty=z_\infty-\alpha(z_\infty)$ is positive; but using expression (\ref{hz}) for $h(z)$, equality $h(z_\infty) = z_\infty$ reduces to
$$
z_\infty = \frac{s_\infty}{2} \left [ 1 - \frac{\lambda\mu}{2(2z_\infty-s_\infty)(s_\infty+\mu)} \right ],
$$
or equivalently
$$
(2z_\infty - s_\infty)^2 = - \frac{\lambda\mu s_\infty}{2(s_\infty + \mu)},
$$
and the latter would define a simultaneously positive and negative quantity, a contradiction. We thus conclude that $z^{(k)} \to +\infty$ when $k \uparrow +\infty$.

Besides, we derive from definition (\ref{defqLh}) for $q$ that $\lim_{z \uparrow +\infty} q(z) = r$, where
$$
r = \frac{\lambda+\mu-\sqrt{\lambda^2+\mu^2}}{\lambda+\mu+\sqrt{\lambda^2+\mu^2}} < 1.
$$
By definition (\ref{Msym}) of function $M$, the sequence $M(h^{(k)}(z)) = M(z^{(k)})$, $k \geq 0$, is bounded since both $G$ and $F_0$ vanish at infinity as Laplace transforms of regular densities. It follows that remainder $E^{(K)}(z)$ is  $O \big [ M(h^{(K+1)}(z))r^K \big ] = O(r^K)$ and therefore tends to 0 as $K \uparrow +\infty$. The finite sum in (\ref{quotientbis}) thus converges as $K \uparrow +\infty$.

Formula (\ref{Gsym}) for $G(s)$ eventually follows from the latter expansion inserted into second  equation (\ref{MSymequ}). 
\end{proof}

%%%%%%%%%%%%%%%%%%%%%%%%%%%%%%%%%%%%%%%%%%%%%%%%%%%%%%%%%%%%

\subsection{Analytic extension}
\label{SS}

%%%%%%%%%%%%%%%%%%%%%%%%%%%%%%%%%%%%%%%%%%%%%%%%%%%%%%%%%%%%

We now specify the smallest singularity of Laplace transform $G$; to this end, we first deal with the analyticity domain of auxiliary function $M$. Recall by definition (\ref{Msym}) that $M$ is known to be analytic at least in the half-plane $v_M = \{z \in \mathbb{C} \; \mid \; \Re(z)>\tilde{s}/2\}$, where $\tilde{s}$ is defined by \eqref{s*sym}.

\begin{prop} 
Function $M$ can be analytically continued to the half-plane $V_M $ (with $v_M \varsubsetneq V_M$) defined by
\begin{itemize}
\item[\textit{\textbf{a)}}] $V_M = \Big\{ z \in \mathbb{C} \mid \; \Re(z) > \frac{1}{2}\left( \sigma_0 - \frac{\mu}{2} \right) \Big\}$ in case $\varrho > 1/2$, where we set $\sigma_0 = -\mu(1-\varrho)$;
\item[\textit{\textbf{b)}}] $V_M = \Big\{z \in \mathbb{C} \mid \; \Re(z) > \eta_1 \Big\}$ 
in case $\varrho \leq 1/2$, where $\eta_1 < 0$ is the largest real root of discriminant $\Delta(z)$.
\end{itemize}
\label{extendM}
\end{prop}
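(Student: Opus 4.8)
The plan is to use the series representation \eqref{Mseries} of $M$ together with the dynamics of the iteration map $h$ to push the analyticity domain past the barrier $v_M$. The key point is that although each individual term $\prod_{\ell=0}^{k-1} q(h^{(\ell)}(z))\cdot L(h^{(k)}(z))$ is only defined \emph{a priori} where the relevant branch $\alpha$ of the cubic is analytic (i.e. on $\mathbb{C}\setminus[\eta_2,\eta_1]$, by Lemma~\ref{discriminant}), the whole sum $M(z)$ is controlled by the behaviour of $h$ near $+\infty$: since $\lim_{z\uparrow+\infty}q(z)=r<1$, the tail of the series converges geometrically. So I would first establish that the functional equation \eqref{FonctMsym}, $M(z)=q(z)M\circ h(z)+L(z)$, admits an analytic continuation by a \emph{single backward step}: if $M$ is known to be analytic on a domain $D$, and $h$ maps a larger domain $D'$ analytically into $D$ with $q,L$ analytic on $D'$, then the right-hand side defines an analytic extension of $M$ to $D'$. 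Iterating this observation, $M$ extends to $\bigcup_{k\ge0} h^{(-k)}(v_M)$ intersected with the common domain of analyticity of $q$, $L$, $h$ (the branch $\alpha$).

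The core of the argument is therefore the analysis of the dynamical system $z\mapsto h(z)$ near the relevant real fixed point, and identifying the preimage set $\bigcup_k h^{(-k)}(v_M)$. From \eqref{hz} and Proposition~\ref{Roots-R}, $h(z)>z$ for real $z>0$ and $h$ has no finite fixed point on $(0,\infty)$; the interesting fixed point is the one "below", at the place where the branch $\alpha$ ceases to be analytic, i.e. at $z=\eta_1$ (the ramification point) when $\varrho\le1/2$, and at the value $\tfrac12(\sigma_0-\tfrac\mu2)$ coming from the pole of $G$ at $\sigma_0=-\mu(1-\varrho)$ when $\varrho>1/2$. In case \textbf{b)} I would check that $h$ is analytic and strictly increasing on the real interval $(\eta_1,+\infty)$ with $h(\eta_1^+)\ge\eta_1$ (using that $\eta_1$ is precisely the minimum value $X^-(s^*)$ noted after Lemma~\ref{discriminant}, so that $h=X^+\circ(X^-)^{-1}$ is well-defined there), so that every half-plane $\{\Re z>\eta_1+\delta\}$ eventually maps into $v_M$ under enough iterates of $h$; combined with the geometric tail bound this gives analyticity on $\{\Re z>\eta_1\}$, and one checks this cannot be pushed further because $\eta_1\in\;]\eta_2,\eta_1]$... more precisely $z=\eta_1$ is a genuine branch point of $\alpha$ appearing in $L$ and $h$. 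In case \textbf{a)}, the obstruction is instead the polar singularity of $G$ (hence of $M$ via \eqref{Msym} and \eqref{Gsym}) at $\xi^-(s)=s$, i.e. at $s=\sigma_0$, which via $z=(s+\xi^-(s))/2=s$ corresponds to... here one must be careful: the relevant value of $z$ is obtained from $2z=\xi^-(s)+s$ at the fixed point, giving $z=\sigma_0$ when the branch is $\xi^-$, but the singularity of $M(z)=G(2z+\mu)+F_0(2z+\mu,-\mu)$ sits at $2z+\mu=\sigma_0$, i.e. $z=\tfrac12(\sigma_0-\mu)$; one then verifies the barrier for the \emph{continued} $M$ is the stated $\tfrac12(\sigma_0-\tfrac\mu2)$ after tracking how $h$ transports this singularity, this bookkeeping being where care is needed.

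Concretely the steps are: (i) prove the one-step continuation lemma for \eqref{FonctMsym}; (ii) show $q$, $L$, $h$ are analytic on $\mathbb{C}\setminus[\eta_2,\eta_1]$ (immediate from Lemma~\ref{discriminant} since they are rational in $z$ and $\alpha(z)$); (iii) analyze the real dynamics of $h$ on $(\eta_1,\infty)$ and show $\bigcup_k h^{(-k)}(v_M)\supseteq\{\Re z>\eta_1\}$ in case \textbf{b)}, resp. the analogous half-plane in case \textbf{a)}, using monotonicity near the fixed point and an estimate $|h(z)-h(z')|$ bounded below to control complex preimages; (iv) bound the series tail by $O(r^k)$ uniformly on compacta of the extended domain, exactly as in the proof of Theorem~\ref{resolsym}, to conclude uniform convergence and hence analyticity of the limit; (v) locate the obstructing singularity (branch point $\eta_1$ for \textbf{b)}, pole at $2z+\mu=\sigma_0$ for \textbf{a)}) to see the domain $V_M$ is maximal of half-plane type. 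The main obstacle I anticipate is step (iii) together with (v): controlling the \emph{complex} preimages $h^{(-k)}(v_M)$ — showing their union is actually a half-plane and not some smaller domain with a complicated boundary — and correctly tracking through the iteration how the polar singularity at $s=\sigma_0$ translates into the precise abscissa $\tfrac12(\sigma_0-\tfrac\mu2)$ in the $z$-variable. The geometric-convergence part (iv) is routine given Theorem~\ref{resolsym}.
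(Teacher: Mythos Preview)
Your approach is genuinely different from the paper's, and considerably harder. The paper does \emph{not} iterate the functional equation \eqref{FonctMsym} or invoke the series \eqref{Mseries} at all. Instead it uses the \emph{second} equation of \eqref{MSymequ} directly, rewritten with $s=z-\alpha(z)$ and $\xi^-(s)=z+\alpha(z)$ as
\[
M(z)=\frac{z+\alpha(z)+\mu}{\mu}\left[\frac{2(z-\alpha(z)+\mu)}{\lambda}\,G\bigl(z-\alpha(z)\bigr)+(1-\varrho)\,\frac{z+\alpha(z)+\mu}{2\alpha(z)}\right].
\]
This expresses $M(z)$ as an explicit rational function of $z$, $\alpha(z)$, and $G(z-\alpha(z))$. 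Since $\alpha$ is analytic on $\{\Re(z)>\eta_1\}$ by Lemma~\ref{discriminant}, and $G$ is \emph{already known} to be analytic on $\{\Re(s)>\widetilde{s}\}$ from the stochastic-domination bound (Lemma~\ref{extendGF0}), the only work left is to check (i) $\alpha(z)\neq0$ on the relevant half-plane and (ii) $\Re(z-\alpha(z))>\widetilde{s}$ there. Both are short real-variable verifications: in case \textbf{a)} one computes $\xi^-(\sigma_0)=-\mu/2$, so $z-\alpha(z)>\sigma_0$ exactly when $z>\tfrac12(\sigma_0-\tfrac\mu2)$; in case \textbf{b)} one checks $\sigma_0<s^*$ so that $z-\alpha(z)>\sigma_0$ automatically for $z>\eta_1$. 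No iteration, no dynamics, no series tail bounds.

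Your route via backward iteration of $h$ could in principle succeed, but the obstacle you yourself identify in step (iii)---showing that $\bigcup_k h^{(-k)}(v_M)$ is a genuine half-plane by controlling the \emph{complex} preimages of $h$---is a real gap, not a routine verification. Note in particular that a single backward step does not suffice: at the target boundary $z_0=\tfrac12(\sigma_0-\tfrac\mu2)$ one has $s=\sigma_0$ and $h(z_0)=(\sigma_0+\xi^+(\sigma_0))/2=\sigma_0$, which lies \emph{outside} $v_M=\{\Re(z)>\sigma_0/2\}$ since $\sigma_0<\sigma_0/2<0$. So you would need to understand the full orbit structure of $h$ on a complex neighbourhood of the real interval $[\tfrac12(\sigma_0-\tfrac\mu2),0]$, and there is no reason to expect this to be easy. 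Your confusion in step (v) about whether the obstruction sits at $z=\tfrac12(\sigma_0-\mu)$ or $\tfrac12(\sigma_0-\tfrac\mu2)$ is a symptom of this: the former is where the \emph{defining} expression \eqref{Msym} for $M$ would naively break down, but the \emph{continued} $M$ (via the displayed formula above) has its actual singularity at the latter, coming from $G(z-\alpha(z))$ hitting its pole. The paper's direct approach makes this bookkeeping transparent; yours obscures it behind the dynamics of $h$.
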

The proof of Proposition \ref{extendM} is detailed in Appendix \ref{A4}. We now turn to transform $G$ and determine its singularities with smallest module. Recall by Corollary \ref{extensions} that $G$ has no singularity in $\{s \in \mathbb{C} \; \mid \; \Re(s) > \widetilde{s}\}$.

\begin{theorem}
\label{domainG}
The singularity with smallest module of transform $G$ is
\begin{itemize}
\item[\textbf{a)}] For $\varrho > 1/2$, a simple pole at $s =\sigma_0 = -\mu(1-\varrho)$ with leading term
\begin{equation}
G(s) \sim \frac{r_0}{s-\sigma_0}
\label{resGsym1}
\end{equation}
with $r_0 = \mu(1-\varrho)(2\varrho - 1)/4$;	
\item[\textbf{b)}] For $\varrho < 1/2$, an algebraic singularity at $s = \zeta^+$ with leading term
\begin{equation}
G(s) - G(\zeta^+) \sim r^+(s-\zeta^+)^{1/2}
\label{resGsym2}
\end{equation}
at first order in $\sqrt{s-\zeta^+}$, where factor $r^+$ is given by
$$
r^+ = \frac{\lambda \sqrt{ (\lambda^2 + \mu^2)(\zeta^+ - \zeta^-)}}{4(\zeta^+ +\mu)^2} \left [ \frac{\zeta^+(1-\varrho)}{(\zeta^+ - a^+)^2} - \frac{\mu M(z^+)}{(\mu + a^+)^2} \right ]
$$
with constants $a^+ = -\mu + \sqrt{\lambda\mu/2}$, $2z^+ = \zeta^+ + a^+$ and where $\zeta^+$, $\zeta^-$ are given in (\ref{zeta1+-}).
\end{itemize}
\label{Gsing}
\end{theorem}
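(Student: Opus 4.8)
The plan is to reduce the whole statement to the behaviour of $G$ at the single point $s=\widetilde s$ of \eqref{s*sym}. Since $G$ is analytic on $\{s\in\C\mid\Re(s)>\widetilde s\}$ by Corollary~\ref{extensions} and Lemma~\ref{extendGF0}, and $\widetilde s<0$ is real, any singularity $s_0$ of $G$ satisfies $\Re(s_0)\le\widetilde s$ and hence $|s_0|^2\ge\Re(s_0)^2\ge\widetilde{s}^{\,2}$, with equality only at $s_0=\widetilde s$; so it will suffice to exhibit a singularity of $G$ at $s=\widetilde s$ and to identify its type, every other singularity being automatically of strictly larger modulus. For the local analysis I would use the representation \eqref{Gsym}, which remains valid near $s=\widetilde s$ by analytic continuation from large real $s$ (with the cut $[\zeta^-,\zeta^+]$ deleted when $\varrho\le1/2$), invoking the analyticity of $\xi^-$ off $[\zeta^-,\zeta^+]$ (Lemma~\ref{extendXI}) and the analyticity domain of $M$ from Proposition~\ref{extendM}; write $X^-(s)=(s+\xi^-(s))/2$ for the argument of $M$ in \eqref{Gsym}.

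For $\varrho>1/2$ (so $\widetilde s=\sigma_0=-\mu(1-\varrho)$) I would first note that at $s=\sigma_0$ one has $\sigma_0+\mu=\lambda$ and, by \eqref{eqxi}, the two roots $\sigma_0$ (since $\sigma_0=\xi^+(\sigma_0)$ for $\varrho\ge1/2$) and $-\mu/2$, so $\xi^-(\sigma_0)=-\mu/2$; consequently the three rational factors $\lambda/(2(s+\mu))$, $(1-\varrho)s/(s-\xi^-(s))$, $\mu/(\mu+\xi^-(s))$ in \eqref{Gsym} are analytic and non-zero at $\sigma_0$, and the singularity of $G$ there can only come from $M\circ X^-$. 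Since $X^-(\sigma_0)=\tfrac12(\sigma_0-\mu/2)$ is exactly the abscissa bounding the half-plane $V_M$ of Proposition~\ref{extendM}(a), I would next argue that $M$ has a \emph{simple} pole at $z_0:=\tfrac12(\sigma_0-\mu/2)$: the $k=0$ term of \eqref{Mseries} is $L(z)$, which by \eqref{defqLh} contains the factor $1/(s-\xi^+(s))$ with $s=z-\alpha(z)$ and hence a simple pole at $z_0$ (where $s=\sigma_0$, $\xi^+(\sigma_0)=\sigma_0$, while $\xi^+-\xi^-$ and $1/(s-\xi^-(s))$ stay regular and non-zero), the terms $k\ge1$ remaining analytic at $z_0$ — this being precisely the obstruction that stops the continuation of $M$ through \eqref{FonctMsym} in the proof of Proposition~\ref{extendM}(a), so $\mathrm{Res}_{z_0}M=\mathrm{Res}_{z_0}L$. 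Passing to the variable $z=X^-(s)$, $G$ then inherits a simple pole at $\sigma_0$, and $r_0=\mathrm{Res}_{s=\sigma_0}G$ follows from $\mathrm{Res}_{z_0}L$ via $\mathrm{d}z=X^-{}'(\sigma_0)\,\mathrm{d}s$ with $X^-{}'(\sigma_0)=\tfrac12(1+\xi^-{}'(\sigma_0))$ and $\xi^-{}'(\sigma_0)$ obtained by implicit differentiation of $\xi=K(\xi,s)$; after substituting $\lambda=\varrho\mu$ everything becomes rational in $\varrho$ and should collapse to $r_0=\mu(1-\varrho)(2\varrho-1)/4$.

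For $\varrho<1/2$ (so $\widetilde s=\zeta^+$) the mechanism is different: the rational factors of \eqref{Gsym} are still regular at $\zeta^+$, but $\xi^-$, and hence $X^-$, has a square-root branch point there, because the discriminant in \eqref{xi+-} factorizes as $D(s)=(\lambda^2+\mu^2)(s-\zeta^-)(s-\zeta^+)$ and vanishes simply at $s=\zeta^+$. Writing $a^+:=\xi^-(\zeta^+)=\xi^+(\zeta^+)=-\mu+\sqrt{\lambda\mu/2}$ for the double root of \eqref{eqxi} at $s=\zeta^+$, I would use
\[
\xi^-(s)=a^+-\frac{\sqrt{(\lambda^2+\mu^2)(\zeta^+-\zeta^-)}}{2(\zeta^++\mu)}\,(s-\zeta^+)^{1/2}+O(s-\zeta^+),
\]
hence $X^-(s)=z^++O\bigl((s-\zeta^+)^{1/2}\bigr)$ with $z^+:=\tfrac12(\zeta^++a^+)$. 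Since $z^+=X^-(\zeta^+)$ exceeds the minimum $\eta_1=X^-(s^*)$ of $X^-$ (the two could coincide only if $\zeta^+=s^*$, impossible as $\xi^-{}'(s)\to-\infty$ when $s\to\zeta^+$ whereas $\xi^-{}'(s^*)=-1$), the point $z^+$ lies strictly inside $V_M=\{\Re(z)>\eta_1\}$ of Proposition~\ref{extendM}(b); so $M$ is analytic at $z^+$ and $M\circ X^-$ is a convergent power series in $(s-\zeta^+)^{1/2}$ near $\zeta^+$. Plugging these expansions into \eqref{Gsym} and picking out the coefficient of $(s-\zeta^+)^{1/2}$ gives $G(s)-G(\zeta^+)\sim r^+(s-\zeta^+)^{1/2}$, the prefactor $\lambda\sqrt{(\lambda^2+\mu^2)(\zeta^+-\zeta^-)}/\bigl(4(\zeta^++\mu)^2\bigr)$ being $\lambda/\bigl(2(\zeta^++\mu)\bigr)$ times the leading square-root coefficient of $\xi^-$; the bracketed factor will emerge after substituting $\xi^\pm(\zeta^+)=a^+$ and simplifying the resulting expression for the coefficient, in which the local data of $M$ at $z^+$ (accessible through \eqref{Msym}, and through the way the continuation of $M$ in Proposition~\ref{extendM}(b) is built near $z^+$) must be reduced to a dependence on $M(z^+)$ alone, yielding the stated formula for $r^+$.

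The first paragraph is routine; in the case $\varrho>1/2$ the only genuinely delicate point is confirming that $M$ carries an honest simple pole at $z_0$ (neither removable nor of higher order), which is exactly what the continuation argument behind Proposition~\ref{extendM}(a) delivers, after which the residue computation is a bounded piece of algebra. The main obstacle, as I see it, lies in the case $\varrho<1/2$: getting the correct signs and coefficients in the $(s-\zeta^+)^{1/2}$-expansions of $\xi^-$ and $X^-$, verifying carefully that $M\circ X^-$ is genuinely a power series in $(s-\zeta^+)^{1/2}$ (which rests on the inequality $z^+>\eta_1$), and — hardest of all — carrying through the algebraic simplification of the raw $(s-\zeta^+)^{1/2}$-coefficient into the compact closed form for $r^+$, reconciling the a priori appearance of the first-order data of $M$ at $z^+$ with its disappearance from the final answer by exploiting the construction $M(z)=G(2z+\mu)+F_0(2z+\mu,-\mu)$ near $z^+$, a fixed point of the map $h$ of \eqref{defqLh}.
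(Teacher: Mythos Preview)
Your overall strategy (locate the unique real boundary point $\widetilde s$ and analyse $G$ locally there) is the same as the paper's, but you pick the \emph{other} representation of $G$: you work with \eqref{Gsym}, i.e.\ with the branch $\xi^-$ and the argument $X^-(s)=(s+\xi^-(s))/2$ of $M$, whereas the paper switches to the companion formula \eqref{Gsymbis} built on $\xi^+$ and the argument $h(z)=X^+(s)=(s+\xi^+(s))/2$. For $\varrho>1/2$ this difference is not cosmetic. In \eqref{Gsymbis} the pole at $\sigma_0$ sits in plain sight in the factor $1/(s-\xi^+(s))$ (since $\xi^+(\sigma_0)=\sigma_0$), while $M\circ h$ is evaluated at $h(z)\to\sigma_0$, a point that lies \emph{strictly inside} $V_M$ (indeed $\sigma_0>\tfrac12(\sigma_0-\mu/2)$ precisely when $\varrho>1/2$); so Proposition~\ref{extendM}(a) already gives the analyticity of $M\circ h$ for free, and the residue $r_0$ drops out from $1/(1-\xi^{+\prime}(\sigma_0))$ alone. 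In your approach the rational pieces of \eqref{Gsym} are regular at $\sigma_0$ and the whole singularity has to be manufactured inside $M$ at the boundary point $z_0=\tfrac12(\sigma_0-\mu/2)$ of $V_M$; Proposition~\ref{extendM}(a) only tells you $M$ is analytic \emph{past} $z_0$, not what happens \emph{at} $z_0$, so you must supply an extra argument. Your proposed route through the series \eqref{Mseries} is shaky because the convergence of that series is proved only for real $z>0$, and the proof of Proposition~\ref{extendM}(a) in Appendix~\ref{A4} does not go via \eqref{FonctMsym} or the series at all but via the relation \eqref{MG} --- which is just \eqref{Gsym} solved for $M$, hence circular for your purpose. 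The clean fix is to use the functional equation \eqref{FonctMsym} in a single step (not the full series): since $h(z_0)=\sigma_0$ lies strictly inside $V_M$, the term $q(z)M(h(z))$ is analytic at $z_0$, while $L$ carries the simple pole from $1/(s-\xi^+(s))$; this does give $\mathrm{Res}_{z_0}M=\mathrm{Res}_{z_0}L$ as you want, but it also requires checking that $\alpha$ (hence $q,h,L$) is analytic at $z_0$, i.e.\ $z_0>\eta_1$. All of this extra layer disappears in the paper's $\xi^+$-formulation.

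For $\varrho<1/2$ the two routes are genuinely parallel --- both rest on the square-root expansion of $\xi^\pm$ at $\zeta^+$ and on $M$ being analytic at $z^+=\tfrac12(\zeta^++a^+)$, which you justify correctly via $z^+=X^-(\zeta^+)>\eta_1$. You are right to single out the $M'(z^+)$ contribution as the delicate point: a naive first-order expansion of either \eqref{Gsym} or \eqref{Gsymbis} produces a term $\tfrac{\mu M'(z^+)}{2(\mu+a^+)}E_0$ in the coefficient of $(s-\zeta^+)^{1/2}$, which the paper's displayed $r^+$ does not contain; the paper's proof simply replaces $M\circ h(z)$ by the constant $M(z^+)$ at that order and does not discuss the $M'$ piece. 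Your instinct that the fixed-point structure $h(z^+)=z^+$, $q(z^+)=1$, $L(z^+)=0$ should be what reconciles this is on the right track, but turning that into a clean identity for $M'(z^+)$ is not spelled out in the paper either, so be aware that this step is where your write-up will need the most care --- the paper does not provide a template you can follow here.
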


\begin{proof}
Consider again the two following cases:
\\
\indent
\textbf{a)} if $\varrho > 1/2$, write the 1st equation (\ref{MSymequ}) as
\begin{equation}
G(s) = \frac{\lambda}{2(s+\mu)} \left[ \frac{s(1-\varrho)}{s - \xi^+(s)} + \frac{\mu}{\mu + \xi^+(s)} M \circ h(z) \right];
\label{Gsymbis}
\end{equation}
as $s \rightarrow \sigma_0$, we have $\xi^+(s) \rightarrow \sigma_0$ while $h(z) = (s + \xi^+(s))/2 \rightarrow \sigma_0$. Proposition \ref{extendM} then ensures that $M \circ h$ is analytic at $z = \sigma_0$ since $(\sigma_0 - \mu/2)/2 < \sigma_0$ for $\varrho > 1/2$. As $G(s)$ has no singularity for $\Re(s) > \sigma_0$, we conclude from expression (\ref{Gsymbis}) that $G$ has a simple pole at $s = \sigma_0$ with residue
$$
r_0 = \frac{\lambda}{2(\sigma_0 + \mu)} \left [ \frac{\sigma_0(1-\varrho)}{1-\xi^+{'} (\sigma_0)} \right ]
$$
where $\sigma_0 = -\mu(1-\varrho)$. Differentiating formula ~\eqref{xi+-} for $\xi^+(s)$ at $s = \sigma_0$, we further calculate $\xi^{+}{'}(\sigma_0) = 1/(2\varrho - 1)$; residue $r_0$ in leading term (\ref{resGsym1}) then follows;
\\
\indent 
\textbf{b)} if $\varrho < 1/2$, let $s \rightarrow \sigma_0$ so that $\xi^+(s) \rightarrow -\mu/2$ and $h(z) = (s + \xi^+(s))/2 \rightarrow z_0$ where $2z_0 = \sigma_0 - \mu/2$. Proposition \ref{extendM} then ensures that $M$ is analytic at $z = z_0$ since $z_0 > \eta_1$. We conclude from expression (\ref{Gsymbis}) and the latter discussion that $\sigma_0$ is not a singularity of $G$. 
\\
\indent
By definition~\eqref{xi+-} of $\xi^+(s)$, where $D(s)$ is factorized as $D(s) = D_0(s-\zeta^-)(s-\zeta^+)$ with $D_0 = (\mu-\lambda)^2+2\lambda\mu = \lambda^2 + \mu^2$, we obtain
$$
\xi^+(s) = a^+ + E_0\sqrt{s-\zeta^+} + o(s-\zeta^+)^{1/2},
$$
where $a^+ = -\mu + \sqrt{\lambda\mu/2}$ and with constant $E_0=\sqrt{D_0(\zeta^+-\zeta^-)}/2(\mu+\zeta^+)$. By expression (\ref{Gsymbis}) for $G(s)$, we then obtain
\begin{align}
G(s) & = \frac{\lambda}{2(\zeta^+ + \mu)} \left [ \frac{\zeta^+(1-\varrho)}{\zeta^+-a^+-E_0 \sqrt{s-\zeta^+}+ ...} + \frac{\mu M(z^+)}{\mu+a^+ + E_0 \sqrt{s-\zeta^+}+ ...}\right ]
\nonumber \\
& = G(\zeta^+) + r^+(s-\zeta^+)^{1/2} + ...
\nonumber
\end{align} 
since $z = (s + \xi^-(s))/2 \rightarrow z^+$ with $z^+$ defined in (\ref{resGsym2}). Expansion (\ref{resGsym2}) then follows with associated factor $r^+$; we conclude that the singularity with smallest module of $G$ is $\zeta^+$, an algebraic singularity with order 1.
\end{proof}

%%%%%%%%%%%%%%%%%%%%%%%%%%%%%%%%%%%%%%%%%%%%%%%%%%%%%%

\subsection{Empty queue probability}
\label{EQP}

%%%%%%%%%%%%%%%%%%%%%%%%%%%%%%%%%%%%%%%%%%%%%%%%%%%%%%

The  results obtained in the previous section enable us to give a closed-form expression for the empty queue probability in terms of auxiliary function $M$ only.
\begin{prop}
Probability $G(0) = \mathbb{P}(U_1 > 0, U_2 = 0)$ is given by
\begin{equation}
G(0) =  M \left ( \frac{\lambda - 2\mu}{4}\right )
\label{G(0)}
\end{equation}
with $M$ given by series expansion (\ref{Mseries}).
\label{G0sym}
\end{prop}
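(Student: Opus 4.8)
The plan is to specialise the closed-form expression~\eqref{Gsym} for $G$ obtained in Theorem~\ref{resolsym} to $s=0$, once it has been checked that this expression — stated there only for sufficiently large real $s$ — extends analytically down to $s=0$.

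First I would determine $\xi^-(0)$. Directly from~\eqref{xi+-}, and since $\mu^{2}-\lambda\mu/2=\mu(\mu-\lambda/2)>0$ by the stability condition $\lambda<\mu$, one gets $\xi^{+}(0)=0$ and $\xi^-(0)=(\lambda-2\mu)/2$, the latter being strictly negative and different from $0$ and $-\mu$. Hence, under the correspondence $s=z-\alpha(z)$ of Proposition~\ref{Roots-R}, the value $s=0$ is associated with $z_{*}:=X^-(0)=\xi^-(0)/2=(\lambda-2\mu)/4$, which is precisely the argument appearing in~\eqref{G(0)}; consistently, $\alpha(z_{*})=z_{*}$, since a direct computation gives $R(z_{*},z_{*})=-\mu z_{*}\,(4z_{*}+2\mu-\lambda)=0$.

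Next I would justify the analytic continuation of the identity~\eqref{Gsym}, read as an equality between two functions of $s$. By Theorem~\ref{domainG} (already by Lemma~\ref{extendGF0} and Corollary~\ref{extensions}), $G$ is analytic on $\{\Re(s)>\widetilde{s}\}$ with $\widetilde{s}<0$, hence near $s=0$; by Lemma~\ref{extendXI}, $\xi^-$ is analytic on $\mathbb{C}\setminus[\zeta^{-},\zeta^{+}]$, and $\zeta^{+}<0$; by Proposition~\ref{extendM}, $M$ is analytic on $V_{M}$. One then checks $z_{*}\in V_{M}$ in both regimes: for $\varrho>1/2$, using $\mu\varrho=\lambda$, the inequality $z_{*}>\tfrac{1}{2}(\sigma_{0}-\mu/2)$ is equivalent to $\mu>\lambda$, which holds; for $\varrho\le 1/2$, since $0\in\;]\zeta^{+},+\infty[$ while $\eta_{1}=X^-(s^{*})$ is the minimum of $X^-$ on $]\zeta^{+},+\infty[$ (see the discussion after Lemma~\ref{discriminant} and Figure~\ref{courbexi}), one gets $z_{*}=X^-(0)>\eta_{1}$. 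It then follows that both $G(s)$ and the right-hand side of~\eqref{Gsym} are analytic on a connected neighbourhood of a real segment joining a large value of $s$ to $s=0$ on which they coincide, so by the identity theorem they coincide at $s=0$.

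Finally, evaluating~\eqref{Gsym} at $s=0$: the term $s(1-\varrho)/(s-\xi^-(s))$ vanishes, its numerator being $0$ and $-\xi^-(0)=(2\mu-\lambda)/2\neq0$; moreover $\mu+\xi^-(0)=\lambda/2\neq0$, so
$$
G(0)=\frac{\lambda}{2\mu}\cdot\frac{\mu}{\lambda/2}\,M(z_{*})=M\!\left(\frac{\lambda-2\mu}{4}\right),
$$
with $M$ given by the series~\eqref{Mseries}. The identification $G(0)=\mathbb{P}(U_{1}>0,U_{2}=0)$ is immediate from $G=G_{1}$ and the probabilistic form $G_{1}(s_{1})=\mathbb{E}\bigl[e^{-s_{1}U_{1}}\ind_{\{0=U_{2}<U_{1}\}}\bigr]$. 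I expect the main obstacle to be the bookkeeping of this analytic continuation: confirming that $z_{*}$ lies strictly inside $V_{M}$ (for $\varrho\le 1/2$ this amounts to $z_{*}\neq\eta_{1}$, i.e.\ $\delta(z_{*})\neq0$) and that the right-hand side of~\eqref{Gsym} has no genuine singularity between large $s$ and $0$; in particular, at a turning point of $X^-$ where $X^-=\eta_{1}$ is met, the (square-root) branch point of $M$ there is cancelled by the quadratic vanishing of $X^-(s)-\eta_{1}$, so the composition $M\circ X^-$ stays analytic there.
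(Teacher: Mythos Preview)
Your proof is correct and, in fact, more direct than the paper's. The paper does \emph{not} evaluate~\eqref{Gsym} at $s=0$; instead it uses the companion identity~\eqref{Gsymbis} built from $\xi^{+}$ rather than $\xi^{-}$. Since $\xi^{+}(0)=0$, the first bracketed term there is a $0/0$ form, and the paper resolves it via $\xi^{+\prime}(0)=\varrho/(\varrho-2)$ to obtain the intermediate expression
\[
G(0)=\frac{\lambda}{2\mu}\Bigl[(1-\varrho)\bigl(1-\tfrac{\varrho}{2}\bigr)+M(0)\Bigr],
\]
and only then invokes the functional equation~\eqref{FonctMsym} at $s=0$ (i.e.\ $z=z_{*}=(\lambda-2\mu)/4$, $h(z)=0$) to recognise the bracket as $M(z_{*})$. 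Your route bypasses both the derivative computation and the use of~\eqref{FonctMsym}: plugging $s=0$ into~\eqref{Gsym} kills the first bracketed term outright (since $s-\xi^{-}(s)\to(2\mu-\lambda)/2\neq0$) and the prefactors collapse to~$1$. The price you pay is the explicit verification that $z_{*}\in V_{M}$ via Proposition~\ref{extendM}, which you carry out cleanly in both regimes; note that the paper's argument ultimately needs the same fact (its last step evaluates $M$ at $z_{*}$), it just leaves that check implicit. Your closing remark about a possible turning point of $X^{-}$ between large $s$ and $0$ is over-cautious: since $s^{*}<0$ (cf.\ the discussion around Figure~\ref{courbexi} and Appendix~\ref{A4}), the path $s\in[0,+\infty[$ never meets $s^{*}$, so no branch-point cancellation argument is needed there.
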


\begin{proof}
Apply relation (\ref{Gsymbis}) for $G(s)$ with $s = 0$; as $\xi^+(0) = 0$, we then derive that $h(z) = (0 + \xi^+(0))/2 = 0$ hence
\begin{equation}
G(0) = \frac{\lambda}{2\mu} \left [ \frac{1-\varrho}{1-\xi^{+}{'}(0)} + M(0) \right ];
\label{G(0)bis}
\end{equation}
differentiating formula ~\eqref{xi+-} for $\xi^+(s)$ at $s = 0$ gives $\xi^+{'}(0) = \varrho/(\varrho - 2)$ so that the first term inside brackets in (\ref{G(0)bis}) reduces to $(1-\varrho)(1 - \varrho/2)$. Now, applying (\ref{FonctMsym}) to value $s = 0$ (with corresponding pair $z = (s + \xi^-(s))/2 = (\lambda-2\mu)/4$ and $h(z) = (s + \xi^+(s))/2 = 0$) shows that the right-hand side of (\ref{G(0)bis}) also equals $M(z) = M((\lambda-2\mu)/4)$, as claimed. 
\end{proof}
By (\ref{P0}) and (\ref{G(0)}), we derive the probability 
$$
\mathbb{P}(U_1 = 0) = 1 - \varrho + G(0)
$$
that either queue $\sharp 1$ or $\sharp 2$ is empty. 

We depict in Figure~\ref{FigG0} the variations of $\mathbb{P}(U_1 = 0)$ in terms of load $\varrho = \lambda/\mu$ when fixing $\mu = 1$ (for comparison, the black dashed line represents the empty queue probability $\mathbb{P}(U = 0) = 1 - \varrho$  for the unique queue aggregating all jobs from either class $\sharp 1$ or $\sharp 2$). The numerical results show that $\mathbb{P}(U_1 = 0)$ decreases to a positive limit, approximately $0.251...$, when $\varrho$ tends to 1; this can be interpreted by saying that, while the global system is unstable and sees excursions of either variable $U_1$ or $U_2$ to large values, one of the queues remains less than the other for a large  period of time and has therefore a positive probability to be emptied by the server.

\begin{figure}[hbtp]
\scalebox{1}{\includegraphics[width=15cm, trim = 50 500 0 50 cm,clip]{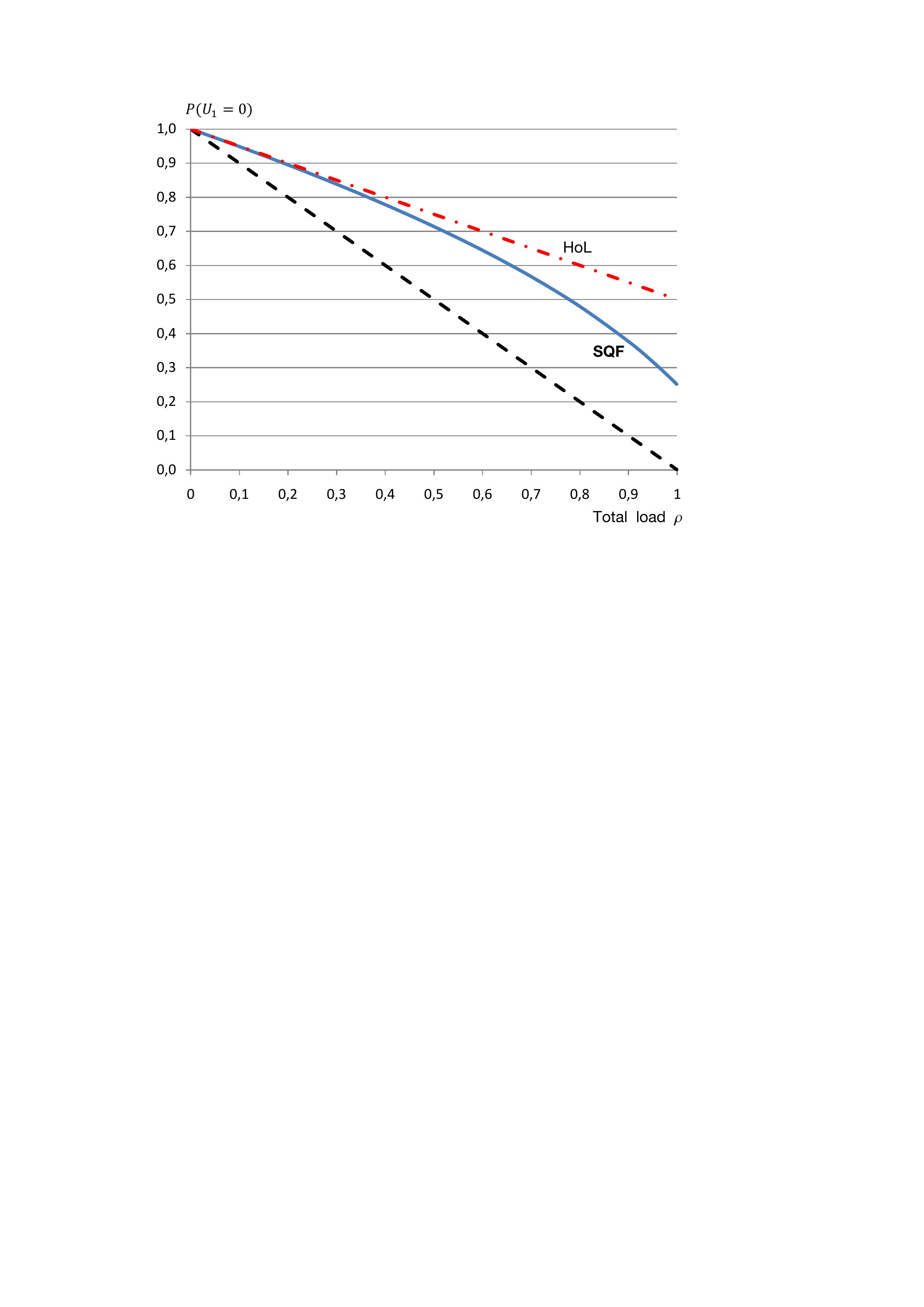}}
\caption{\textit{For the symmetric case, empty queue probability $\mathbb{P}(U_1 = 0) = \mathbb{P}(U_2 = 0)$ as a function of total load $\varrho$ (fixing $\mu = 1$).}}
%columnwidth
\label{FigG0}
\end{figure}

Furthermore, the red dashed line depicts the empty queue probability 
\begin{equation}
\mathbb{P}(\underline{U}_1 = 0) = 1 - \varrho_1 = 1 - \frac{\varrho}{2}
\label{HoLP0}
\end{equation}
if the server were to apply a  preemptive HoL policy with highest priority given to queue $\sharp 1$; following lower bound (\ref{stochlowup}), we have $\mathbb{P}(U_1 = 0) \leq \mathbb{P}(\underline{U}_1 = 0)$. We further notice that for $\varrho \uparrow 1$, the positive limit of $\mathbb{P}(U_1 = 0)$ derived above for SQF is close enough to the maximal limit $0.5$ of $\mathbb{P}(\underline{U}_1 = 0)$. The above observations consequently show that the SQF policy compares favorably to the optimal HoL policy by guaranteeing a non vanishing empty queue probability for each traffic class at high load.

%%%%%%%%%%%%%%%%%%%%%%%%%%%%%%%%%%%%%%%%%%%%%%%%%%%%%%%%%%%        

\subsection{Large queue asymptotics}
\label{LQA}

%%%%%%%%%%%%%%%%%%%%%%%%%%%%%%%%%%%%%%%%%%%%%%%%%%%%%%%%%%%

We finally derive asymptotics for the distribution of workload $U_1$ or $U_2$ in either queue, i.e., the estimates of tail probabilities $\mathbb{P}(U_1 > u)$ for large queue content $u$. We shall invoke the following Tauberian theorem relating the singularities of a Laplace transform to the asymptotic behavior of its inverse \cite[Theorem 25.2, p.237]{Doe}.

\begin{theorem} Let $F$ be a Laplace transform and $\omega$ be its singularity with smallest module, with $F(s) \sim \kappa_0 (s - \omega)^{\nu_0}$ as $s \rightarrow \omega$ for $\kappa_0 \neq 0$ and $\nu_0 \notin \mathbb{N}$ (replace $F$ by $F - F(\omega)$ if $F(\omega)$ is finite). The Laplace inverse $f$ of $F$ is then estimated by 
$$
f(u) \sim \frac{\kappa_0}{\Gamma(-\nu_0)} \frac{ e^{\omega u}}{u^{\nu_0 + 1}}
$$
for $u \uparrow +\infty$, where $\Gamma$ denotes Euler's $\Gamma$ function.
\label{Tauberian}
\end{theorem}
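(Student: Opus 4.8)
This is Doetsch's Tauberian theorem \cite[Theorem~25.2]{Doe}; the plan is to recall its mechanism, which applies to the transforms at hand since, by Theorems~\ref{resolsym} and \ref{Gsing}, each is analytic in a half-plane to the right of its smallest singularity $\omega$, has at $\omega$ either a pole or an isolated branch point (with a cut issuing from it), and vanishes at infinity there. First I would reduce to the case $\omega = 0$: setting $g(u) = e^{-\omega u} f(u)$ gives $\mathcal{L}[g](s) = F(s+\omega)$, which is $\sim \kappa_0 s^{\nu_0}$ as $s \to 0$, so it suffices to prove $g(u) \sim \kappa_0\, u^{-\nu_0-1}/\Gamma(-\nu_0)$; undoing the shift at the end reinstates the factor $e^{\omega u}$.

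Second, I would record the model pair, valid for every $\nu_0 \notin \mathbb{N}$,
$$
\mathcal{L}^{-1}\!\left[ s^{\nu_0} \right](u) = \frac{u^{-\nu_0-1}}{\Gamma(-\nu_0)},
$$
which is elementary for $\nu_0 < 0$ and follows for $\nu_0 > 0$ by analytic continuation in $\nu_0$ (equivalently, from Hankel's loop representation of $1/\Gamma$). This identifies the announced asymptotics as the exact Laplace inverse of the \emph{singular part} $\kappa_0 s^{\nu_0}$ of the transform.

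The core step is the Tauberian estimate. Starting from the Bromwich integral $g(u) = \frac{1}{2\pi i}\int_{c-i\infty}^{c+i\infty} e^{su}\,\mathcal{L}[g](s)\,\mathrm{d}s$, I would deform the vertical line onto a Hankel-type loop that comes in from $-\infty$ below the cut at the origin, circles $0$ at a small radius $\delta$, and returns to $-\infty$ above the cut, crossing no other singularity because $\omega$ has smallest module (for a pole this deformation merely collects the residue, and the argument is immediate). On the portion of the loop with $|s| \ge \delta$ one has $\Re(s) \le -\delta' < 0$, so that portion is $O(e^{-\delta' u})$ and negligible. On the portion with $|s| \le \delta$, the substitution $s = t/u$ turns the term $\kappa_0 s^{\nu_0}$ into exactly $\kappa_0\, u^{-\nu_0-1}/\Gamma(-\nu_0)$ (up to an exponentially small truncation tail), whereas the remainder $\mathcal{L}[g](s) - \kappa_0 s^{\nu_0} = o(s^{\nu_0})$ contributes only $o(u^{-\nu_0-1})$ by dominated convergence after the same rescaling; adding the two pieces yields the claim.

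The delicate point will be the contour deformation together with the uniformity of the $o(\cdot)$ remainder near $\omega$: one has to know that the transform extends analytically not just along a ray but to a full slit neighbourhood of $\omega$, and that it tends to $0$ as $|s| \to \infty$ in $\{\Re(s) \ge \Re(\omega)\}$ off the cut, so that the Bromwich contour may legitimately be collapsed onto the Hankel loop. In the present application both facts are supplied by the explicit representation of $G$ from Theorem~\ref{resolsym} and the local expansions of Theorem~\ref{Gsing}, which is exactly what licenses the passage from the singularity data of $G$ to the tail of $\mathbb{P}(U_1 > u)$.
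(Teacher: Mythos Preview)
Your sketch is a reasonable outline of the contour-deformation argument behind this Tauberian theorem, but the paper does not prove this statement at all: it is quoted verbatim from Doetsch \cite[Theorem~25.2, p.~237]{Doe} and used as a black box. So there is no ``paper's own proof'' to compare against; the authors simply invoke the result.

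That said, a brief comment on your write-up: the mechanism you describe (shift to $\omega=0$, identify the model pair $\mathcal{L}^{-1}[s^{\nu_0}]=u^{-\nu_0-1}/\Gamma(-\nu_0)$, collapse the Bromwich line onto a Hankel loop around the branch point, and control the remainder) is indeed the standard route and matches Doetsch's argument. Your final paragraph correctly flags the genuine hypotheses one needs---analytic continuation to a slit neighbourhood of $\omega$ and decay at infinity in the relevant half-plane---and correctly points to Theorems~\ref{resolsym} and~\ref{Gsing} as the place where these are verified in the present application. If you want to keep a proof in the text, what you have is adequate as a sketch; otherwise, matching the paper means replacing it by the bare citation.
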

\noindent
Note that the fact that $F(\omega)$ is finite or not does not change the estimate of inverse $f$ at infinity. Before using that theorem for the tail behavior of either $U_1$ or $U_2$, we first state some simple bounds for their distribution tail. 

%%%%%%%%%%%%%%%%%%%%%%%%%%%%%%%%%%%%%%%%%%%%%%%%%%%%%%%%%%%%%%%%%%%%

The global workload $U = U_1 + U_2$ is identical to that in an $M/M/1$ queue with arrival rate $\lambda$ and service rate $\mu$. The complementary distribution function of $U$ is therefore given by $\mathbb{P}(U > u) = \varrho e^{\sigma_0 u}$
for all $u \geq 0$, with $\sigma_0 = -\mu(1-\varrho)$; the distribution tail of workload $U_1$ or $U_2$ therefore decreases at least exponentially fast at infinity.

Following upper bound (\ref{stochlowup}) relating $U_1$ to variable $\overline{U}_1$ corresponding to a HoL service policy with highest priority given to queue $\sharp 2$, we further have 
\begin{equation}
\mathbb{P}(U_1 > u_1) \leq \mathbb{P}(\overline{U}_1 > u_1)
\label{Rough1}
\end{equation}
for all $u_1 \geq 0$. The Laplace transform of $\overline{U}_1$ is given by Equation~\eqref{UHoL} and is meromorphic in the cut plane $\mathbb{C} \setminus [\zeta^-,\zeta^+]$, with a possible pole at $\sigma_0 = -\mu(1-\rho)$.  Specifically, the application of Theorem \ref{Tauberian} shows that the tail behavior of $\overline{U}_1$ is given by
\begin{equation}
\mathbb{P}(\overline{U}_1 > u) = 
\left\{
\begin{array}{lll}
O(e^{\sigma_0 u}) & \; \mbox{ if } \; \varrho > \displaystyle \frac{1}{2},
\\
\\
\displaystyle O\left(\frac{e^{-\frac{\mu}{2} u}}{\sqrt{u}}\right) & \; \mbox{ if } \; \varrho = \displaystyle \frac{1}{2},
\\
\\
\displaystyle O\left(\frac{e^{\zeta^+ u}}{u^{3/2}}\right) & \; \mbox{ if } \; \varrho < \displaystyle \frac{1}{2},
\end{array} \right.
\label{asymptHoL}
\end{equation} 
for large $u$. The tail behavior of $\overline{U}_1$, and therefore $U_1$, may therefore be either exponential or subexponential according to system parameters. We precisely have the following result.

\begin{theorem} The workload in queue $\sharp 1$ is such that
\begin{equation}
\mathbb{P}(U_1 > u) \sim \left\{
\begin{array}{lll}
\displaystyle \left ( \varrho - \frac{1}{2} \right ) \cdot e^{\sigma_0u} & \mathrm{if} & \varrho > \displaystyle \frac{1}{2},
\\
\\
 \displaystyle \frac{1}{\sqrt{2\pi}} \cdot \frac{e^{-\frac{\mu}{2} u}}{\sqrt{\mu u}} & \mathrm{if} & \varrho = \displaystyle \frac{1}{2},
\\
\\
\displaystyle \displaystyle \kappa \cdot \frac{e^{\zeta^+ u}}{u^{3/2}} & \mathrm{if} & \varrho < \displaystyle \frac{1}{2},
\end{array} \right. 
\label{asymptsym}
\end{equation}
for large $u$, with constants $\sigma_0 = -\mu(1-\varrho)$ and 
$$
\kappa = \frac{(\zeta^+ + \mu)r^+}{\zeta^+\lambda \sqrt{\pi}},
$$
where $\zeta^+ \leq \sigma_0$ is given by (\ref{zeta1+-}) and $r^+$ by  (\ref{resGsym2}).
\label{Asymptoticssym}
\end{theorem}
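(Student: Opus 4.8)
The plan is to derive the tail asymptotics of $U_1$ from the singularity structure of its Laplace transform via the Tauberian Theorem~\ref{Tauberian}, combining the three pieces already assembled: the expression for $G$ in Theorem~\ref{resolsym}, the identification of its smallest-module singularity in Theorem~\ref{Gsing}, and the stochastic bounds of Proposition~\ref{domstoch}. First I would recover the univariate Laplace transform of $U_1$ from the functional equations. Setting $s_2 = 0$ (or $s_1 = 0$) in \eqref{kernelsym} and using $F_0$, $G$, together with relation \eqref{Ldef}, gives $\E(e^{-s U_1}) = \E(e^{-sU_1}\ind_{\{0<U_1<U_2\}}) + \E(e^{-sU_1}\ind_{\{0=U_2<U_1\}}) + \P(U_1=0) = F_0(s,0) + G(s) + (1-\varrho)$ (after matching the symmetric-case notation). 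The key point is that the singularity with smallest module of $\E(e^{-sU_1})$ coincides with that of $G$, because $F_0(s,0)$ is, through \eqref{kernelsym}, a rational combination of $G$ and $M\circ(\cdot)$ whose only new singularities lie further left than $\widetilde s$; this is where one must be careful, and I would check it using the analyticity domain $\widetilde{\mathbf\Omega}$ of Lemma~\ref{extendGF0} and the extended domain $V_M$ of Proposition~\ref{extendM}, exactly as in the proof of Theorem~\ref{Gsing}.

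With that reduction in hand, the three regimes follow by applying Theorem~\ref{Tauberian} to the leading behaviour of $\E(e^{-sU_1})$ at its dominant singularity. For $\varrho > 1/2$: by Theorem~\ref{Gsing}a the transform has a simple pole at $\sigma_0 = -\mu(1-\varrho)$ with residue $r_0 = \mu(1-\varrho)(2\varrho-1)/4$; since a simple pole corresponds to $\nu_0 = -1$, the Tauberian theorem (or elementary residue calculus, the case $\nu_0\in\N$ being handled directly) yields $\P(U_1>u)\sim (r_0/|\sigma_0|)e^{\sigma_0 u}$. One computes $r_0/|\sigma_0| = r_0/(\mu(1-\varrho)) = (2\varrho-1)/4$; I would then reconcile this with the claimed prefactor $\varrho - 1/2$ — the discrepancy of a factor $2$ must be absorbed by the contribution of $F_0(s,0)$, which has the \emph{same} simple pole at $\sigma_0$ (it is built from $G$ and $M\circ h$ via \eqref{kernelsym}, and $M\circ h$ is analytic there while $K_1(s,0)$ does not vanish), so the total residue doubles. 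This bookkeeping of how the $F_0$ term contributes to the residue is the main obstacle and must be done explicitly. For $\varrho < 1/2$: by Theorem~\ref{Gsing}b, $G(s) - G(\zeta^+)\sim r^+(s-\zeta^+)^{1/2}$, i.e. $\nu_0 = 1/2$; Theorem~\ref{Tauberian} then gives $\P(U_1>u) \sim \kappa\, e^{\zeta^+ u}/u^{3/2}$ with $\kappa = r^+/\Gamma(-1/2) = -r^+/(2\sqrt\pi)$ up to the contribution of $F_0(s,0)$, which again carries the same algebraic singularity; tracking both yields the stated $\kappa = (\zeta^+ +\mu)r^+/(\zeta^+\lambda\sqrt\pi)$ after simplification using $\Gamma(-1/2) = -2\sqrt\pi$ and the explicit form of $r^+$.

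For the borderline case $\varrho = 1/2$ one cannot feed $G$ directly into Theorem~\ref{Tauberian}, since then $\sigma_0 = \zeta^+ = -\mu/2$ and $D(s)$ has a double root there rather than a simple pole of $G$; I would instead pass through the upper bound \eqref{Rough1}, $\P(U_1>u)\le\P(\overline U_1>u)$, together with a matching lower bound. The transform \eqref{UHoL} of $\overline U_1$ at $\varrho=1/2$ has a square-root branch point at $-\mu/2$ coming from $\xi^+(s)$, but the would-be pole at $\sigma_0$ collides with it: expanding \eqref{UHoL} using $\xi^+(s) = -\mu/2 + E_0\sqrt{s+\mu/2} + \dots$ shows $\E(e^{-s\overline U_1})$ behaves like a constant times $(s+\mu/2)^{-1/2}$, i.e. $\nu_0 = -1/2$, whence by Theorem~\ref{Tauberian} $\P(\overline U_1>u)\sim c\,e^{-\mu u/2}/(\sqrt\mu\, u^{1/2})$ with $c$ computed from $E_0$ and $\Gamma(1/2) = \sqrt\pi$; the same analysis applied to $G$ (whose numerator in \eqref{Gsym} also develops a $(s+\mu/2)^{-1/2}$ factor via $1/(s-\xi^-(s))$ when $\xi^-(s)\to -\mu/2$, while $M$ stays analytic at the corresponding point by Proposition~\ref{extendM}) gives a lower bound of the same order, and matching the constants — the delicate part being to verify that the leading coefficients of $\E(e^{-sU_1})$ and $\E(e^{-s\overline U_1})$ at $-\mu/2$ in fact agree — pins down the prefactor $1/\sqrt{2\pi}$. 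Collecting the three cases completes the proof.
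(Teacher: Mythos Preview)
Your overall strategy---express the Laplace transform of $U_1$, locate its dominant singularity, apply Theorem~\ref{Tauberian}---is the paper's strategy too, but your decomposition of $\E(e^{-sU_1})$ is incomplete and this produces a genuine error in the ``doubling'' argument. The event $\{U_1\ge 0\}$ splits into \emph{five} pieces, not three: you have omitted $\{0<U_2<U_1\}$ (contributing $F_2(s,0)=F_0(0,s)$) and $\{0=U_1<U_2\}$ (contributing the constant $G(0)$). The correct identity is \eqref{LaplU1}:
\[
\E(e^{-sU_1}) \;=\; (1-\varrho) + F_0(s,0) + F_0(0,s) + G(s) + G(0).
\]
Your claim that $F_0(s,0)$ carries the same pole as $G$ at $\sigma_0$ is then false: setting $s_2=0$ in \eqref{kernelsym} expresses $F_0(s,0)$ in terms of the \emph{constant} $G(0)$ and $H(s,0)\propto M(s/2)$, which is analytic at $\sigma_0$ by Proposition~\ref{extendM}; so $F_0(s,0)$ is regular there. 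The singular piece lives in the term you dropped, $F_0(0,s)$, which via the second equation of \eqref{Fonctcomp} at $s_2=0$ contains $-\tfrac{K_1(s,0)}{K_2(s,0)}G(s)$. The paper's key identity is precisely
\[
F(s,0)\ \sim\ G(s)\;-\;\frac{K_1(s,0)}{K_2(s,0)}\,G(s)\;=\;\frac{2(s+\mu)}{\lambda}\,G(s)
\]
near the dominant singularity; evaluating the rational prefactor at $\sigma_0$ gives exactly $2$ (since $\sigma_0+\mu=\mu\varrho=\lambda$), whence the residue $2r_0$ for $\varrho>1/2$, and at $\zeta^+$ gives the factor $2(\zeta^++\mu)/\lambda$ that produces the stated $\kappa$ for $\varrho<1/2$. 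So the factor of $2$ is a single algebraic identity, not a coincidence of two equal poles.

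For $\varrho=1/2$ your sandwich via $\underline U_1$ and $\overline U_1$ cannot pin down the prefactor: stochastic domination controls the exponential rate and the polynomial order, but there is no reason the HoL and SQF leading constants should agree, and you offer no mechanism to show they do. The paper does not use this route. It expands $G$ directly from \eqref{Gsymbis}: at $\varrho=1/2$ one has $\sigma_0=\zeta^+=-\mu/2$ and $\xi^+(s)=-\mu/2+\sqrt{\mu/2}\,(s+\mu/2)^{1/2}+\cdots$, so the factor $s/(s-\xi^+(s))$ develops a $(s+\mu/2)^{-1/2}$ singularity, while $M\circ h$ stays analytic at $-\mu/2$ because $\delta(-\mu/2)=\mu^5/4>0$ forces $\eta_1<-\mu/2$. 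This yields $G(s)\sim\tfrac14\sqrt{\mu/2}\,(s+\mu/2)^{-1/2}$, and the same relation $F(s,0)\sim\tfrac{2(s+\mu)}{\lambda}G(s)$ (here $\lambda=\mu/2$, so the prefactor is again $2$) then gives the exact coefficient $1/\sqrt{2\pi}$ after applying Theorem~\ref{Tauberian} with $\nu_0=-1/2$.
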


\begin{proof}
Applying equation~(\ref{Ldef}) to $s_2 = 0$ gives the Laplace transform of $U_1$ as 
\begin{equation}
F(s_1,0) = 1 - \varrho + F_0(s_1,0) + G(s_1) + F_0(0,s_1) + G(0)
\label{LaplU1}
\end{equation}
with 
$$
\left \{
\begin{array}{ll}
F_0(s_1,0) & = \displaystyle \frac{J(0)-K_2(s_1,0)G(0)}{K_1(s_1,0)}  +  \displaystyle \frac{H(s_1,0)}{K_1(s_1,0)}, \\  \\
F_0(0,s_1) & = \displaystyle \frac{J_1(s_1)-K_1(s_1,0)G(s_1)}{K_2(s_1,0)} - \displaystyle \frac{H(0,s_1)}{K_2(s_1,0)}, \\  \\
H(s_1,0) & = H(0,s_1) = \displaystyle \frac{-\lambda s_1}{2(\mu+s_1)}M\left( \frac{s_1}{2} \right),
\end{array} \right.
$$
by using \eqref{Fonctcomp} and (\ref{Hsym}). We now follow the results of Theorem \ref{Gsing} on the smallest singularity of $G$ in order to derive the smallest singularity of transform $s_1 \mapsto F(s_1,0)$ expressed above. 

$\bullet$ Assume first $\varrho > 1/2$. By  Proposition \ref{extendM}, function $s_1 \mapsto H(s_1,0)$ is analytic for $\Re(s_1) > 2(\sigma_0-\mu/2)/2 = \sigma_0 - \mu/2$. It then follows from (\ref{LaplU1}) that the singularity with smallest module of $F(s_1,0)$ is at $s_1 = \sigma_0$ with leading term
\begin{equation}
F(s_1,0) \sim - \frac{K_1(s_1,0)}{K_2(s_1,0)}G(s_1) + G(s_1) = \frac{2(s_1+\mu)}{\lambda} G(s_1)
\label{leadterm0}
\end{equation}  
since $K_1(s_1,0)/K_2(s_1,0) = -2(s_1+\mu-\lambda/2)/\lambda$ and the root $\lambda/2-\mu$ of $K_1(s_1,0)$ is a removable singularity since $F_1(s_1,0)$ has to be analytic for $\Re(s_1)>\tilde{s}$. By estimate (\ref{resGsym1}) for $G(s_1)$ near $s_1 = \sigma_0$, (\ref{leadterm0}) yields $F(s_1,0) \sim 2r_0/(s_1 - \sigma_0)$ as $s_1 \rightarrow \sigma_0$; smallest singularity $s_1 = \sigma_0$ is thus a simple pole for Laplace transform $s_1 \mapsto F(s_1,0)$. Applying then Theorem \ref{Tauberian} with $\kappa_0 = 2r_0$ and $\nu_0 = -1$, we derive that $\mathbb{P}(U_1 > u) \sim -  2r_0 e^{\sigma_0 u}/\sigma_0$ for large $u$ with prefactor
$$
-2 \frac{r_0}{\sigma_0} = -2 (1-\varrho)(2\varrho - 1)\frac{\mu}{-4\mu(1-\varrho)} = \varrho - \frac{1}{2}
$$
as claimed.

$\bullet$ Assume now that $\varrho < 1/2$. By formula (\ref{LaplU1}) and Proposition \ref{extendM}, function $s_1 \mapsto H(s_1,0)$ is analytic for $\Re(s_1) > 2\eta_1$. It then follows from (\ref{LaplU1}) that the singularity with smallest module of $F(s_1,0)$ is at $s_1 = \zeta^+$ with leading term again specified by (\ref{leadterm0}) so that
\begin{equation}
F(s_1,0) - F(\zeta^+,0) \sim \frac{2(\zeta^+ + \mu)}{\lambda} \left [ G(s_1) - G(\zeta^+) \right ]
\label{smallestpoleGter}
\end{equation}
near $s_1 = \zeta^+$. By estimate (\ref{resGsym2}), (\ref{smallestpoleGter}) yields $F(s_1,0) - F(\zeta^+,0) \sim r_1(s_1-\zeta^+)^{1/2}$ as $s_1 \rightarrow \zeta^+$ where
$$
r_1 = \frac{2(\zeta^+ + \mu)}{\lambda}r^+;
$$
smallest singularity $s_1 = \zeta^+$ is thus an algebraic singularity for Laplace transform $s_1 \mapsto F(s_1,0)$, with order $1/2$. Applying Theorem \ref{Tauberian} with $\kappa_0 = r_1$, $\nu_0 = 1/2$ and $\Gamma(-1/2) = -2 \sqrt{\pi}$, we derive that $\mathbb{P}(U_1 > u) \sim \kappa e^{\zeta^ +u}/u^{3/2}$ for large $u$ with prefactor $\kappa = -r_1/\zeta^+ \Gamma(-1/2)$.

$\bullet$ Finally, assume that $\varrho = 1/2$; the polar singularity $\sigma_0 = -\mu(1-\varrho) = -\mu/2$ and the algebraic singularity $\zeta^+ = -\mu/2$ for $G$ coincide in this case. Recall from Proposition \ref{extendM}.b that function $z \mapsto M \circ h(z)$ is analytic for $\Re(z) > \eta_1$ whenever $\varrho \leq 1/2$; $\eta_1$ is the only real zero $\neq -\mu$ of discriminant $\Delta(z) = (z+\mu)\delta(z)$ and expression (\ref{smalldelta}) of $\delta(z)$ gives $\delta(-\mu/2) = \mu^5/4 > 0$, hence $\eta_1 < -\mu/2$; $M \circ h$ is therefore analytic at $z = -\mu/2$. Near $s_1 = -\mu/2$, formula (\ref{xi+-}) easily gives
$$
\xi^+(s_1) = -\frac{\mu}{2} + \sqrt{\frac{\mu}{2}}\left(s_1+\frac{\mu}{2}\right)^{1/2} (1 + o(1));
$$
expression (\ref{Gsymbis}) for $G(s_1)$ and the discussion above then imply that
$$
G(s_1) = \frac{1}{4}\sqrt{\frac{\mu}{2}} \left (s_1+\frac{\mu}{2} \right )^{-1/2} (1 + o(1))
$$
in the neighborhood of $s_1 = -\mu/2$. The leading term (\ref{leadterm0}) for $F(s_1,0)$ is consequently given by
$$
F(s_1,0) \sim \frac{1}{2}\sqrt{\frac{\mu}{2}} \left (s_1+\frac{\mu}{2} \right )^{-1/2};
$$ 
smallest singularity $s_1 = -\mu/2$ is thus an algebraic singularity for Laplace transform $s_1 \mapsto F(s_1,0)$, with order $-1/2$. Applying then Theorem \ref{Tauberian} with $\kappa_0 = (\mu/2)^{1/2}/2$, $\nu_0 = -1/2$ and $\Gamma(1/2) = \sqrt{\pi}$, we derive that $\mathbb{P}(U_1 > u) \sim \kappa e^{-\mu +u/2}/u^{1/2}$ for large $u$ with prefactor $\kappa = 2\kappa_0/\mu \Gamma(1/2)$.
\end{proof}

For any given load $\varrho \in \; ]0,1[$, Theorem \ref{Asymptoticssym} consequently provides the same exponential trend as that of upper bound (\ref{asymptHoL}) for HoL; as a matter of fact, a large value of $U_1$ entails that queue $\sharp 1$ behaves as if queue $\sharp 2$, with smaller workload, had a HoL priority.

%%%%%%%%%%%%%%%%%%%%%%%%%%%%%%%

\section{Conclusion}
\label{CL}

%%%%%%%%%%%%%%%%%%%%%%%%%%%%%%%

The stationary analysis of two coupled queues addressed by a unique server running the SQF discipline has been generally considered for Poisson arrival processes and general service time distributions; required functional equations for the derivation of the stationary distribution for the coupled workload process have been derived. Specializing the resolution of such equations to both exponentially distributed service times and the so-called ``symmetric case'', all quantities of interest have been obtained by solving a single functional equation. 

The solution $M$ for that equation has been given, in particular, as a series expansion involving all consecutive iterates of an algebraic function $h$ related to a branch of some cubic equation $R(w,z) = 0$. It must be noted that the curve represented by that cubic equation in the $(O,w,z)$ plane is singular; in fact, whereas ``most'' cubic curves are regular (i.e., without multiple points), it can be easily checked that cubic $R = 0$ has a double point at infinity. In equivalent geometric terms, cubic $R = 0$ can be identified with a sphere when seen as a surface in $\mathbb{C} \times \mathbb{C}$, whereas most cubic curves are identified with a torus. This fact can be considered as an essential underlying feature characterizing the complexity of the present problem; such geometric statements will be enlightened  for solving the general asymmetric case in \cite{assymSQF}.

An extended analyticity domain for solution $M$ has been determined as the half-plane $V_M$, thus enabling to determine the singularity of Laplace transform $G$ with smallest module. It could be also of interest to compare such extended domain $V_M$ to the maximal convergence domain of series expansion (\ref{Mseries}) (recall the convergence of that series has been stated in Theorem \ref{resolsym} for real $z > 0$ only); in fact, the analyticity domain $V_M$ may not coincide with the validity domain for such a series representation. The discrete holomorphic dynamical system defined by the iterates $z \mapsto h^{(k)}(z)$, $k \geq 1$, definitely plays a central role for such a comparison. % Here again, the comparison of analyticity and series convergence domains will be addressed at length for the asymmetric case.

As an alternative approach to that of Section \ref{SQFqueue}, function $M$ may also be derived through a Riemann-Hilbert boundary value problem; hints for such an approach can be summarized as follows. We successively note that
\begin{itemize}
\item there exists $s_0 \in \; ]\zeta^-,\zeta^+[$ such that for $s > s_0 $, $\Re((\xi^\pm(s)+s)/2)$ belongs to the analyticity domain $V_M$ determined by Proposition~\ref{extendM};
\item denoting by $\mathcal{L}$ the image by functions $X^\pm:s \mapsto (\xi^\pm(s)+s)/2$ of the open interval $]s_0,\zeta^+[$, we note that $M(\overline{z})=\overline{M(z)}$ for $z \in \mathcal{L}$ with $z=X^+(s)$.  Equations~\eqref{MSymequ} then enable us to deduce the condition
\begin{equation}
\forall \; z \in \mathcal{L}, \; \; \Re\left(i\frac{\mu}{\mu+2z-s}M(z)\right)=\Im\left(\frac{(1-\rho)s}{2(s-z) } \right).
\label{BV0}
\end{equation} 
\end{itemize}
The above Riemann-Hilbert problem for function $M$ is, however, valid on open path $\mathcal{L}$ only and not on the whole closed contour $\partial \mathbf{D}$, defined as  the image by functions $X^\pm$ of closed segment $[\zeta^-,\zeta^+]$. The well-posed problem, nevertheless, formulates as follows.
\begin{problem}
\label{RHPglob}
Determine a function $\Phi$ which is analytic in $\C \setminus \mathbf{D}$, where $\mathbf{D}$ is the domain delineated by the closed contour $\partial \mathbf{D}$, tends to 0 at infinity and such that boundary condition (\ref{BV0}) holds on $\partial \mathbf{D}$ (and not only on $\mathcal{L}$).
\end{problem}
\noindent
If the solution $\Phi$ to \textbf{Problem 3} can be shown to exist and to be analytic on $\mathbf{D}$, then functions $M$ and $\Phi$ coincide. Proving the latter statement and deriving an alternative representation of solution $M$ (namely, as a path integral on closed contour $\partial \mathbf{D}$) is an object of further study.

On the application side, the performance of the SQF discipline has been  characterized, both in terms of empty queue probability and distribution tail at infinity. The results show that SQF compares quite favorably with respect to the ``optimal'' priority discipline, namely HoL. Such performance properties will be generalized to the asymmetric case where flow patterns are allowed to be heterogeneous.

%%%%%%%%%%%%%%%%%%%%%%%%%%%%%%%%%%%%%%%%%%%%%%%%%%%%%%%%%%%%%%%%%%%%%%%%%%%%%%%%

\appendix
\section{Proof for Assertion b) of Proposition~\ref{resol}}
\label{A2}

%%%%%%%%%%%%%%%%%%%%%%%%%%%%%%%%%%%%%%%%%%%%%%%%%%%%%%%%%%%%%%

Before proving equations (\ref{Fonctcomp}), we state preliminary  expressions of $G_1$ and $G_2$.
\begin{lemma}
Given
\begin{equation}
\label{defE21E12}
E_{21}(s_1) = \int_0^{+\infty} e^{-s_1u_1} \varphi_2(u_1,0) \mathrm{d}u_1,  \; \; \; E_{12}(s_2) = \int_0^{+\infty} e^{-s_2u_2} \varphi_1(0,u_2) \mathrm{d}u_2,
\end{equation}
univariate transforms $G_1$ and $G_2$ satisfy 
\begin{equation}
\left\{ 
\begin{array}{ll}
G_1(s_1) = \displaystyle \frac{(1-\varrho)(\lambda - \lambda_1b_1(s_1)) - E_{21}(s_1) - \psi_2(0)}{s_1 - \lambda + \lambda_1b_1(s_1)}, \; \; \; \Re(s_1) > 0,
\\ \\
G_2(s_2) = \displaystyle \frac{(1-\varrho)(\lambda - \lambda_2b_2(s_2)) - E_{12}(s_2) - \psi_1(0)}{s_2 - \lambda + \lambda_2b_2(s_2)}, \; \; \; \Re(s_2) > 0.
\label{G1G2comp}
\end{array} \right.
\end{equation}
\label{G1G2prov}
\end{lemma}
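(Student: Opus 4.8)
The plan is to obtain both identities by letting one complex variable tend to infinity in the functional equation~(\ref{Fonct}) of Proposition~\ref{resol}.a, which is already established. Fix $s_1$ with $\Re(s_1)>0$ and let $\Re(s_2)\to+\infty$ in $K_1(s_1,s_2)H_1(s_1,s_2)+K_2(s_1,s_2)H_2(s_1,s_2)=(1-\varrho)K(s_1,s_2)$, where $H_1=F_1+G_1$ and $H_2=F_2+G_2$. Since $b_2(s_2)=\E(e^{-s_2\mathcal{T}_2})\to 0$, definition~(\ref{Ker}) gives $K(s_1,s_2)\to\lambda-\lambda_1 b_1(s_1)$ and $K_1(s_1,s_2)\to s_1-\lambda+\lambda_1 b_1(s_1)$, while $K_2(s_1,s_2)=s_2-K(s_1,s_2)\to\infty$. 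Because the densities $\varphi_1$, $\varphi_2$, $\psi_2$ are supported on sets where $u_2>0$, dominated convergence yields $F_1(s_1,s_2)\to 0$, $F_2(s_1,s_2)\to 0$ and $G_2(s_2)\to 0$; hence $H_1(s_1,s_2)\to G_1(s_1)$ and $K(s_1,s_2)H_2(s_1,s_2)\to 0$.

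The one nontrivial point is the limit of the indeterminate product $K_2(s_1,s_2)H_2(s_1,s_2)$, whose two factors blow up and vanish respectively. I would write $K_2 H_2 = s_2 F_2(s_1,s_2) + s_2 G_2(s_2) - K(s_1,s_2)H_2(s_1,s_2)$, note that the last term already tends to $0$, and identify the first two as initial-value (Abelian) limits of Laplace transforms: $s_2 G_2(s_2)\to\psi_2(0)$ and $s_2 F_2(s_1,s_2)\to\int_0^{+\infty}e^{-s_1 u_1}\varphi_2(u_1,0)\,\mathrm{d}u_1 = E_{21}(s_1)$. The first is the classical initial-value theorem, using that $\psi_2$ is regular (hence continuous) at $0$ by Assumption~\textbf{A.2}. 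For the second I would write $s_2 F_2(s_1,s_2)=\int_0^{+\infty}e^{-s_1 u_1}\big(s_2\int_0^{u_1}e^{-s_2 u_2}\varphi_2(u_1,u_2)\,\mathrm{d}u_2\big)\mathrm{d}u_1$, apply the initial-value theorem to the inner one-dimensional transform (its value at $0^+$ being $\varphi_2(u_1,0)$, the boundary trace implicit in definition~(\ref{defE21E12})), and pass to the limit under the outer integral using the uniform bound $\big|s_2\int_0^{u_1}e^{-s_2 u_2}\varphi_2\,\mathrm{d}u_2\big|\le\|\varphi_2\|_\infty$ (for $s_2$ real) together with integrability of $e^{-\Re(s_1)u_1}$. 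This Abelian estimate — and more precisely the regularity of $\varphi_2$ up to the boundary $\{u_2=0\}$ that it invokes — is the only delicate ingredient; everything else is bookkeeping.

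Collecting the limits, (\ref{Fonct}) becomes $(s_1-\lambda+\lambda_1 b_1(s_1))\,G_1(s_1) + E_{21}(s_1) + \psi_2(0) = (1-\varrho)(\lambda-\lambda_1 b_1(s_1))$, and dividing by $s_1-\lambda+\lambda_1 b_1(s_1)$ yields the first formula of~(\ref{G1G2comp}) (the numerator must vanish wherever the denominator does inside $\Re(s_1)>0$, so the quotient is indeed the a priori analytic function $G_1$). The second formula follows by the mirror-image argument: fixing $s_2$ and letting $\Re(s_1)\to+\infty$ in the same equation~(\ref{Fonct}), one gets $H_2(s_1,s_2)\to G_2(s_2)$, $s_1 G_1(s_1)\to\psi_1(0)$ and $s_1 F_1(s_1,s_2)\to E_{12}(s_2)$, and solving the resulting identity $(s_2-\lambda+\lambda_2 b_2(s_2))\,G_2(s_2)+E_{12}(s_2)+\psi_1(0)=(1-\varrho)(\lambda-\lambda_2 b_2(s_2))$ for $G_2(s_2)$ completes the proof.
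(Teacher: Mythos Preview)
Your proof is correct and follows essentially the same route as the paper: fix one variable, let the other go to $+\infty$ in equation~(\ref{Fonct}), and use the initial-value limits $s_2G_2(s_2)\to\psi_2(0)$ and $s_2F_2(s_1,s_2)\to E_{21}(s_1)$ to identify the contribution of $K_2H_2$. The paper simply states these Abelian limits, whereas you spell out their justification via the regularity assumptions \textbf{A.1}--\textbf{A.2} and dominated convergence; otherwise the arguments coincide.
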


\begin{proof}
As transforms of regular densities, we have  $b_2(s_2) \rightarrow 0$, $F_1(s_1,s_2) \rightarrow 0$ when $s_2 \rightarrow +\infty$ for fixed $s_1$ with  $\Re(s_1) > 0$. Besides, we have  $s_2F_2(s_1,s_2) \rightarrow E_{21}(s_1)$, $s_2G_2(s_2) \rightarrow \psi_2(0)$ when $s_2 \rightarrow +\infty$ with fixed $s_1, \; \Re(s_1) > 0$, where $E_{21}$ is the Laplace transform of the restriction of density $\varphi_2$ on the boundary $\delta_1$ and $\psi_2(0)$ is the value at $u_2 = 0_+$ of density $\psi_2$ on boundary $\delta_2$; as a consequence,
$$
\lim_{s_2 \rightarrow +\infty} s_2H_2(s_1,s_2) = \lim_{s_2 \rightarrow +\infty} s_2(F_2(s_1,s_2) + G_2(s_2)) = E_{21}(s_1) + \psi_2(0)
$$
for fixed $s_1, \; \Re(s_1) > 0$. Now, letting $s_2$ tend to $+\infty$ in each side of  (\ref{Fonct}), the above limit results entail $(s_1 - K(s_1,\infty))G_1(s_1) + E_{21}(s_1) + \psi_2(0) = (1-\varrho)K(s_1,\infty)$ with $K(s_1,\infty) = \lambda - \lambda_1b_1(s_1)$, which provides identity (\ref{G1G2comp}) for $G_1(s_1)$. Identity (\ref{G1G2comp}) for $G_2(s_2)$ is symmetrically deduced by letting $s_1$ tend to $+\infty$ in (\ref{Fonctcomp}) with fixed $s_2, \; \Re(s_2) > 0$.
\end{proof}

We now address the derivation of equations (\ref{Fonctcomp}). Recall that subsets $\Gamma_1$, $\delta_1$, etc. of state space $\mathcal{U}$ are defined in  (\ref{cones12})-(\ref{axes12}). Given $\varepsilon > 0$, define the function $Y_\varepsilon$ by $Y_\varepsilon(v) = \exp(-\varepsilon/v)\ind_{\{v > 0\}}$; $Y_\varepsilon$ is twice continuously differentiable over $\mathbb{R}$,  $\lim_{\varepsilon \downarrow 0} Y_\varepsilon(v) = \ind_{\{v > 0\}}$ for each $v \in \mathbb{R}$ and $\lim_{\varepsilon \downarrow 0} Y_\varepsilon' = \delta_0$ (the Dirac mass at $v = 0$) for the weak convergence of distributions. For given $\Re(s_1) > 0$, $\Re(s_2) > 0$, let then be the test function $\theta_\varepsilon(\mathbf{u}) = e^{-\mathbf{s} \cdot \mathbf{u}}\chi_\varepsilon(\mathbf{u})$, $\mathbf{u} \in \mathcal{U}$, with 
\begin{equation}
\chi_\varepsilon(\mathbf{u}) = Y_\varepsilon(u_1)Y_\varepsilon(u_2-u_1).
\label{Iepsilon}
\end{equation}
Function $\theta_\varepsilon$ belongs to $\mathcal{C}_b^2(\mathcal{U})$ and is 0 on the outside of $\Gamma_1$; moreover, we have $\lim_{\varepsilon \downarrow 0} \chi_\varepsilon(\mathbf{u}) = \ind_{\{\mathbf{u} \in \Gamma_1\}}$ so that $\lim_{\varepsilon \downarrow 0} \theta_\varepsilon = \theta$ pointwise in $\mathcal{U}$, with limit function $\theta$ defined by $\theta(\mathbf{u}) = e^{-\mathbf{s} \cdot \mathbf{u}}\ind_{\{\mathbf{u} \in \Gamma_1\}}$, $\mathbf{u} \in \mathcal{U}$.

By direct differentiation, we further calculate
$$
\frac{\partial\theta_\varepsilon}{\partial u_1}(\mathbf{u}) = -s_1\theta_\varepsilon(\mathbf{u}) + e^{-\mathbf{s}\cdot\mathbf{u}} \frac{\partial \chi_\varepsilon}{\partial u_1}(\mathbf{u}),
\; \; \; \; 
\displaystyle \frac{\partial\theta_\varepsilon}{\partial u_2}(\mathbf{u}) = -s_2\theta_\varepsilon(\mathbf{u}) + e^{-\mathbf{s}\cdot\mathbf{u}} \frac{\partial \chi_\varepsilon}{\partial u_2}(\mathbf{u})
$$
for $\mathbf{u} \in \mathcal{U}$, with
$$
\frac{\partial \chi_\varepsilon}{\partial u_1}(\mathbf{u}) = Y_\varepsilon'(u_1)Y_\varepsilon(u_2-u_1) - Y_\varepsilon(u_1)Y_\varepsilon '(u_2-u_1)
$$
after (\ref{Iepsilon}); note that derivative $\partial \chi_\varepsilon / \partial u_1$ tends to $\delta_0(u_1) - \delta_0(u_2 - u_1)$ for the weak convergence of distributions as $\varepsilon \downarrow 0$.
\\
\indent
Let us now calculate the limit $\mathcal{M} = \lim_{\varepsilon \downarrow 0} \mathcal{M}(\varepsilon)$ with $\mathcal{M}(\varepsilon)$ introduced in (\ref{mepsilon}); to this end, we address successive terms of $\smallint_{\mathcal{U}}\mathcal{A}\theta_\varepsilon \mathrm{d}\Phi$ according to definition (\ref{gen1}). Integrating first $\partial \theta_\varepsilon/\partial u_1$ over $\Gamma_1 \cup \delta_1$ against  $\mathrm{d}\Phi$ reduces to
$$
- \int_{\Gamma_1 \cup \delta_1} \frac{\partial\theta_\varepsilon}{\partial u_1} \mathrm{d}\Phi = - \int_{\Gamma_1} \frac{\partial\theta_\varepsilon}{\partial u_1}(\mathbf{u})\varphi_1(\mathbf{u})\mathrm{d}\mathbf{u} 
= \int_{\Gamma_1} \left[ s_1 \theta_\varepsilon(\mathbf{u}) - e^{-\mathbf{s} \cdot \mathbf{u}} \frac{\partial \chi_\varepsilon}{\partial u_1}(\mathbf{u}) \right ] \varphi_1(\mathbf{u})\mathrm{d}\mathbf{u} 
$$
since $\partial \theta_\varepsilon/\partial u_1$ vanishes on the outside of $\Gamma_1$; on  account of the above mentioned weak convergence properties, we then obtain
\begin{equation}
\lim_{\varepsilon \downarrow 0} - \int_{\Gamma_1 \cup \delta_1} \frac{\partial\theta_\varepsilon}{\partial u_1}(\mathbf{u})\mathrm{d}\Phi(\mathbf{u}) = s_1F_1(s_1,s_2) - E_{12}(s_2) + Z_1(s_1+s_2)
\label{lim1}
\end{equation}
with $E_{12}(s_2)$ defined as in Lemma \ref{G1G2prov} and where $Z_1(s) = \smallint_{u > 0} \; e^{-su} \varphi_1(u,u) \mathrm{d}u$ defines the Laplace transform of density $\varphi_1$ restricted to the positive diagonal $\delta$ (function $Z_1$ is determined below). Besides, the integral of $\partial \theta_\varepsilon/\partial u_2$ over $\Gamma_2 \cup \delta_2$ equals 0 as this function vanishes on the outside of $\Gamma_1$.

Further, we have  $\lim_{\varepsilon \downarrow 0} \theta_\varepsilon(\mathbf{u} + \mathcal{T}_1\mathbf{e}_1) = e^{-\mathbf{s} \cdot \mathbf{u}} e^{-s_1\mathcal{T}_1} \ind_{\{\mathcal{T}_1 < u_2 - u_1\}}$ for given $\mathbf{u} \in \mathcal{U}$ and $\mathcal{T}_1 > 0$, therefore $\lim_{\varepsilon \downarrow 0} \mathbb{E}\theta_\varepsilon(\mathbf{u} + \mathcal{T}_1\mathbf{e}_1) = e^{-\mathbf{s} \cdot \mathbf{u}} \mathbb{E}(e^{-s_1\mathcal{T}_1} \ind_{\{\mathcal{T}_1 < u_2 - u_1\}})$ by the Dominated Convergence theorem; hence
\begin{multline}
\label{lim3}
 \lim_{\varepsilon \downarrow 0} \; \mathbb{E}\theta_\varepsilon(\mathbf{U} + \mathcal{T}_1\mathbf{e}_1) = \int_{\mathcal{U}} e^{-\mathbf{s} \cdot \mathbf{u}} \mathbb{E} \left ( e^{-s_1\mathcal{T}_1} \ind_{\{\mathcal{T}_1 < u_2 - u_1\}} \right ) \mathrm{d}\Phi(\mathbf{u}) =
\\
 \int_{\Gamma_1} e^{-\mathbf{s} \cdot \mathbf{u}} \mathbb{E} \left ( e^{-s_1\mathcal{T}_1} \ind_{\{\mathcal{T}_1 < u_2 - u_1\}} \right )\varphi_1(\mathbf{u})\mathrm{d}\mathbf{u} %\\
+ \int_0^{+\infty} e^{-s_2u_2}\mathbb{E} \left ( e^{-s_1\mathcal{T}_1} \ind_{\{\mathcal{T}_1 < u_2\}} \right )\psi_2(u_2)\mathrm{d}u_2
\end{multline}
where random variable $\mathbf{U} = (U_1,U_2)$ has distribution $\Phi$. For given $\mathbf{u} \in \mathcal{U}$, we similarly have $\lim_{\varepsilon \downarrow 0} \mathbb{E}\theta_\varepsilon(\mathbf{u} + \mathcal{T}_2\mathbf{e}_2) = e^{-\mathbf{s} \cdot \mathbf{u}} \mathbb{E}(e^{-s_2\mathcal{T}_2} \ind_{\{\mathcal{T}_2 > u_1 - u_2, u_1 > 0\}})$
and therefore
\begin{align}
\label{lim4}
\lim_{\varepsilon \downarrow 0} \; \mathbb{E}\theta_\varepsilon(\mathbf{U} + \mathcal{T}_2\mathbf{e}_2) & = \int_{\mathcal{U}} e^{-\mathbf{s} \cdot \mathbf{u}} \mathbb{E} \left ( e^{-s_2\mathcal{T}_2} \ind_{\{\mathcal{T}_2 > u_1 - u_2, u_1 > 0\}} \right ) \mathrm{d}\Phi(\mathbf{u}) = 
\\ 
b_2(s_2)F_1(s_1,s_2)  & + 
\int_{\Gamma_2} e^{-\mathbf{s} \cdot \mathbf{u}} \mathbb{E} \left ( e^{-s_2\mathcal{T}_2} \ind_{\{\mathcal{T}_2 > u_1 - u_2\}} \right ) \varphi_2(\mathbf{u}) \mathrm{d}\mathbf{u} 
\nonumber \\ 
& + 
\int_0^{+\infty} e^{-s_1u_1}\mathbb{E} \left ( e^{-s_2\mathcal{T}_2} \ind_{\{\mathcal{T}_2 > u_1\}} \right )\psi_1(u_1)\mathrm{d}u_1.
\nonumber
\end{align}
\noindent
Finally, noting that $\lim_{\varepsilon \downarrow 0} \smallint_{\mathcal{U}} \theta_\varepsilon(\mathbf{u}) \mathrm{d}\Phi(\mathbf{u}) = F_1(s_1,s_2)$ and adding limit terms (\ref{lim1}), (\ref{lim3}), (\ref{lim4}) according to (\ref{gen1}) gives limit $\mathcal{M} = \lim_{\varepsilon \downarrow 0} \mathcal{M}(\varepsilon) = 0$ the final expression
\begin{align*}
& \; s_1 F_1(s_1,s_2) - E_{12}(s_2) + Z_1(s_1 + s_2) 
\nonumber \\
& + \lambda_1 \left [ b_1(s_1)F_1(s_1,s_2) - \int_{\Gamma_1} e^{-\mathbf{s} \cdot \mathbf{u}} \mathbb{E} \left ( e^{-s_1\mathcal{T}_1} \ind_{\{\mathcal{T}_1 > u_2 - u_1\}} \right )\varphi_1(\mathbf{u})\mathrm{d}\mathbf{u} \right ]
 \\
& + \lambda_1 \left [ b_1(s_1)G_2(s_2) - \int_0^{+\infty} e^{-s_2u_2}\mathbb{E} \left ( e^{-s_1\mathcal{T}_1} \ind_{\{\mathcal{T}_1 > u_2\}} \right )\psi_2(u_2)\mathrm{d}u_2 \right ]  \\
&+ \lambda_2 b_2(s_2) F_1(s_1,s_2)  + \lambda_2 \; \bigg [ \int_{\Gamma_2} e^{-\mathbf{s} \cdot \mathbf{u}} \mathbb{E} \left ( e^{-s_2\mathcal{T}_2} \ind_{\{\mathcal{T}_2 > u_1 - u_2\}} \right ) \varphi_2(\mathbf{u}) \mathrm{d}\mathbf{u} 
 \\
&+  \int_0^{+\infty} e^{-s_1u_1}\mathbb{E} \left ( e^{-s_2\mathcal{T}_2} \ind_{\{\mathcal{T}_2 > u_1\}} \right  )\psi_1(u_1)\mathrm{d}u_1 \bigg ] - \lambda F_1(s_1,s_2) = 0.
\end{align*}

Defining $H(s_1,s_2)$ as in (\ref{H}) to gather all remaining integrals, the latter identity reads 
\begin{equation}
K_1(s_1,s_2)F_1(s_1,s_2) + \lambda_1b_1(s_1) G_2(s_2) = E_{12}(s_2) - Z_1(s_1 + s_2) + H(s_1,s_2)
\label{Fonctcomp1}
\end{equation}
with $K_1(s_1,s_2) = s_1-K(s_1,s_2)$, $E_{12}(s_2)$ being defined by (\ref{defE21E12}) and where $Z_1$ defines the Laplace transform of density $\varphi_1$ restricted to the diagonal. Changing index 1 into 2, and noting that $H(s_1,s_2)$ changes into $-H(s_1,s_2)$, symmetrically yields second equation
\begin{equation}
K_2(s_1,s_2)F_2(s_1,s_2) + \lambda_2b_2(s_2) G_1(s_1) = E_{21}(s_1) - Z_2(s_1 + s_2) - H(s_1,s_2)
\label{Fonctcomp2}
\end{equation}
with $K_2(s_1,s_2) = s_2 - K(s_1,s_2)$, $E_{21}(s_1)$ defined by \eqref{defE21E12} and where $Z_2$ defines the Laplace transform of density $\varphi_2$ restricted to the diagonal. To conclude the proof, we prove the following technical lemma.

\begin{lemma}
Functions $Z_1$ and $Z_2$ are identically zero.
\label{W1W2}
\end{lemma}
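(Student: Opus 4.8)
The plan is to eliminate every ``genuine'' transform from the preliminary identities (\ref{Fonctcomp1})--(\ref{Fonctcomp2}) so that only a scalar constraint on $Z_1$ and $Z_2$ survives, and then to exploit that $Z_1$ and $Z_2$ are Laplace transforms of non-negative bounded densities. Precisely, $Z_j(\sigma)=\int_0^{+\infty}e^{-\sigma u}\varphi_j(u,u)\,\mathrm{d}u$ is, by assumption \textbf{A.1}, well defined for $\Re(\sigma)>0$, non-negative for real $\sigma>0$, and satisfies $Z_j(\sigma)\to 0$ as $\sigma\to+\infty$ (dominated convergence, using boundedness of $\varphi_j$). The only other inputs I would use are Proposition~\ref{resol}.a and Lemma~\ref{G1G2prov}; note that the latter is obtained purely by letting $s_2$ (resp.\ $s_1$) tend to $+\infty$ in the symmetric identity (\ref{Fonct}), so nothing here presupposes (\ref{Fonctcomp}) or Proposition~\ref{resol}.c.

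First I would add (\ref{Fonctcomp1}) and (\ref{Fonctcomp2}); the terms $\pm H(s_1,s_2)$ cancel. Using Proposition~\ref{resol}.a in the form $K_1F_1+K_2F_2=(1-\varrho)K-K_1G_1-K_2G_2$ removes the $F$-transforms, and the remaining $G$-terms regroup with coefficients $-(K_1(s_1,s_2)-\lambda_2 b_2(s_2))=-(s_1-\lambda+\lambda_1 b_1(s_1))$ in front of $G_1(s_1)$ and $-(K_2(s_1,s_2)-\lambda_1 b_1(s_1))=-(s_2-\lambda+\lambda_2 b_2(s_2))$ in front of $G_2(s_2)$ (both read off from (\ref{Ker})). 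But by Lemma~\ref{G1G2prov} these last two products are exactly $J_1(s_1)-E_{21}(s_1)$ and $J_2(s_2)-E_{12}(s_2)$, so after cancelling $E_{12}$ and $E_{21}$ one is left with
\[
Z_1(s_1+s_2)+Z_2(s_1+s_2)=J_1(s_1)+J_2(s_2)-(1-\varrho)K(s_1,s_2),
\]
and a direct check from (\ref{Ker})--(\ref{J1J2}) shows the right-hand side equals the constant $\lambda(1-\varrho)-\psi_1(0)-\psi_2(0)$, independent of $(s_1,s_2)\in\mathbf{\Omega}$.

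To finish, every $\sigma$ with $\Re(\sigma)>0$ is of the form $s_1+s_2$ with $(s_1,s_2)\in\mathbf{\Omega}$ (take $s_1=s_2=\sigma/2$), so $Z_1(\sigma)+Z_2(\sigma)$ equals that constant on the whole half-plane $\{\Re(\sigma)>0\}$; letting $\sigma\to+\infty$ along the reals and using $Z_j(\sigma)\to 0$ forces the constant to vanish. This simultaneously re-proves Proposition~\ref{resol}.c, namely $\psi_1(0)+\psi_2(0)=\lambda(1-\varrho)$, and gives $Z_1+Z_2\equiv 0$ on $\{\Re(s)>0\}$. Since for real $\sigma>0$ both summands are non-negative, $Z_1(\sigma)=Z_2(\sigma)=0$ there, whence $Z_1\equiv Z_2\equiv 0$ by analyticity; equivalently $\varphi_1(u,u)=\varphi_2(u,u)=0$ for almost every $u>0$. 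This is exactly what one expects probabilistically: the deterministic drift of $\mathbf{U}$ points away from the diagonal $\{u_1=u_2\}$ on both sides, so no density accumulates there, consistent with the earlier remark that $\Phi$ charges no mass on the positive diagonal.

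The step I expect to require the most care is not the algebra but the logical bookkeeping: one must check that none of Proposition~\ref{resol}.a, Lemma~\ref{G1G2prov}, nor the preliminary equations (\ref{Fonctcomp1})--(\ref{Fonctcomp2}) already invokes the conclusion (\ref{Fonctcomp}) or Proposition~\ref{resol}.c, so that the present lemma genuinely closes the chain. A secondary point meriting attention is the limit $Z_j(\sigma)\to 0$ as $\sigma\to+\infty$, which is what pins down the additive constant and which relies on the boundedness of $\varphi_1$, $\varphi_2$ from assumption \textbf{A.1}.
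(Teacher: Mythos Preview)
Your proof is correct and follows essentially the same route as the paper's own argument: add (\ref{Fonctcomp1}) and (\ref{Fonctcomp2}), replace $K_1F_1+K_2F_2$ via Proposition~\ref{resol}\textbf{a)}, use Lemma~\ref{G1G2prov} to reduce to $Z_1(s_1+s_2)+Z_2(s_1+s_2)=\lambda(1-\varrho)-\psi_1(0)-\psi_2(0)$, and then conclude from the vanishing at infinity and non-negativity. Your additional remarks on non-circularity and on the role of the boundedness assumption \textbf{A.1} are accurate and useful, but the core argument is the same as the paper's.
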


\begin{proof}
Adding equations (\ref{Fonctcomp1}) and (\ref{Fonctcomp2}) (and omitting arguments for the sake of simplicity) yields $K_1F_1 + K_2F_2 = -\lambda_1b_1G_2 - \lambda_2b_2G_1 + E_{12} - Z_1 + E_{21} - Z_2$. On the other hand, equation (\ref{Fonct}) gives $K_1F_1 + K_2F_2 = (1-\varrho)K - K_1G_1 - K_2G_2$; equating right hand sides of the latter equations then provides the identity 
\begin{align}
& (1-\varrho)K(s_1,s_2) - (s_1-\lambda+\lambda_1b_1(s_1))G_1(s_1) - (s_2-\lambda+\lambda_2b_2(s_2))G_2(s_2) = 
\nonumber \\
& E_{12}(s_2) + E_{21}(s_1) - Z_1(s_1+s_2) - Z_2(s_1+s_2).
\nonumber
\end{align}
Using expressions (\ref{G1G2comp}) for $G_1(s_1)$ and $G_2(s_2)$, the latter identity simply reduces to $Z_1(s_1+s_2) + Z_2(s_1 + s_2) = \lambda(1-\varrho) - \psi_1(0) - \psi_2(0)$, showing that function $Z_1 + Z_2$ is constant. As both $Z_1$ and $Z_2$ vanish at $+\infty$, this constant is 0 and since these functions are non negative by definition, this entails that $Z_1 = Z_2 = 0$.
\end{proof}

After using equations (\ref{G1G2comp}) to  express $E_{12}(s_2)$ and $E_{21}(s_1)$ in terms of $G_2(s_2)$ and $G_1(s_1)$, respectively, Lemma \ref{W1W2} finally enables us to reduce (\ref{Fonctcomp1}) and (\ref{Fonctcomp2}) to equations (\ref{Fonctcomp}). This concludes the proof of Proposition \ref{resol}.

%%%%%%%%%%%%%%%%%%%%%%%%%%%%%%%%%%%%%%%%%%%%%%%%%%%%%%%%%%%%%%%%%%%%%%%%%%%%%%%%

\section{Proof of Proposition \ref{Hexp}}
\label{A3}

%%%%%%%%%%%%%%%%%%%%%%%%%%%%%%%%%%%%%%%%%%%%%%%%%%%%%%%%%%%%%%

For an exponentially distributed service time $\mathcal{T}_1$ with parameter $\mu_1$, the factor of $\lambda_1$ in definition (\ref{H}) of $H(s_1,s_2)$ reads 
\begin{multline*}
\E \left [ e^{-s_1U_1-s_2U_2}\ind_{\{0 \leq U_1 < U_2\}}e^{-s_1\mathcal{T}_1}\ind_{\{\mathcal{T}_1 > U_2-U_1\}} \right ] = \\
\int_0^{+\infty} \int_0^{+\infty} \left [ \int_{u_2-u_1}^{+\infty} e^{-s_1x_1} \mu_1 e^{-\mu_1x_1} \mathrm{d}x_1 \right ]e^{-s_1u_1-s_2u_2}\ind_{\{0 \leq u_1 < u_2\}}\mathrm{d}\Phi(u_1,u_2),
\end{multline*}
for $\Re(s_1)\geq $ and $\Re(s_2)\geq 0$. By definition (\ref{dPhi}), the latter term is equal to 
\begin{multline*}
 \int_0^{+\infty} \int_0^{+\infty} \frac{\mu_1}{\mu_1+s_1}e^{\mu_1u_1-(s_1+s_2+\mu_1)u_2}\ind_{\{0 \leq u_1 < u_2\}} \Big [ \psi_2(u_2)\ind_{\{u_2 > 0\}}\mathrm{d}u_2 \otimes \delta_0(u_1) \; +
 \\
 \varphi_1(u_1,u_2)  \mathrm{d}u_1\mathrm{d}u_2 \Big ]  \\ 
= \frac{\mu_1}{\mu_1+s_1} \big [ G_2(s_1+s_2+\mu_1) + F_1(-\mu_1,s_1+s_2+\mu_1) \big ]
\end{multline*}
where, by Corollary \ref{extensions}, each term inside brackets is analytically defined for $(s_1,s_2)$ such that $\Re(s_1+s_2 + \mu_1) > \widetilde{s}_2$ and $\Re(s_1+s_2+\mu_1) > \max(\widetilde{s}_2,\widetilde{s}_2+\mu_1) = \widetilde{s}_2 + \mu_1$, respectively, that is at least for $\Re(s_1 + s_2) > \widetilde{s}_2$. Similarly, for an exponentially distributed service time $\mathcal{T}_2$ with parameter $\mu_2$, the factor of $\lambda_2$ in definition (\ref{H}) of $H(s_1,s_2)$ reads  
\begin{multline*}
- \E \left [ e^{-s_1U_1-s_2U_2}\ind_{\{0 \leq U_2 < U_1\}}e^{-s_2\mathcal{T}_2}\ind_{\{\mathcal{T}_2 > U_1-U_2\}} \right ] =  \\ - \frac{\mu_2}{\mu_2+s_2}\left [ G_1(s_1+s_2+\mu_2) + F_2(s_1+s_2+\mu_2,-\mu_2)\right ]
\end{multline*}
where, by Corollary \ref{extensions}, each term inside brackets is analytically defined for $(s_1,s_2)$ such that $\Re(s_1+s_2 + \mu_2) > \widetilde{s}_1$ and $\Re(s_1+s_2+\mu_2) > \max(\widetilde{s}_1,\widetilde{s}_1+\mu_2) = \widetilde{s}_1 + \mu_2$, respectively, hence for $\Re(s_1 + s_2) > \widetilde{s}_1$. Adding up the two above expressions, we obtain claimed expressions. 

%%%%%%%%%%%%%%%%%%%%%%%%%%%%%%%%%%%%%%%%%%%%%%

\section{Proof of Proposition \ref{extendM}}
\label{A4}

%%%%%%%%%%%%%%%%%%%%%%%%%%%%%%%%%%%%%%%%%%%%%%

With $s = z - \alpha(z)$ and $\xi^-(s) = 2z - s = z + \alpha(z)$, second equation (\ref{MSymequ}) reads
\begin{align}
M(z) = \frac{z + \alpha(z)+\mu}{\mu} \Big[ & \frac{2(z-\alpha(z)+\mu)}{\lambda}G(z-\alpha(z)) 
\nonumber \\
& + \; (1-\varrho)\frac{(z+\alpha(z)+\mu)}{2\alpha(z)} \Big].
\label{MG}
\end{align}
We successively make the following points:
\begin{itemize}
\item By Lemma \ref{discriminant}, function $z \mapsto \alpha(z)$ is analytic on the cut plane $\mathbb{C} \setminus [\eta_2,\eta_1]$, where ramification points $\eta_2$, $\eta_1$ are determined as the real negative roots of discriminant $\Delta(z)$. As $\eta_2 =-\mu < \eta_1 < 0$, function $z \mapsto \alpha(z)$ is, in particular, analytic in the half-plane $\{z \in \mathbb{C} \mid \; \Re(z) > \eta_1\}$;
\item By definition (\ref{cubiceq}), we may have  $\alpha(z) = 0$ only if $z(z+\mu)(z+\mu-\lambda)= 0$, that is, $z = 0$ or $z = -\mu$ or $z=\sigma_0=-\mu(1-\varrho)$; in the case $z=0$, we have
$$
\alpha(0) = \frac{\lambda-\sqrt{\lambda^2+4\mu^2}}{2}< \beta(0)=0 < \gamma(0) =  \frac{\lambda+\sqrt{\lambda^2+4\mu^2}}{2}
$$
and in the case $z=\sigma_0$,
$$
\alpha(\sigma_0) = \frac{-\mu-\sqrt{\mu^2+4\lambda^2}}{2}< \beta(\sigma_0)=0 < \gamma(\sigma_0) = \frac{-\mu+\sqrt{\mu^2+4\lambda^2}}{2};
$$
we conclude that we cannot have $\alpha(z)=0$ if $\Re(z)>\eta_1$;
\item By Corollary \ref{extensions}, transform $G$ is analytic on $\widetilde{\omega} = \{s \in \mathbb{C} \mid \; \Re(s) > \widetilde{s} \}$ where $\widetilde{s} = \sigma_0 = -\mu(1-\varrho)$ if $\varrho > 1/2$ and $\widetilde{s} = \zeta^+$ if $\varrho < 1/2$.
\end{itemize}
From expression (\ref{MG}) and the latter observations, we deduce that $M$ is analytic at any point $z$ with $\Re(z) > \eta_1$ and 
\begin{equation}
\Re(A(z)) > \widetilde{s}
\label{conditionAA}
\end{equation}
where $A(z) = z - \alpha(z)$. 
 
\begin{figure}[b]%[hbtp]
\centering
\scalebox{.40}{\includegraphics{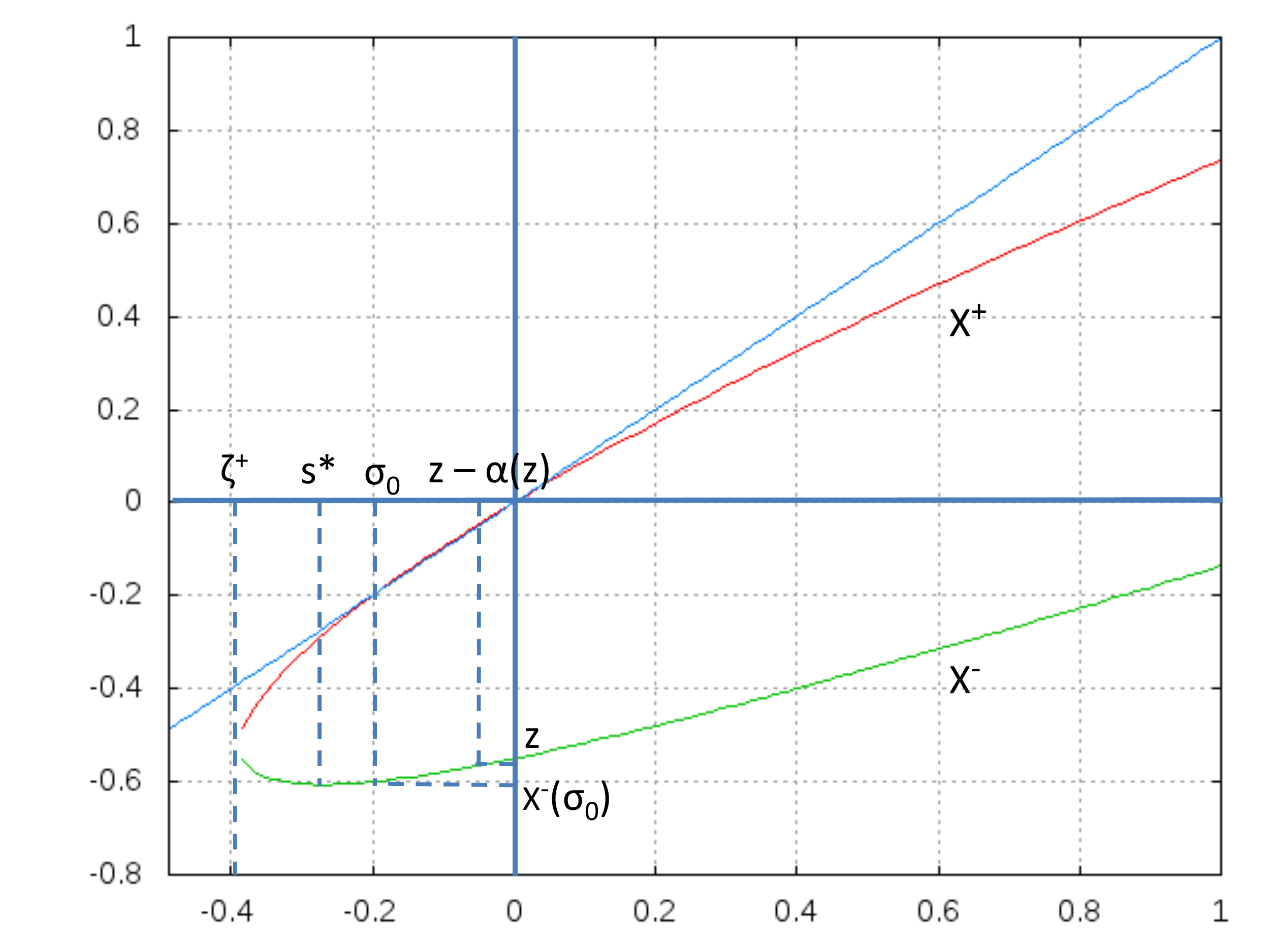}}
\caption{Case $\varrho>1/2$ ($\lambda=1.8$, $\mu=2$; $\varrho=0.9$).\label{courbexi1}}
\end{figure}

\textbf{a)} Assume first that $\varrho>1/2$. In the $(O,z,s)$ plane, the diagonal $z=s$ intersects the curve $z = X^+(s) = (s+\xi^+(s))/2$ at $s=\sigma_0$ (see Fig.~\ref{courbexi1}). Further, we easily verify that $A(z) = z-\alpha(z)>\sigma_0$ for 
$$
z > \frac{\sigma_0+\xi^-(\sigma_0)}{2} = \frac{1}{2} \left( \sigma_0-\frac{\mu}{2} \right)
$$
and condition (\ref{conditionAA}) is therefore fulfilled in this first case. We then conclude that function $M$ is analytic for $z > \frac{1}{2}(\sigma_0 - \frac{\mu}{2})$, and thus for $\Re(z) > \frac{1}{2}(\sigma_0 - \frac{\mu}{2})$ (recall by definition (\ref{Msym}) that $M$ is the sum of two non-negative Laplace transforms).

\textbf{b)} Assume now that $\varrho \leq 1/2$ (see Fig.~\ref{courbexi2}). We have shown above that we cannot have $\sigma_0=\sigma_0-\alpha(\sigma_0)$, which would otherwise imply $\alpha(\sigma_0)=0$. We thus necessarily have $\sigma_0 < s^*$, which entails that $A(z) = z-\alpha(z) > \sigma_0$ for $z>\eta_1$ and condition (\ref{conditionAA}) is therefore fulfilled in this second case. We then conclude that  function $M$ is analytic for $z>\eta_1$, hence for $\Re(z) > \eta_1$.

\begin{figure}[hbtp]
\centering
\scalebox{.40}{\includegraphics{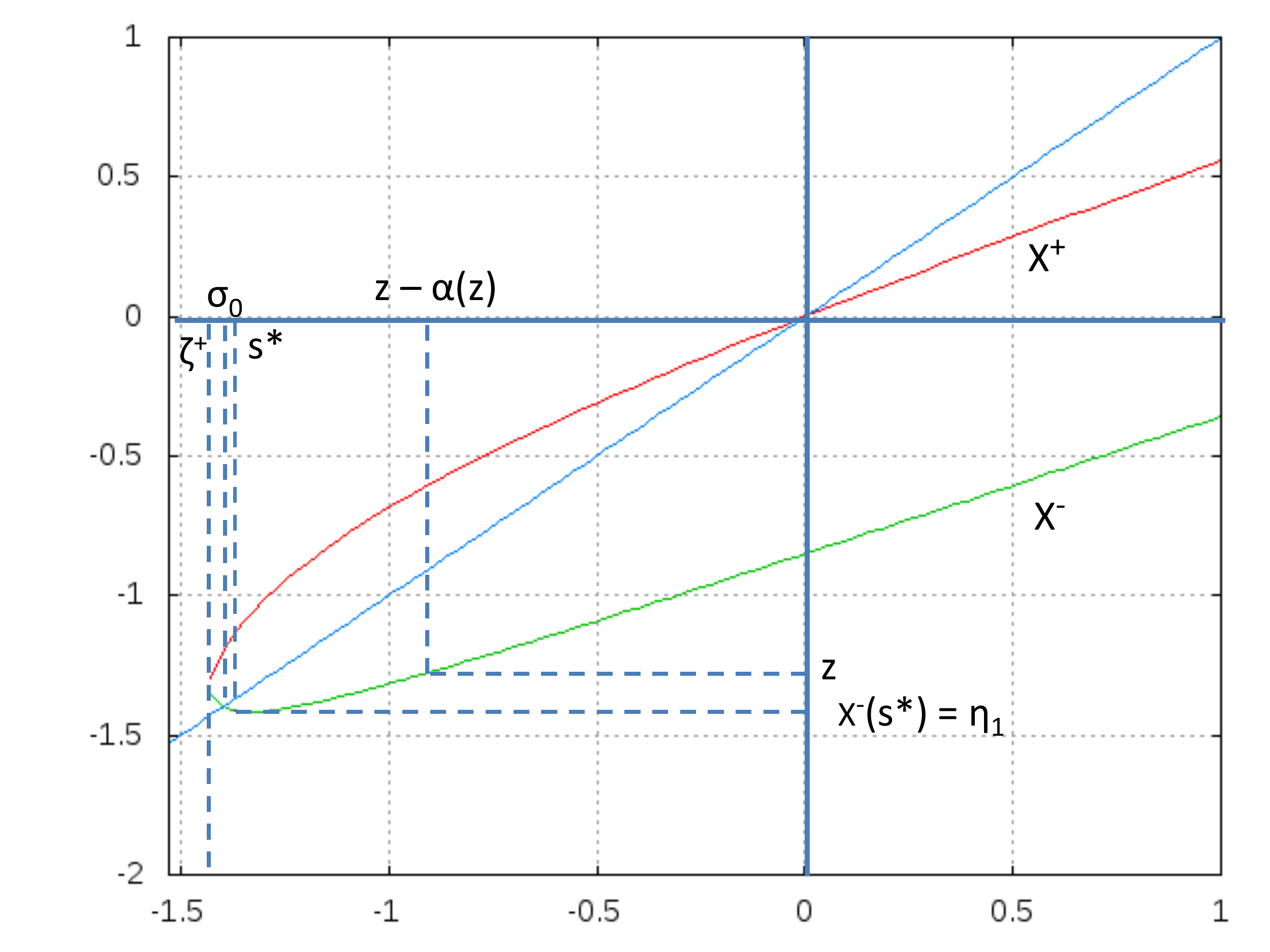}}
\caption{Case $\varrho<1/2$  ($\lambda=0.6$, $\mu=2$; $\varrho=0.3$).\label{courbexi2}}
\end{figure}

%%%%%%%%%%%%%%%%%%%%%%%%%%%%%%%%%%%%%%%%%%%%

\end{document}